\theoremstyle{plain}
\newtheorem{theorem}{Theorem}[section]
\newtheorem{lemma}[theorem]{Lemma}
\newtheorem{corollary}[theorem]{Corollary}
\newtheorem{fact}[theorem]{Fact}
\theoremstyle{definition}
\newtheorem{definition}{Definition}[section]
\theoremstyle{remark}
\newcommand{\ProblemName}[1]{\textsc{#1}}
\newcommand{\kzC}{\ProblemName{$(k, z)$-Clustering}\xspace}
\newcommand{\kMedian}{\ProblemName{$k$-Median}\xspace}
\newcommand{\kMeans}{\ProblemName{$k$-Means}\xspace}
\newcommand{\kMeanspp}{\ProblemName{$k$-Means++}\xspace}
\newcommand{\kCenter}{\ProblemName{$k$-Center}\xspace}
\newcommand{\RR}{{\mathbb R}}
\DeclareMathOperator{\sdim}{sdim}
\DeclareMathOperator{\dist}{dist}
\DeclareMathOperator{\cost}{cost}
\DeclareMathOperator{\poly}{poly}
\DeclareMathOperator{\proj}{proj}
\providecommand{\set}[1]{{\{#1\}}}
\title{Coresets for Clustering with Missing Values}
\author{Vladimir Braverman \\
  Johns Hopkins University \\
  \texttt{vova@cs.jhu.edu} \\
  \And
  Shaofeng H.-C. Jiang \\
  Peking University \\
  \texttt{shaofeng.jiang@pku.edu.cn} \\
  \And
  Robert Krauthgamer \\
  Weizmann Institute of Science \\
  \texttt{robert.krauthgamer@weizmann.ac.il} \\
  \And
  Xuan Wu \\
  Johns Hopkins University \\
  \texttt{wu3412790@gmail.com}
}
\begin{document}

\maketitle
\begin{abstract}
We provide the first coreset for clustering points in $\mathbb{R}^d$ that have multiple missing values (coordinates).
Previous coreset constructions only allow one missing coordinate.
The challenge in this setting is that objective functions, like \kMeans,
are evaluated only on the set of available (non-missing) coordinates,
which varies across points.
Recall that an $\epsilon$-coreset of a large dataset
is a small proxy, usually a reweighted subset of points,
that $(1+\epsilon)$-approximates the clustering objective
for every possible center set.

Our coresets for \kMeans and \kMedian clustering have size
$(jk)^{O(\min(j,k))} (\epsilon^{-1} d \log n)^2$,
where $n$ is the number of data points, $d$ is the dimension and $j$ is the maximum number of missing coordinates for each data point.
We further design an algorithm to construct these coresets in near-linear time,
and consequently improve a recent quadratic-time PTAS
for \kMeans with missing values [Eiben et al., SODA 2021]
to near-linear time.

We validate our coreset construction,
which is based on importance sampling and is easy to implement,
on various real data sets.
Our coreset exhibits a flexible tradeoff between coreset size and accuracy,
and generally outperforms the uniform-sampling baseline.
Furthermore, it significantly speeds up a Lloyd's-style heuristic for \kMeans with missing values.
\end{abstract}

 \section{Introduction}
\label{sec:intro}
We consider coresets and approximation algorithms for $k$-clustering problems, particularly \kMeans \footnote{In the usual \kMeans problem (without missing coordinates),
the input is a data set $X \subset \mathbb{R}^d$ and the goal is to find a center set $C \subset \mathbb{R}^d, |C| = k$ that minimizes the sum of squared distances from every $x \in X$ to $C$.}
and more generally \kzC (see Definition~\ref{def:kzc}),
for points in $\mathbb{R}^d$ with \emph{missing values (coordinates)}.
The presence of missing values in data sets is a common phenomenon,
and dealing with it is a fundamental challenge in data science.
While data imputation is a very popular method for handling missing values,
it often requires prior knowledge which might not be available,
or statistical assumptions on the missing values that might be difficult to verify~\cite{allison2001missing,little2019statistical}.
In contrast, our worst-case approach does not requires any prior knowledge.
Specifically, in our context of clustering, the distance $\dist(x, c)$
between a clustering center point $c$ and a data point $x$ is evaluated only
on the available (i.e., non-missing) coordinates.
Similar models that aim to minimize clustering costs using only the available
coordinates have been proposed in previous work~\cite{hathaway2001fuzzy,wagstaff2004clustering,CCB16,8693952},
and some other relevant works were discussed in a survey~\cite{DBLP:conf/icdim/HimmelspachC10}.

Clustering under this distance function,
which is evaluated only on the available coordinates,
is a formidable computational challenge,
because distances do not satisfy the triangle inequality,
and therefore many classical and effective clustering algorithms,
such as \kMeanspp~\cite{DBLP:conf/soda/ArthurV07},
cannot be readily applied or even be defined properly.
Despite the algorithmic interest in clustering with missing values,
the problem is still not well understood and only a few results are known.
In a pioneering work, Gao, Langberg and Schulman~\cite{DBLP:journals/dcg/GaoLS08}
initiated the algorithmic study of
the \kCenter problem with missing values.
They took a geometric perspective and interpreted the \kCenter with missing values problem as an affine-subspace clustering problem,
and followup work~\cite{DBLP:journals/talg/GaoLS10,DBLP:conf/soda/LeeS13}
has subsequently improved and generalized their algorithm.
Only very recently, approximation algorithms for objectives other than \kCenter, particularly \kMeans, were obtained for the limited case of at most one missing coordinate in each input point~\cite{DBLP:conf/nips/MaromF19}
or for constant number of missing coordinates~\cite{DBLP:conf/soda/EibenFGLPS21}.

We focus on designing coresets for clustering with missing values.
Roughly speaking, an $\epsilon$-coreset is a small proxy of the data set,
such that the clustering objective is preserved within $(1\pm \epsilon)$ factor for all center sets (see Definition~\ref{def:coreset} for formal definition).
Efficient constructions of small $\epsilon$-coresets usually lead to
efficient approximations schemes, since the input size is reduced to that of the coreset, see e.g.~\cite{huang2018epsilon,DBLP:journals/siamcomp/FriggstadRS19,DBLP:conf/nips/MaromF19}.
Moreover, apart from speeding up approximation algorithms in the classical setting (offline computation),
coresets can also be applied to design streaming~\cite{DBLP:conf/stoc/Har-PeledM04,DBLP:conf/stoc/FrahlingS05,DBLP:conf/icml/BravermanFLSY17},
distributed~\cite{DBLP:conf/nips/BalcanEL13,DBLP:conf/uai/ReddiPS15,DBLP:conf/kdd/BachemL018},
and dynamic algorithms~\cite{DBLP:journals/dcg/Chan09,DBLP:conf/esa/HenzingerK20},
which are effective methods/models for dealing with big data,
and recently coresets were used even in neural networks~\cite{DBLP:conf/iclr/MussayOBZF20}.

\subsection{Our Results}
\label{sec:result}

\paragraph{Coresets.}
Our main result, stated in Theorem~\ref{thm:main_informal},
is a near-linear time construction of coresets for \kMeans with missing values.
Here, an $\epsilon$-coreset for \kMeans
for a data set $X$ in $\mathbb{R}^d$ with missing coordinates
is a weighted subset $S\subseteq X$ with weights $w : S \to \mathbb{R}_+$,
such that
\begin{align*}
    \forall C \subset \mathbb{R}^d, |C| = k, \qquad
    \sum_{x \in S}{w(x) \cdot \dist^2(x, C)} \in (1 \pm \epsilon)
    \sum_{x \in X}{\dist^2(x, C)} ,
\end{align*}
where $\dist(x, c) := \sqrt{ \sum_{i: \text{$x_i$ not missing}}{ (x_i - c_i)^2 } }$,
and $\dist(x, C) := \min_{c \in C}{\dist(x, c)}$;
note that the center set $C$ does not contain missing values.
More generally, our coreset also works for \kzC, which includes \kMedian (see Definition~\ref{def:kzc} and Definition~\ref{def:coreset}).
Throughout, we use $\tilde{O}(f)$ to denote $O(f \poly \log f)$.

\begin{theorem}[Informal version of Theorem~\ref{thm:main_full}]
    \label{thm:main_informal}
    There is an algorithm that,
    given $0 < \epsilon < 1/2$, integers $d, j, k \geq 1$,
    and a set $X\subset \mathbb{R}^d$ of $n$ points each having at most $j$ missing values,
    it constructs with constant probability
    an $\epsilon$-coreset for \kMeans on $X$ of size
    $m=(jk)^{O(\min\{j, k\})}\cdot (\epsilon^{-1}d\log n)^2$,
    and runs in time
    $\tilde{O}\left( (jk)^{O(\min\{ j, k \})}\cdot  nd+m \right)$.
\end{theorem}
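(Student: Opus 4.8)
The plan is to instantiate the importance-sampling (sensitivity-sampling) framework of Feldman--Langberg and Langberg--Schulman, adapted to cope with missing values. Recall that this framework reduces building an $\epsilon$-coreset to three ingredients: (i) a sensitivity upper bound $s(x)\ge\sigma(x):=\sup_{|C|=k}\dist^z(x,C)/\cost(X,C)$ for each $x\in X$, with total sensitivity $\mathfrak{S}:=\sum_{x\in X}s(x)$; (ii) a bound $\Delta$ on the pseudo-dimension of the range space induced by the function family $\mathcal{F}:=\{\,x\mapsto\dist^z(x,C):C\subset\RR^d,\ |C|=k\,\}$; and (iii) an efficient way to compute $s(\cdot)$ and an $O(1)$-approximate (bicriteria) solution. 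Given these, drawing $m=\tilde{O}(\epsilon^{-2}\,\mathfrak{S}\,\Delta)$ independent samples from $X$ with probability proportional to $s(x)$, each reweighted by the standard factor $\mathfrak{S}/(m\,s(x))$, is an $\epsilon$-coreset with constant probability, and the running time is dominated by step (iii). So the theorem reduces to bounding $\mathfrak{S}$ and $\Delta$ and carrying out (iii) in near-linear time; the same route handles general \kzC (hence \kMedian) by tracking the exponent $z$.

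\paragraph{The main obstacle: non-metricity and the missing-pattern explosion.}
The difficulty unique to missing values is that $\dist(\cdot,\cdot)$ fails the triangle inequality, so the textbook way of bounding $\sigma(x)$ --- charge $x$ to its cluster in an approximate solution and split each cluster into geometric rings --- is not directly available; moreover a point may be missing any of $\binom{d}{\le j}$ coordinate subsets, far more than the $(jk)^{O(\min\{j,k\})}$ we are allowed. The key step I would prove is a \emph{structural reduction} to a bounded number $T=(jk)^{O(\min\{j,k\})}$ of well-behaved sub-instances: in each sub-instance the ``active'' coordinate set (the coordinates that any relevant center sees on the points of that sub-instance) is fixed, so on it the points and the relevant projections of every center set lie in a common Euclidean subspace and distances genuinely \emph{are} metric there. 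There are two regimes, matching $\min\{j,k\}$: when $j\le k$, group the points by a bounded ``signature'' of their missing pattern relative to a fixed approximate solution, so only $(jk)^{O(j)}$ cases survive; when $k\le j$, instead enumerate the $(jk)^{O(k)}$ coordinate partitions induced by the $k$ centers. On each sub-instance the classical ring bound applies and yields $\sum_{x}s(x)=O(k\log k)$ there, hence $\mathfrak{S}=T\cdot O(k\log k)=(jk)^{O(\min\{j,k\})}$. Making this reduction work --- keeping the sub-instances genuinely Euclidean while bounding their number by $(jk)^{O(\min\{j,k\})}$ rather than $d^{\Theta(j)}$ --- is, I expect, the crux of the whole proof.

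\paragraph{Pseudo-dimension and coreset size.}
Within a sub-instance the cost of a point is a pointwise minimum of $k$ quadratics on the $\le d$ active coordinates, and combining the $T$ sub-instances (whose decomposition is fixed once the approximate solution is fixed, hence independent of the query $C$) inflates the pseudo-dimension only by lower-order factors; carrying this out gives $\Delta=\tilde{O}((d\log n)^2)$, where the extra factors over the $\tilde{O}(kd)$ of plain \kMeans come from the interaction of the query with the missing-coordinate structure and from the $O(\log n)$ ring scales. Plugging $\mathfrak{S}$ and $\Delta$ into $m=\tilde{O}(\epsilon^{-2}\mathfrak{S}\Delta)$ gives coreset size $(jk)^{O(\min\{j,k\})}\cdot(\epsilon^{-1}d\log n)^2$, as claimed, the polylogarithmic slack of $\tilde{O}$ being absorbed into the hidden constant in the exponent.

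\paragraph{Near-linear running time.}
For step (iii): (1) compute a constant-factor bicriteria solution for \kMeans with missing values in $\tilde{O}(T\cdot nd)$ time by running a fast approximate \kMeans inside each of the $T$ sub-instances --- since the sub-instances partition $X$, the total work is near-linear in $n$ up to the $(jk)^{O(\min\{j,k\})}$ overhead; (2) evaluate the closed-form sensitivity upper bounds $s(x)$ from that solution in $O(nd)$ time; (3) draw the $m$ samples directly (no merge-and-reduce is needed) and output them in $\tilde{O}(n+m)$ time, which accounts for the additive $m$ in the stated bound. Everything here is a routine instantiation of the sampling framework once the structural reduction of the second paragraph is established, so that reduction is the only genuine obstacle.
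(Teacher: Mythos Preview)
Your high-level framework (sensitivity sampling, then bound total sensitivity and dimension) matches the paper, but the heart of your plan---the ``structural reduction'' to $T=(jk)^{O(\min\{j,k\})}$ genuinely Euclidean sub-instances---has a real gap, and it is precisely the gap you yourself flag as ``the crux.'' For a sub-instance to be Euclidean in the sense you need (so that the ring/triangle-inequality argument bounds $\sup_C \dist^2(x,C)/\cost(\cdot,C)$), every point in it must have the \emph{same} missing pattern; otherwise $\dist(x,c)$ and $\dist(x',c)$ live in different subspaces and no common metric exists. But there are up to $\binom{d}{\le j}=d^{\Theta(j)}$ patterns, and your two proposed fixes do not avoid this: a ``signature relative to a fixed approximate solution $C^\star$'' cannot control the supremum over \emph{all} $C$ without the triangle inequality through $C^\star$, which is exactly what you lack; and ``coordinate partitions induced by the $k$ centers'' presupposes a fixed $C$, which again cannot be fixed inside the $\sup_C$ defining sensitivity. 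Relatedly, your pseudo-dimension bound $\Delta=\tilde{O}((d\log n)^2)$ is not justified---ring scales are a sensitivity-side artifact and have nothing to do with the range-space dimension---and in fact the paper shows the (weighted) shattering dimension of $\mathbb{R}_?^d$ is simply $O(d)$.

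The paper's route is quite different from yours and sidesteps the non-metricity problem for sensitivity entirely. It invokes the Varadarajan--Xiao reduction: if one can build an $\alpha$-coreset of size $T$ for \kCenter on the (non-metric) data, then peeling off such coresets iteratively yields sensitivity upper bounds with $\sum_x\sigma_x\le \alpha^z\,T\log n$, no triangle inequality needed. The $(jk)^{O(\min\{j,k\})}$ factor then arises not from partitioning the \emph{data}, but from a small $(j,k,d)$-family $\mathcal{I}\subset 2^{[d]}$ of coordinate subsets (of size $(jk)^{O(\min\{j,k\})}\log d$) with the property that for every $j$-set $J$ and $k$-set $K$ there is $I\in\mathcal{I}$ with $K\subseteq I$ and $I\cap J=\emptyset$; running (a dynamic, random-projection-based) Gonzalez on each restricted dataset $X_{|I}$ and taking the union gives the \kCenter coreset with $T=(jk)^{O(\min\{j,k\})}$ and $\alpha=O(k\sqrt{d\log n})$. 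Thus in the paper the $(d\log n)^2$ in the final size comes from $\alpha^2\cdot\log n$ on the sensitivity side times the $O(d)$ shattering dimension, not from an inflated pseudo-dimension. If you want to salvage your plan, the missing idea is exactly this combinatorial family over \emph{coordinates} together with the \kCenter-coreset detour; a direct data-partition into few Euclidean pieces does not appear to exist.
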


Our coreset size is only a low-degree polynomial of $d, \epsilon$ and $\log n$,
and can thus deal with moderately-high dimension or large data set.
The dependence on $k$ (number of clusters) and $j$ (maximum number of missing values per point) is also a low-degree polynomial
as long as at least one of $k$ and $j$ is small.
Actually, we justify in Theorem~\ref{thm:lb} that this exponential dependence in $\min\{j, k\}$ cannot be further improved,
as long as the coreset size is in a similar parameter regime, i.e., the coreset size is of the form $f(j, k) \cdot \poly(\epsilon^{-1}d \log n)$.
\begin{theorem}
    \label{thm:lb}
    Consider the \kMeans with missing values problem in $\RR^d_{?}$ where each point can have at most $j$ missing coordinates. Assume there is an algorithm that constructs an $\epsilon$-coreset of size $f(j,k)\cdot \poly(\epsilon^{-1}d\log n)$, then $f(j,k)$ can not be as small as $2^{o(\min(j,k))}$.
\end{theorem}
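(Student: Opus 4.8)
The plan is to exhibit, for every $j$ and $k$, an instance on which any $\epsilon$-coreset must retain $2^{\Omega(\min\{j,k\})}$ distinct points, so that a size bound of the form $f(j,k)\cdot\poly(\epsilon^{-1}d\log n)$ forces $f(j,k)=2^{\Omega(\min\{j,k\})}$. The key idea is that a point with $j$ missing coordinates behaves like an affine subspace of dimension $j$, and the distance from a center to such a point is the distance to that subspace (restricted to the available coordinates). So I would build a point set whose "subspace" incarnations are mutually very different — in the sense that for each point there is a center set that is cheap for that point and expensive for all the others — which is exactly the configuration that defeats small coresets: removing any one point changes the cost for some center by more than a $(1\pm\epsilon)$ factor, hence every point (or a $(1-\epsilon)$-weighted proxy of it, which in a subset coreset must be the point itself) has to appear.

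Concretely, I would work in dimension $d=\Theta(\min\{j,k\})$ and take $N=2^{\Theta(d)}$ points as follows. Fix a set of $N$ vectors $v_1,\dots,v_N\in\{0,1\}^d$ that are pairwise far in Hamming distance (a constant-rate binary code). Point $p_t$ has its coordinate $i$ present iff $(v_t)_i=1$, with value say $1$ there, and missing otherwise; note each point has at most $d\le j$ missing coordinates. For a target $p_t$, place a single center $c$ at the all-ones-on-$v_t$, zero elsewhere vector; then $\dist(p_t,c)=0$ while $\dist(p_{s},c)^2$ equals the number of coordinates present in $p_s$ but where $c$ disagrees, which is $\Omega(d)$ for $s\ne t$ by the code distance. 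With $k$ centers one can "zero out" $k-1$ of the other points too, but since $N\gg k$ this still leaves the total cost $\sum_s \dist^2(p_s,C)$ dominated by the many far points, and the contribution of $p_t$ itself is a $1/N\le \epsilon$-ish fraction — so I actually need the reverse setup: I want a center set under which $p_t$ is the \emph{only} expensive point. That is achieved by taking $C$ to contain one center per "cluster" aligned with all the $v_s$, $s\ne t$, exploiting that $k$ can be taken $\ge N$ only when $k\le$ the code size; to cover the regime $j<k$ one instead uses $j$ as the budget and duplicates the construction $k/N$ times in orthogonal coordinate blocks. In both regimes the number of "types" we are forced to distinguish is $2^{\Theta(\min\{j,k\})}$.

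The remaining step is the standard coreset lower-bound argument: if $S$ is an $\epsilon$-coreset and some point-type $p_t$ is absent from $S$ (with total weight $0$), then evaluating the coreset guarantee on the center set $C_t$ tailored to make $p_t$ the unique costly point yields $\sum_{x\in S}w(x)\dist^2(x,C_t)$ far below $(1-\epsilon)\sum_{x\in X}\dist^2(x,C_t)$, a contradiction; one also checks that a coreset cannot "merge" two distinct types because their available-coordinate patterns differ, so a weighted point of one type cannot simulate the cost profile of another across all center sets. Hence $|S|\ge 2^{\Omega(\min\{j,k\})}$, and since the $\poly(\epsilon^{-1}d\log n)$ factor is only $\poly\log N$ here (we may take $n=N$), it cannot absorb this, giving $f(j,k)=\Omega(2^{\min\{j,k\}/O(1)})=2^{\Omega(\min\{j,k\})}$, which contradicts $f(j,k)=2^{o(\min\{j,k\})}$.

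The main obstacle I anticipate is engineering a single instance that simultaneously handles both the $j\le k$ and $j>k$ regimes with a clean "one costly point at a time" gadget: the subspace-clustering view makes it tempting but one must be careful that the available-coordinate sets are rich enough that no small weighted subset can fake the full cost function, and that the instance genuinely needs $\min\{j,k\}$ (not just $j$, nor just $k$) in the exponent — this is where the code-distance calculation and the choice of how many orthogonal blocks to use must be pinned down precisely.
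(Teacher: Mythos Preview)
Your high-level strategy is right: build an instance in which omitting any single point from the coreset lets you find a center set $C$ whose cost on the coreset is far from the true cost. The gap is in the center-set gadget. You correctly notice that to force a point $p_t$ to be in the coreset you need a $C$ for which $p_t$ is the \emph{only} expensive point; but your proposed way of achieving this is to place one center per other point (``one center per `cluster' aligned with all the $v_s$, $s\ne t$''), which needs $k\ge N-1$ centers. Since you want $N=2^{\Theta(\min\{j,k\})}$ while $k=\Theta(\min\{j,k\})$, this is impossible, and the block-duplication patch does not repair it.

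The missing idea is that $k$ centers can each zero out an \emph{exponential} family of points, not just one. Take $j=k$, $d=2j$, and let the data set be $P=\{p(I):I\subseteq[d],|I|=j\}$ where $p(I)_i=1$ for $i\in I$ and $?$ otherwise; so $|P|=\binom{2j}{j}=2^{\Theta(j)}$. To isolate $p(I)$, use $k=j$ centers $\{c^i:i\in I\}$ with $c^i$ equal to the all-ones vector except $c^i_i=0$. Any other $p(I')$ with $|I'|=j$ and $I'\ne I$ must miss some $i'\in I\setminus I'$, and then $\dist(p(I'),c^{i'})=0$ because coordinate $i'$ is missing and all present coordinates of $p(I')$ agree with $c^{i'}$. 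Meanwhile $\dist(p(I),c^i)=1$ for every $i\in I$, so $\cost(P,C)=1$ but the coreset cost is $0$ if $p(I)\notin D$. Hence any $\tfrac12$-coreset must contain all of $P$. Setting $j=k=\Theta(\log n)$ makes $|P|=n$ while $2^{o(\min(j,k))}\cdot\poly(d\log n)=o(n)$. No error-correcting code is needed, and the single instance with $j=k$ already rules out $f(j,k)=2^{o(\min(j,k))}$.
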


Furthermore, the space complexity of our construction algorithm is near-linear,
and since our coreset is clearly mergeable, it is possible to apply the merge-and-reduce method~\cite{DBLP:conf/stoc/Har-PeledM04} to convert our construction into a streaming algorithm of space $\poly \log n$.
Prior to our result,
the only known coreset construction for clustering with missing values
is for the special case $j = 1$~\cite{DBLP:conf/nips/MaromF19}\footnote{
In fact,~\cite{DBLP:conf/nips/MaromF19} considers a slightly more general setting where
the input are arbitrary lines that are not necessarily axis-parallel.}
and has size $k^{O(k)} \cdot(\epsilon^{-2}d\log n)$.
Since our coreset has size $\poly(k\epsilon^{-1}d\log n)$ when $j = 1$,
it improves the dependence on $k$ over that of~\cite{DBLP:conf/nips/MaromF19}
by a factor of $k^{O(k)}$.

\paragraph{Near-linear time PTAS for \kMeans with missing values.}
Very recently, a PTAS for \kMeans with missing values,
was obtained by Eiben, Fomin, Golovach, Lochet, Panolan, and Simonov~\cite{DBLP:conf/soda/EibenFGLPS21}.
Its time bound is \emph{quadratic}, namely $O(2^{\poly(jk/\epsilon)}\cdot n^2 d)$, 
and since our coreset can be constructed in near-linear time,
we can speedup this PTAS to \emph{near-linear} time
by first constructing our coreset and then running this PTAS on the coreset.

\begin{corollary}[Near-linear time PTAS for \kMeans with missing values]
There is an algorithm that, given $0 < \epsilon < 1/2$, integers $d, j, k \geq 1$,
and a set $X\subset \mathbb{R}^d$ of $n$ points each having at most $j$ missing values,
it finds with constant probability a $(1 + \epsilon)$-approximation for \kMeans on $X$,
and runs in time
$\tilde{O}\big( (jk)^{O(\min\{ j, k \})}\cdot nd + 2^{\poly(jk/\epsilon)}\cdot d^{O(1)} \big)$.
\end{corollary}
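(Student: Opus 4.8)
The plan is to compose our coreset construction with the quadratic-time PTAS of Eiben et al.~\cite{DBLP:conf/soda/EibenFGLPS21}. First I would invoke Theorem~\ref{thm:main_informal} with accuracy parameter $\epsilon_0 := \epsilon/10$ to obtain, with constant probability and in time $\tilde{O}\big((jk)^{O(\min\{j,k\})} nd + m\big)$, a weighted set $(S,w)$ with $|S| = m = (jk)^{O(\min\{j,k\})}(\epsilon_0^{-1} d \log n)^2$ that is an $\epsilon_0$-coreset for \kMeans on $X$. Then I would run the PTAS of~\cite{DBLP:conf/soda/EibenFGLPS21} with accuracy $\epsilon_0$ on the (small) weighted instance $(S,w)$, obtaining a center set $C\subset\mathbb{R}^d$ with $|C|=k$; note such a $C$ is automatically a feasible solution for the original instance on $X$.

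For correctness, write $\cost(X,C):=\sum_{x\in X}\dist^2(x,C)$ and $\cost_w(S,C):=\sum_{x\in S}w(x)\dist^2(x,C)$, let $C^\ast$ be an optimal \kMeans solution for $X$, and let $C^\ast_S$ be an optimal solution for the weighted instance $(S,w)$. Since $C$ is a $(1+\epsilon_0)$-approximation on $(S,w)$ and $S$ is an $\epsilon_0$-coreset (whose guarantee is two-sided), chaining the two bounds gives
\[
  \cost(X,C)\;\le\;\frac{\cost_w(S,C)}{1-\epsilon_0}\;\le\;\frac{(1+\epsilon_0)\,\cost_w(S,C^\ast_S)}{1-\epsilon_0}\;\le\;\frac{(1+\epsilon_0)\,\cost_w(S,C^\ast)}{1-\epsilon_0}\;\le\;\frac{(1+\epsilon_0)^2}{1-\epsilon_0}\,\cost(X,C^\ast),
\]
where the third inequality uses optimality of $C^\ast_S$ for $(S,w)$. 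Since $\frac{(1+\epsilon_0)^2}{1-\epsilon_0}\le 1+\epsilon$ for all $0<\epsilon<1/2$ when $\epsilon_0=\epsilon/10$, the output $C$ is a $(1+\epsilon)$-approximation, and the constant success probability is inherited from Theorem~\ref{thm:main_informal}.

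For the running time, the coreset phase costs $\tilde{O}\big((jk)^{O(\min\{j,k\})} nd + m\big)$ by Theorem~\ref{thm:main_informal}. The PTAS of~\cite{DBLP:conf/soda/EibenFGLPS21} runs in time $O\big(2^{\poly(jk/\epsilon_0)} N^2 d\big)$ on an $N$-point instance, so on $(S,w)$ it costs $O\big(2^{\poly(jk/\epsilon)} m^2 d\big)$. Substituting $m = (jk)^{O(\min\{j,k\})}(\epsilon^{-1} d \log n)^2$ and using $(jk)^{O(\min\{j,k\})}\le 2^{\poly(jk)}\le 2^{\poly(jk/\epsilon)}$, this is $2^{\poly(jk/\epsilon)}\cdot(\epsilon^{-1} d \log n)^{O(1)}$; the residual $(\log n)^{O(1)}$ factor is absorbed into the $\tilde{O}(\cdot)$ (equivalently, one may first apply the coreset construction a second time to the $m$-point instance, replacing $\log n$ by $\log m = \poly(jk)+O(\log(\epsilon^{-1}d\log n))$ and folding the dependence on $n$ into $\tilde O$ cleanly). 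Summing the two phases yields the stated bound $\tilde{O}\big((jk)^{O(\min\{j,k\})} nd + 2^{\poly(jk/\epsilon)} d^{O(1)}\big)$.

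The main point to check carefully is that the PTAS of~\cite{DBLP:conf/soda/EibenFGLPS21} applies to \emph{weighted} point sets, as produced by importance sampling: the coreset weights need not be polynomially bounded integers, so replacing a weighted point by copies of itself is not a viable reduction. If the cited algorithm is stated only for unweighted inputs, one verifies that its analysis goes through verbatim when each point carries a nonnegative multiplicity (which is the norm for clustering PTASs, since the objective is linear in the per-point weights). The remaining steps are the routine bookkeeping of the accuracy and time budgets sketched above.
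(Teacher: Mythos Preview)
Your proposal is correct and follows exactly the approach the paper indicates: the paper's own ``proof'' of this corollary is just the one-sentence sketch preceding the statement (construct the coreset via Theorem~\ref{thm:main_informal}, then run the PTAS of~\cite{DBLP:conf/soda/EibenFGLPS21} on it), and you have filled in the standard accuracy-chaining and running-time bookkeeping accurately. Your flag about the PTAS needing to accept weighted inputs is a valid caveat that the paper does not explicitly address.
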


\paragraph{Experiments.}
We implement our algorithm and validate its performance on various real and synthetic data sets in Section~\ref{sec:exp}.
Our coreset exhibits flexible tradeoffs between coreset size and accuracy,
and generally outperforms a uniform-sampling baseline and a baseline that is based on imputation,
in both error rate and stability,
especially when the coreset size is relatively small.
In particular, on each data set,
a coreset of moderate size $2000$ (which is $0.5\%$-$5\%$ of the data sets)
achieves low empirical error ($5\%$-$20\%$). 
We further demonstrate an application and use our coresets to
accelerate a Lloyd's-style heuristic adapted to the missing-values setting.
The experiments suggest that running the heuristic
on top of our coresets gives equally good solutions (error $<1\%$ relative to running on the original data set) but is much faster (speedup $>5\mathrm{x}$).

\subsection{Technical Overview}
Our coreset construction is based on the importance sampling framework
introduced by Feldman and Langberg~\cite{DBLP:conf/stoc/FeldmanL11}
and subsequently improved and generalized by~\cite{FSS20,BJKW21}.
In the framework, one first computes an importance score $\sigma_x$ for every data point $x\in X$,
and then draws independent samples with probabilities
proportional to these scores.
When no values are missing, the importance scores can be computed easily,
even for general metric spaces~\cite{DBLP:conf/fsttcs/VaradarajanX12,FSS20,BJKW21}.
However, a significant challenge with missing values
is that distances do not satisfy the triangle inequality,
hence importance scores cannot be easily computed.

We overcome this hurdle using a method introduced
by Varadarajan and Xiao~\cite{varadarajan2012near} for projective clustering
(where the triangle inequality similarly does not hold).
They reduce the importance-score computation to the construction
of a coreset for \kCenter objective;
this method is quite different from earlier approaches,
e.g.~\cite{DBLP:conf/stoc/FeldmanL11,DBLP:conf/fsttcs/VaradarajanX12,FSS20,BJKW21},
and yields a coreset for \kMeans whose size depends linearly on $\log n$
and of course on the size of the \kCenter coreset.
(Mathematically, this arises from the sum of all importance scores.)
We make use of this reduction, and thus focus on constructing (efficiently)
a small coreset for \kCenter with missing values.

An immediate difficulty is how to deal with the missing values.
We show that it is possible to find a collection of subsets of coordinates $\mathcal{I}$ (so each $I \in \mathcal{I}$ is a subset of $[d]$),
such that if we construct \kCenter coresets $S_I$ on the data set ``restricted'' to each $I \in \mathcal{I}$,
then the union of these $S_I$'s is a \kCenter coreset for the original data set with missing values.
Crucially, we ensure that each ``restricted'' data set does not contain any missing value,
so that it is possible to use a classical coreset construction for \kCenter.
Finally, we show in a technical lemma how to find a collection as necessary of size $|\mathcal{I}| \leq (jk)^{O(\min\{j, k\})}$.

Since a ``restricted'' data set does not contain any missing values,
we can use a classical \kCenter coreset construction,
and a standard construction has size $O(k\epsilon^{-d})$~\cite{DBLP:journals/algorithmica/AgarwalP02}, which is known to be tight.
We bypass this $\epsilon^{-d}$ limitation by observing that
actually $\tilde{O}(1)$-coreset for \kCenter suffices,
even though the final coreset error is $\epsilon$.
We observe that an $\tilde{O}(1)$-coreset can be constructed using a variant of
Gonzalez's algorithm~\cite{gonzalez1985clustering}.

To implement Gonzalez's algorithm,
a key step is to find the \emph{furthest} neighbor of a given subset of at most $O(k)$ points,
and a naive implementation of this runs in linear time,
which overall yields a quadratic-time coreset construction, 
because the aforementioned reduction of~\cite{varadarajan2012near}
actually requires $\Theta(n / k)$ successive runs of Gonzalez's algorithm. 
To resolve this issue, we propose a fully-dynamic implementation of Gonzalez's algorithm
so that a furthest-point query is answered in time $\poly(k\log n)$,
and the point-set is updated between successive runs instead of constructed from scratch.
Our dynamic algorithm is based on a random-projection method that was proposed
for furthest-point queries in the streaming setting~\cite{DBLP:conf/soda/Indyk03}.
Specifically, we project the (restricted) data set onto several random directions,
and on each projected (one-dimensional) data set we apply a data structure for intervals.

\subsection{Additional Related Work}

Coresets for \kMeans and \kMedian clustering have been studied extensively for two decades, and we only list a few notable results.
The first strong coresets for Euclidean \kMeans and \kMedian were given in \cite{DBLP:conf/stoc/Har-PeledM04}.
In the last decade, most work on coresets for clustering follows the importance sampling framework initiated in~\cite{langberg2010universal,DBLP:conf/stoc/FeldmanL11}.
In Euclidean space, recent work showed that coresets for \kMeans and \kMedian clustering can have size that is independent of the Euclidean dimension~\cite{FSS20,sohler2018strong,huang2020coresets}.
Beyond Euclidean space, coresets of size independent of the data-set size were constructed also for many important metric spaces~\cite{huang2018epsilon,BJKW21,cohen2021new}.
A more comprehensive overview can be found in recent surveys~\cite{phillips2017coresets,feldman2020introduction}.

Recently, attention was given also to non-traditional settings of coresets for clustering,
including 
coresets for Gaussian mixture models (GMM) \cite{lucic2017training,feldman2019coresets};
simultaneous coresets for a large family of cost functions that include both \kMedian and \kCenter \cite{braverman2019coresets};
and coresets for clustering under fairness constraints \cite{NEURIPS2019_810dfbbe}.
Also considered were settings that capture uncertainty,
for example when each point is only known to lie in a line (i.e., clustering lines) \cite{DBLP:conf/nips/MaromF19}, and when each point comes from a finite set (i.e., clustering point sets) \cite{pmlr-v119-jubran20a}.

 \section{Preliminaries}
\label{sec:prelim}
We represent a data point as a vector in $(\mathbb{R}\cup \{?\})^d$,
and a coordinate takes ``?'' if and only if it is missing.
Let $\mathbb{R}_?^d$ be a shorthand for $(\mathbb{R} \cup \{?\})^d$.
Throughout, we consider a data set $X\subset \mathbb{R}_?^d$.
The distance is evaluated only on the coordinates that are present in both $x,y$, i.e.,
$$
  \forall x, y \in \mathbb{R}_?^d,
  \qquad
  \dist(x, y) := \sqrt{\sum_{i: x_i, y_i \neq ?} {(x_i - y_i)^2}}.
$$
For $x \in \mathbb{R}_?^d$, we denote the set of coordinates that are not missing by $I_x := \{ i : x_i \neq ? \}$.
For integer $m \geq 1$, let $[m] := \{ 1, \ldots, m \}$.
For two points $p, q \in \mathbb{R}_?^d$ and an index set $I \subseteq I_p\cap I_q$,
we define the \emph{$I$-induced distance} to be $\dist_I(p, q) := \sqrt{\sum_{i\in I} (p_i-q_i)^2}$.
A point $x \in \mathbb{R}_?^d$ is called a $j$-point if it has at most $j$ missing coordinates, i.e., $|I_x| \geq d-j$.

We consider a general $k$-clustering problem called $(k, z)$-clustering,
which asks to minimize the following objective function.
This objective function (and problem) is also called
\kMedian when $z = 1$ and $\kMeans$ when $z = 2$.
\begin{definition}[\kzC]
    \label{def:kzc}
    For data set $X \subset \mathbb{R}_?^d$
    and a center set $C \subset \mathbb{R}^d$ containing $k$ (usual) points,
    let
    \begin{align*}
        \cost_z(X, C) := \sum_{x \in X}{\dist^z(x, C)}.
    \end{align*}
\end{definition}

\begin{definition}[$\epsilon$-Coreset for \kzC]
    \label{def:coreset}
    For data set $X \subset \mathbb{R}_?^d$,
    we say a weighted set $S \subseteq X$ with weight function $w : S \to \mathbb{R}_+$
    is an $\epsilon$-coreset for \kzC, if
    \begin{align*}
        \forall C \subset \mathbb{R}^d, |C| = k, \qquad
        \sum_{x \in S}{w(x) \cdot \dist^z(x, C)} \in (1 \pm \epsilon) \cdot \cost_z(X, C).
    \end{align*}
\end{definition}

\section{Coresets}

\begin{theorem}
\label{thm:main_full}
There is an algorithm that,
given as input a data set $X\subset \mathbb{R}_?^d$ of size $n=|X|$ consisting of $j$-points
and parameters $k,z\geq 1$ and $0 < \epsilon < 1/2$,
constructs with constant probability an $\epsilon$-coreset of size
$m=\tilde{O}\left(z^z \cdot \frac{(j+k)^{j+k+1}}{j^j k^{k-z-2}}\cdot \frac{(d\log n)^{\frac{z+2}{2}}}{\epsilon^2}\right)$
for \kzC of $X$, and runs in time
$\tilde{O}\left(\frac{(j+k)^{j+k+1}}{j^j k^{k-2}}\cdot nd+m\right)$.
\end{theorem}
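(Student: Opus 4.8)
The plan is to instantiate the importance-sampling (sensitivity) framework of Feldman and Langberg~\cite{DBLP:conf/stoc/FeldmanL11}, in the sharpened form of~\cite{FSS20,BJKW21}. That framework reduces the construction of an $\epsilon$-coreset for \kzC to two ingredients: (i) an \emph{importance score} $\sigma_x$ for each $x\in X$ upper bounding the sensitivity $\sup_{C}\dist^z(x,C)/\cost_z(X,C)$, and (ii) a bound on the combinatorial (pseudo-)dimension of the function family $C\mapsto \dist^z(x,C)$ over all $k$-point center sets $C\subset\RR^d$. Given these, drawing $\tilde O\bigl(\epsilon^{-2}\cdot(\sum_{x}\sigma_x)\cdot(\text{dimension term})\bigr)$ independent samples with probability proportional to $\sigma_x$, reweighted inversely to $\sigma_x$ up to normalization, yields an $\epsilon$-coreset with constant probability. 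Since each center is a point in $\RR^d$ and $\dist^z(x,\cdot)$ is a bounded-degree semialgebraic function, ingredient (ii) contributes only the $\poly(d\log n)$ factor (the precise $(z+2)/2$ exponent and the $z^z$ prefactor track the power $z$). So the task reduces to producing, in near-linear time, importance scores whose total $\Gamma:=\sum_x\sigma_x$ is small.

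Because the missing-value distance violates the triangle inequality, the usual recipes for $\sigma_x$ fail, and I would instead follow Varadarajan and Xiao~\cite{varadarajan2012near}: importance scores can be read off from a \kCenter coreset. Concretely, one runs an approximate \kCenter routine, assigns to each point a preliminary score inversely proportional to the size of the cluster it lands in, removes a constant fraction of the ``well covered'' points, and recurses on the remainder for $O(\log n)$ rounds; summing the per-round scores gives valid importance scores with $\Gamma=\tilde O(\Gamma_{\mathrm{KC}}\cdot\log n)$, where $\Gamma_{\mathrm{KC}}$ is the size of a \kCenter coreset we can afford to build. A key saving here is that for this reduction an $\tilde O(1)$-\emph{approximate} \kCenter coreset already suffices --- the final accuracy $\epsilon$ enters only through the sampling step, not through $\Gamma_{\mathrm{KC}}$ --- so we may avoid the tight $\Theta(k\epsilon^{-d})$ bound of~\cite{DBLP:journals/algorithmica/AgarwalP02} and use instead the roughly $\poly(k)$ points output by a variant of Gonzalez's algorithm~\cite{gonzalez1985clustering}.

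To handle missing coordinates I would prove the combinatorial lemma announced in the overview: there is a collection $\mathcal{I}$ of subsets of $[d]$ with $|\mathcal{I}|\le (jk)^{O(\min\{j,k\})}$ such that, writing $X|_I$ for the dataset with coordinates restricted to $I$ (where $\mathcal I$ is chosen so that $X|_I$ contains no missing value), the union $\bigcup_{I\in\mathcal{I}} S_I$ of \kCenter coresets $S_I$ of the instances $X|_I$ is a \kCenter coreset of $X$ under the missing-value distance. The intuition is that any assignment of $X$ to $k$ centers induces at most $k$ ``relevant'' coordinate patterns (intersections of available coordinates within a cluster); when $j$ is small each pattern loses at most $j$ coordinates, and when $k$ is small there are few clusters to begin with, so in either regime one can pre-enumerate a super-collection of the stated size. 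Plugging $\Gamma_{\mathrm{KC}}=|\mathcal{I}|\cdot\poly(k)=(jk)^{O(\min\{j,k\})}\cdot\poly(k)$ into the sample-size bound of the first paragraph yields the claimed coreset size $m$.

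It remains to achieve near-linear time. The bottleneck is that the Varadarajan--Xiao reduction invokes Gonzalez $\Theta(n/k)$ times, and each run issues $O(k)$ \emph{furthest-point} queries against a set of $O(k)$ points, so a naive $O(nd)$-time query would blow up to $\Omega(n^2 d)$. I would instead maintain, for each restricted instance $X|_I$, a fully-dynamic data structure answering an approximate furthest-point query and supporting point insertions/deletions in $\poly(k\log n)$ time, and reuse its state across the successive Gonzalez runs rather than rebuilding. Following Indyk~\cite{DBLP:conf/soda/Indyk03}, project $X|_I$ onto $O(\log n)$ random directions and keep, on each one-dimensional projection, a balanced search tree over the (dynamically changing) projected coordinates; to answer a furthest-point query for a set $Q$ of $O(k)$ points, on each direction return the projected point maximizing distance to $Q$'s projections and take the best over directions. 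Gaussian concentration guarantees this is a constant-factor approximation of the true furthest distance, which is all that the $\tilde O(1)$-coreset needs. Aggregating over the $|\mathcal{I}|$ restricted instances gives total time $\tilde O\bigl((jk)^{O(\min\{j,k\})}nd+m\bigr)$, as stated. I expect the main obstacle to be the combinatorial lemma --- simultaneously guaranteeing $|\mathcal{I}|\le(jk)^{O(\min\{j,k\})}$, that every $X|_I$ is free of missing values, and that $\bigcup_I S_I$ genuinely preserves the \kCenter objective of $X$ --- which is where the $\min\{j,k\}$ dichotomy and the delicate case analysis reside; a secondary obstacle is certifying correctness of the dynamic approximate-furthest-point structure under arbitrary deletions.
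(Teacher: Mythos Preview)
Your plan matches the paper's proof almost step for step: importance sampling via \cite{FSS20}, sensitivity scores via the Varadarajan--Xiao reduction to \kCenter coresets, a combinatorial family $\mathcal I$ of coordinate subsets to eliminate missing values, Gonzalez on each restricted instance, and a dynamic furthest-point structure built from Indyk-style random projections onto balanced one-dimensional trees. Two technical points deserve correction before you flesh this out.

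First, your description of the family $\mathcal I$ is slightly off. You write that $\mathcal I$ is ``chosen so that $X|_I$ contains no missing value,'' but no single $I$ can avoid the missing coordinates of \emph{every} $j$-point simultaneously. In the paper, $X_{|I}$ is defined to \emph{drop} any point $p$ with $I\not\subseteq I_p$; the role of the $(j,k,d)$-family is that for every $j$-point $x$ and every set $K$ of $k$ ``witness'' coordinates inside $I_x$, \emph{some} $I\in\mathcal I$ satisfies $K\subseteq I\subseteq I_x$, so that $x$ participates in $X_{|I}$ and its distance to any $k$ centers is captured (up to a $\sqrt d$ loss) by the restricted instance. The existence proof is probabilistic (include each coordinate in $I$ independently with probability $k/(j+k)$), not the clustering-pattern enumeration you sketch.

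Second, the random-projection furthest-point oracle is not a constant-factor approximation: the paper shows it is $O(k\sqrt{\log n})$-approximate, and the family reduction costs another $\sqrt d$, so the \kCenter coreset has $\alpha=O(k\sqrt{d\log n})$. Since the Varadarajan--Xiao scores carry a factor $\alpha^z$, this is exactly where the $(d\log n)^{(z+2)/2}$ and the extra $k^{z}$ (visible as $k^{k-z-2}$ in the denominator) in the stated size $m$ come from; if you assume $\alpha=O(1)$ you will not recover the theorem's bound. Also, your one-line summary of the reduction (``score inversely proportional to cluster size, remove a constant fraction'') is not how \cite{varadarajan2012near} works here: each round computes a \kCenter coreset $P$ of the remaining points, assigns every $x\in P$ the score $O(\alpha^z/i)$ where $i$ is the round index, and deletes exactly $P$; this is what makes the later ``$\Theta(n/k)$ Gonzalez runs'' count correct.
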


We remark that $\frac{(j+k)^{j+k}}{j^j k^k}=(jk)^{O(\min(j,k))}$. To see this, assume $j\geq k$ w.l.o.g., so $\frac{(j+k)^{j}}{j^j}=(1+\frac{k}{j})^j\leq e^{\frac{k}{j}\cdot j}= e^k$ and $\frac{(j+k)^k}{k^k}\leq (j+k)^k$.

Theorem~\ref{thm:main_full} is the main theorem of this paper, and we present the proof in this section.
As mentioned in Section~\ref{sec:intro}, the coreset is constructed via importance sampling, by following three major steps.
\begin{enumerate}
    \item For each data point $x \in X$, compute an importance score $\sigma_x \geq 0$.
    \item Draw $N$ (to be determined later) independent samples from $X$,
    such that $x \in X$ is sampled with probability $p_x \propto \sigma_x$.
    \item Denote the sampled (multi)set as $S$, and for each $x \in S$ define its weight $w(x) := \frac{1}{p_x N}$.
    Report the weighted set $S$ as the coreset.
\end{enumerate}
The importance score $\sigma_x$ is usually defined as (an approximation) of the
\emph{sensitivity} of $x$, denoted
\begin{equation}
    \label{eqn:sensitivity_full}
    \sigma^\star_x := \sup_{C \subset \mathbb{R}^d, |C| = k}
    \frac{\dist^z(x, C)}{\cost_z(X, C)},
\end{equation}
which measures the maximum possible relative contribution of $x$ to the
objective function.

Usually, there are two main challenges with this approach.
First, the sensitivity \eqref{eqn:sensitivity_full} is not efficiently computable
because it requires to optimize over all $k$-subsets $C \subset \mathbb{R}^d$.
Second, one has to determine the number of samples $N$ (essentially the coreset size)
based on a probabilistic analysis of the event that $S$ is a coreset.
Prior work on coresets has studied these issues extensively
and developed a general framework,
and we shall use the variant stated in Theorem~\ref{thm:framework_full} below.
This framework only needs an approximation to the sensitivities $\set{\sigma^\star_x}_{x\in X}$,
more precisely it requires overestimates $\sigma_x \geq \sigma^\star_x$
whose sum $\sum_{x \in X}{\sigma_x}$ is bounded.
Moreover, it relates the number of samples $N$ to
a quantity called the \emph{weighted shattering dimension} $\sdim_{\max}$,
which roughly speaking measures the complexity of a space (set of points)
by the number of distinct ways that metric balls can intersect it.
The definition below has an extra complication of a point weight $v$,
which originates from the weight in the importance sampling procedure,
and thus we need a uniform upper bound, denoted $\sdim_{\max}$, over all possible weights.\footnote{In principle, this uniform upper bound is not necessary, and an upper bound for weights corresponding to the importance score suffices,
  but a uniform upper bound turns out to be technically easier to deal with.
}

\begin{definition}[Shattering dimension]
\label{def:sdim_full}
Given a \emph{weight} function $v : \mathbb{R}_?^d \to \mathbb{R}_+$,
let $\sdim_v(\mathbb{R}_?^d)$ be the smallest integer $t$ such that
\begin{align*}
  \forall H \subset \mathbb{R}_?^d , |H| \geq 2
  \qquad \left| \left\{ B^H_v(c, r) : c \in \mathbb{R}^d, r \geq 0 \right\} \right| \leq |H|^t,
\end{align*}
where $B^H_v(c, r) := \{ x \in H : v(x) \cdot \dist(x, c) \leq r \}$.
Let $\sdim_{\max}(\mathbb{R}_?^d) := \sup_{v : \mathbb{R}_?^d \to \mathbb{R}_+} {\sdim_v(\mathbb{R}_?^d)}$.
\end{definition}

Strictly speaking,
Theorem~\ref{thm:framework_full} has been proposed and proved only for metric spaces,
but the proof is applicable also in our setting
(where $\dist$ need not satisfy the triangle inequality),
because it only concerns the \emph{binary} relation between data points and center points (without an indirect use of a third point, e.g., by triangle inequality.)

\begin{theorem}[{\cite{FSS20}}\footnote{Our theorem statement is based on~\cite[Theorem 31]{FSS20},
  adapted to our context.
  One difference is that their theorem is about VC-dimension,
  but it is also applicable for shattering dimension.
  Another difference is that we use a more direct terminology that is specialized to metric balls in $\mathbb{R}_?^d$ instead of a general range space.
}]
\label{thm:framework_full}
Let $X \subset \mathbb{R}_?^d$ be a data set,
and let $k,z\ge 1$.
Consider the importance sampling procedure with importance scores that satisfy
$\sigma_x \geq \sigma^\star_x$ for all $x \in X$,
and with a sufficiently large number of samples
\begin{align*}
  N = \tilde{O}\left(
  \epsilon^{-2} k z^z \sdim_{\max}(\mathbb{R}_?^d) \sum_{x \in X}{\sigma_x}
  \right) .
\end{align*}
Then with constant probability it reports an $\epsilon$-coreset for \kzC.
\end{theorem}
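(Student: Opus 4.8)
The statement is an instantiation of the sensitivity-sampling framework of Feldman--Langberg~\cite{DBLP:conf/stoc/FeldmanL11} as refined in~\cite{FSS20}, so my plan is to reproduce that argument and verify that no step uses the triangle inequality. First I would set up the unbiased estimator together with the correct normalization. Fix a center set $C$ with $|C|=k$, write $f_C(x):=\dist^z(x,C)$ so that the quantity to be preserved is $\cost_z(X,C)=\sum_{x\in X}f_C(x)$, and put $T:=\sum_{x\in X}\sigma_x$ and $p_x:=\sigma_x/T$. Since $\sigma_x\ge\sigma^\star_x=\sup_C f_C(x)/\cost_z(X,C)$ by \eqref{eqn:sensitivity_full}, the rescaled function $g_C(x):=f_C(x)/\sigma_x$ satisfies $0\le g_C(x)\le \cost_z(X,C)$ for every $x$ and every $C$. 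A sampled point $x$ then contributes $w(x)f_C(x)=\tfrac{1}{Np_x}f_C(x)=\tfrac{T}{N}\,g_C(x)$, and $\E_{x\sim p}[g_C(x)]=\cost_z(X,C)/T$, so the coreset estimate $\sum_{x\in S}w(x)f_C(x)$ is an unbiased estimator of $\cost_z(X,C)$; it remains to prove the $(1\pm\epsilon)$ bound \emph{simultaneously} over all admissible $C$.

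Next I would bound the complexity of the range space underlying the family $\{g_C\}$. For a single center $c$ and any weight $v$, the set $\{x\in H: v(x)\dist(x,c)\le r\}$ is exactly the weighted ball $B^H_v(c,r)$ of Definition~\ref{def:sdim_full}; the relevant weight here is $v(x)=\sigma_x^{-1/z}$ (combined with the sampling weights), since $\{x:g_C(x)\le r\}=\{x:\dist(x,C)\le (r\sigma_x)^{1/z}\text{ for the nearest } c\in C\}$, which is why taking the supremum over all weights $v$ in $\sdim_{\max}(\mathbb{R}_?^d)$ is exactly the right notion. Hence on any finite $H$ the number of such single-center ranges is at most $|H|^{\sdim_{\max}(\mathbb{R}_?^d)}$; a set $\{x:\dist(x,C)\le r\}$ is an intersection of $k$ of them, which multiplies the exponent by $O(k)$, and passing from $\dist$ to $\dist^z$ only reparametrizes $r$ monotonically. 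Combining this dimension bound with the standard ``heavy/light'' decomposition of $f_C$ along geometrically spaced thresholds --- the step responsible for the $z^z$ factor in the sample bound, coming from comparing $z$-th powers across scales --- reduces the deviation $\big|\sum_{x\in S}w(x)f_C(x)-\cost_z(X,C)\big|$ to a uniform deviation over these ranges, of the type controlled by standard relative-error $\epsilon$-approximation theorems for range spaces of bounded dimension.

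Finally I would plug the dimension bound $O(kz^z\,\sdim_{\max}(\mathbb{R}_?^d))$ and the per-function bound $g_C\le\cost_z(X,C)$ (with mean $\cost_z(X,C)/T$) into such a uniform-convergence theorem to conclude that $N=\tilde{O}\big(\epsilon^{-2}kz^z\,\sdim_{\max}(\mathbb{R}_?^d)\cdot\sum_{x\in X}\sigma_x\big)$ i.i.d.\ samples suffice for constant failure probability, matching the claim. I expect the only genuine subtlety --- and the main obstacle --- to be the unbounded dynamic range of $f_C$ across data points: a point may lie essentially on top of a center, so a naive Chernoff bound over $f_C$ itself breaks, and the entire purpose of dividing by $\sigma_x$ is that it caps every normalized function $g_C$ by $\cost_z(X,C)$ and ties $N$ to the total sensitivity $T$ rather than to $n$; getting the variance and the union over the (infinite) family of center sets to interact correctly is where the work is. As a last check I would make explicit that every inequality above is a \emph{binary} relation between a data point $x$ and a center $c$ (membership in a weighted ball), so the triangle inequality is never invoked and the argument is valid in $\mathbb{R}_?^d$, exactly as the theorem's footnote asserts.
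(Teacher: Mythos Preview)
The paper does not prove this theorem at all: it is quoted from \cite{FSS20} as a black box, and the only argument the paper supplies is the one-sentence remark (and the footnote) that the original proof uses only the binary incidence relation between data points and weighted metric balls, hence never invokes the triangle inequality and therefore goes through in $\mathbb{R}_?^d$. Your proposal goes considerably further than the paper does, by actually sketching the sensitivity-sampling argument of \cite{DBLP:conf/stoc/FeldmanL11,FSS20}; that sketch is faithful to the cited proof and correctly isolates the point about binary relations.

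One small slip worth fixing: the sublevel set $\{x:\dist(x,C)\le r\}=\{x:\min_{c\in C}\dist(x,c)\le r\}$ is a \emph{union} of $k$ weighted balls, not an intersection. This does not affect your conclusion, since the $k$-fold union of ranges from a system of shattering dimension $s$ still has dimension $O(ks)$, which is exactly the bound you use.
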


\begin{proof}[Proof of Theorem~\ref{thm:main_full}]
Because of Theorem~\ref{thm:framework_full},
it suffices to bound $\sdim_{\max}(\mathbb{R}_?^d)$,
and to provide an efficient algorithm to estimate $\sigma_x$
whose sum is bounded.
These two components are provided in Lemma~\ref{lemma:sdim_full}
and Lemma~\ref{lemma:sensitivity_full} stated below
(their proofs appear in Sections~\ref{sec:sdim_full} and~\ref{sec:sensitivity_full}),
Plugging these two lemmas into Theorem~\ref{thm:framework_full},
the main theorem follows.
We provide an outline for the complete algorithm in Algorithm~\ref{alg:main_full}.
\begin{algorithm}[ht]
    \caption{Main algorithm}
    \label{alg:main_full}
    \begin{algorithmic}[1]
        \State run Algorithm~\ref{alg:sensitivity_full} to obtain $\sigma_x$ for $x\in X$
        \State draw $N := \tilde{O}\left(z^z \cdot \frac{(j+k)^{j+k+1}}{j^j k^{k-z-2}}\cdot \frac{(d\log n)^{\frac{z+2}{2}}}{\epsilon^2}\right)$
        independent samples $S$ from $X$, where $x \in X$ is sampled with probability $p_x \propto \sigma_x$
        \State for $x \in S$, define weight $w(x) \gets \frac{1}{p_x N}$
        \State return weighted set $S$ with weight $w$ as the coreset
    \end{algorithmic}
\end{algorithm}
\end{proof}

\begin{lemma}[Shattering dimension bound]
    \label{lemma:sdim_full}
    $\sdim_{\max}(\mathbb{R}_?^d) = O(d)$.
\end{lemma}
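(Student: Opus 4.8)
The plan is to bound, for an arbitrary weight function $v:\mathbb{R}_?^d\to\mathbb{R}_+$ and an arbitrary finite $H\subset\mathbb{R}_?^d$ with $|H|\ge 2$, the number of distinct weighted balls $B^H_v(c,r)$, and to show this count is $|H|^{O(d)}$; this is exactly the statement $\sdim_v(\mathbb{R}_?^d)=O(d)$, and taking the supremum over $v$ then yields $\sdim_{\max}(\mathbb{R}_?^d)=O(d)$. The crucial observation is that although $\dist$ is not a metric and is evaluated on a different coordinate set $I_x$ for each point $x$, the quantity $v(x)^2\dist^2(x,c)=v(x)^2\sum_{i\in I_x}(x_i-c_i)^2$ is, as a function of the center $c\in\mathbb{R}^d$, still a polynomial of degree at most $2$: the missing coordinates merely delete some of the variables $c_i$ from this polynomial, and the weight $v(x)$ is only a scalar coefficient, so neither feature raises the degree.

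To also absorb the radius $r$ into the same framework, I would lift to the parameter space $\mathbb{R}^{d+1}$. Writing $\rho:=r^2\ge 0$ and using $v(x)\ge 0$, membership $x\in B^H_v(c,r)$ is equivalent to $v(x)^2\dist^2(x,c)\le \rho$. Hence for each $x\in H$ define the polynomial
\[
  g_x(c,\rho):=v(x)^2\sum_{i\in I_x}(x_i-c_i)^2-\rho ,
\]
which has total degree $2$ in the $d+1$ variables $(c_1,\dots,c_d,\rho)$ (quadratic in the $c_i$, linear in $\rho$), and then $B^H_v(c,r)=\{x\in H:\ g_x(c,\rho)\le 0\}$ with $\rho=r^2$. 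Therefore the number of distinct sets $B^H_v(c,r)$, taken over all $c\in\mathbb{R}^d$ and $r\ge 0$, is at most the number of distinct subsets of $H$ that are cut out by the sign conditions of the family $\{g_x\}_{x\in H}$ over $(c,\rho)\in\mathbb{R}^{d+1}$.

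The last step is to invoke the classical bound on the number of sign patterns realized by a family of low-degree real polynomials (the Milnor--Thom / Warren-type estimate): $m$ polynomials of degree at most $D$ in $n$ real variables realize at most $(O(mD/n))^n$ distinct sign vectors when $m\ge n$. Applying it with $m=|H|$, $D=2$, $n=d+1$ gives at most $|H|^{O(d)}$ distinct balls whenever $|H|\ge d+1$. The leftover regime $2\le|H|\le d$ is handled trivially, since the number of distinct balls is at most $2^{|H|}\le 2^{d}\le|H|^{d}$ (using $|H|\ge 2$). In both regimes the count is $|H|^{O(d)}$, so $\sdim_v(\mathbb{R}_?^d)=O(d)$ for every $v$, and the supremum over $v$ completes the proof.

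I do not anticipate a genuine obstacle here: this is essentially the standard shattering-dimension computation for Euclidean balls, and the only thing that truly needs checking is that missing coordinates and the point weight do not destroy the ``degree-$\le 2$ polynomial in the center (and squared radius)'' structure — which, as noted above, they do not. The one mildly delicate point is justifying the reparametrization $\rho=r^2$, i.e.\ that squaring the inequality $v(x)\dist(x,c)\le r$ changes nothing; this is sound because both sides are nonnegative.
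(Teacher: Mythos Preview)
Your proposal is correct. Both your argument and the paper's exploit the same underlying observation --- that the membership condition $v(x)\dist(x,c)\le r$ is a degree-$2$ polynomial inequality in the parameters $(c,r^2)$, with missing coordinates simply zeroing out terms --- but they cash this in differently. The paper follows the Feldman--Langberg route: it explicitly linearizes the quadratic by lifting each point $h$ to a vector $f(h)\in\mathbb{R}^{3d+1}$ and each query $(c,r)$ to $g(c,r)\in\mathbb{R}^{3d+1}$ so that membership becomes $\langle f(h),g(c,r)\rangle\le 0$, and then invokes the elementary fact that halfspaces in $\mathbb{R}^{3d+1}$ have shattering dimension $O(d)$. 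You instead stay in $\mathbb{R}^{d+1}$ and invoke the Warren/Milnor--Thom bound on sign patterns of $|H|$ degree-$2$ polynomials directly. The paper's version is slightly more self-contained (it only needs the VC dimension of halfspaces, a textbook fact), while yours is shorter and avoids writing out the lifting map; the two are essentially dual, since the standard proof of the polynomial sign-pattern bound proceeds via the same Veronese-type linearization.
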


\begin{lemma}
\label{lemma:sensitivity_full}
There is an algorithm that, given a data set $X \subset \mathbb{R}_?^d$ of $n$ $j$-points, for \kzC
computes importance scores $\set{\sigma_x}_{x \in X}$ such that with constant probability,
\begin{itemize}
\item $\sigma_x \geq  \sigma^\star_x$ for all $x\in X$; and
\item $\sum_{x \in X}{\sigma_x} \leq O\left(\frac{(j+k)^{j+k+1}}{j^j k^{k-z-1}}\cdot \sqrt{d^z\cdot \log^{z+2} n}\right)$,
\end{itemize}
and its running time is $\tilde{O}\left(\frac{(j+k)^{j+k+2}}{j^j k^{k-2}}\cdot nd\right)$.
\end{lemma}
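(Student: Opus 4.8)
\noindent The plan is to chain three reductions. \emph{(1)~Reduce to \kCenter coresets, following Varadarajan--Xiao.} The usual sensitivity framework reads $\sigma_x$ off an $O(1)$-approximate clustering, but here the missing-value distance violates the triangle inequality, so such an approximation is not available directly. Instead I would invoke the reduction of~\cite{varadarajan2012near}: given black-box access to a procedure that, for every weighted sub-multiset $Y\subseteq X$, outputs a $\rho$-approximate \kCenter coreset of size $O(k)$, that reduction yields importance scores with $\sigma_x\ge\sigma^\star_x$ and $\sum_{x}\sigma_x=\tilde O(\rho^z k\log n)$, at the cost of $\Theta(n/k)$ successive calls of the procedure on an evolving point set. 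It thus remains to build such a \kCenter coreset quickly --- which is still nontrivial because the \kCenter objective here is itself over missing-value distances.

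\emph{(2)~Strip the missing values by restricting to coordinate subsets.} I would prove a combinatorial lemma: there is a collection $\mathcal I$ of at most $(jk)^{O(\min\{j,k\})}$ subsets of $[d]$, computable in time $\poly(|\mathcal I|,d)$, such that building for each $I\in\mathcal I$ a $\rho$-approximate \kCenter coreset on the \emph{$I$-restricted} data set --- which carries no missing coordinate and hence lives in an honest Euclidean metric --- and taking the reweighted union, gives a $\tilde O(\rho)$-approximate \kCenter coreset for $X$ under the missing-value distance (the same collection is what lets the per-instance sensitivities be assembled into valid global scores). The bound $(jk)^{O(\min\{j,k\})}$ rests on a dichotomy: when $j\le k$ one enumerates over the $\le j$ missing coordinates of the point(s) witnessing the \kCenter cost, and when $k\le j$ one enumerates over the few patterns in which the $k$ centers' witnesses can interact. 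I would then run steps~(1) and~(3) separately on each of the $|\mathcal I|$ Euclidean instances and aggregate; both $\sum_x\sigma_x$ and the running time acquire the factor $|\mathcal I|=\tfrac{(j+k)^{j+k}}{j^jk^k}$ (up to low-degree polynomial factors in $j,k$ from the reductions), with a further $\poly(d)$ of order $d^{z/2}$ entering $\sum_x\sigma_x$ because restricting coordinates distorts distances by up to $\tilde O(\sqrt d)$.

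\emph{(3)~A fast $\tilde O(1)$-approximate \kCenter coreset by a dynamic Gonzalez.} On a Euclidean instance it suffices to use a $\rho$-approximate \kCenter coreset with $\rho=\tilde O(1)$, not $\rho=1+\epsilon$: $\rho$ affects only the \emph{size} of the eventual coreset, through $\sum_x\sigma_x$, not its $\epsilon$-accuracy, so we escape the $\epsilon^{-d}$ cost of exact \kCenter $\epsilon$-coresets. One run of Gonzalez's algorithm~\cite{gonzalez1985clustering} with $k$ greedy steps produces a $2$-approximate \kCenter solution of size $k$, which is such a coreset. The snag is that step~(1) calls for $\Theta(n/k)$ Gonzalez runs, each performing $k$ furthest-point scans, i.e.\ $\Theta(n^2)$ work in total. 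I would instead maintain, for every restricted instance, a fully dynamic structure that answers \emph{approximate} furthest-point queries (farthest point from a set of $O(k)$ points) in $\poly(k\log n)$ time under point insertions and deletions, so that consecutive Gonzalez runs update it rather than rebuild it. Following Indyk~\cite{DBLP:conf/soda/Indyk03}, project the points onto $\Theta(k\log n)$ random directions and keep, on each one-dimensional image, an interval structure returning the point farthest from a query that is a union of $O(k)$ intervals; the best answer over directions is a $\tilde O(1)$-approximate farthest point with high probability, via a union bound over the $n^{O(k)}$ possible query sets (all $O(k)$-subsets of $X$), so the once-fixed directions serve every query the algorithm ever issues. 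A query then touches $\Theta(k\log n)$ lines for $\poly(k\log n)$ total work, and maintaining a point's $\Theta(k\log n)$ projections costs $O(d)$ per projection --- the source of the $nd$ factor. Tracing the polynomial factors through steps~(1)--(3) yields the stated bounds on $\sum_x\sigma_x$ and on the running time.

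The two genuinely new ingredients, and the places I expect the real difficulty, are the combinatorial lemma of step~(2) --- pinning $|\mathcal I|$ at $(jk)^{O(\min\{j,k\})}$, rather than something like $d^{\Theta(j)}$, while still preserving the \kCenter coreset guarantee (and the sensitivity aggregation) for a non-metric distance --- and the dynamic approximate-farthest-point structure of step~(3) with $\poly(k\log n)$ query time that survives reuse across the $\Theta(n/k)$ Gonzalez runs; the rest is bookkeeping around the two cited black boxes.
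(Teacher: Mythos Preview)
Your proposal is essentially the paper's proof: the Varadarajan--Xiao reduction to \kCenter coresets, the coordinate-subset family $\mathcal I$ that strips missing values (the paper calls it a $(j,k,d)$-family), and a dynamic Gonzalez via Indyk-style random projections onto $\Theta(k\log n)$ lines are exactly the three ingredients the paper uses, and your identification of the two hard spots matches theirs.

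Two small points where your write-up diverges from the paper. First, the paper does \emph{not} build $\mathcal I$ by a $j\lessgtr k$ dichotomy or by enumerating witness patterns; it uses a single randomized construction --- include each coordinate in a random $I$ independently with probability $k/(j+k)$, so a fixed pair $(J,K)$ with $|J|=j$, $|K|=k$ is handled with probability $(j/(j+k))^j(k/(j+k))^k$, and a union bound over the $d^{j+k}$ pairs gives $|\mathcal I|=O\bigl(\tfrac{(j+k)^{j+k+1}}{j^jk^k}\log d\bigr)$. Your enumeration intuition, read literally, would give something like $d^{\min\{j,k\}}$, so the probabilistic argument is what actually pins the bound. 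Second, the paper does not ``run step~(1) separately on each Euclidean instance and aggregate'': it runs the Varadarajan--Xiao loop \emph{once} on the full missing-value instance, and inside each iteration the required \kCenter coreset is the union $\bigcup_{I\in\mathcal I} Y_I$ of per-$I$ Gonzalez outputs (this union is an $O(k\sqrt{d\log n})$-coreset for the missing-value \kCenter, the $\sqrt d$ coming from the restriction and the $k\sqrt{\log n}$ from the approximate furthest-point queries). Running VX per restricted instance would yield sensitivities for the wrong objective and would not obviously aggregate; your parenthetical about ``per-instance sensitivities being assembled into valid global scores'' is not how the paper proceeds and would need a separate argument.
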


\subsection{Proof of Lemma~\ref{lemma:sdim_full}: Shattering Dimension of $\mathbb{R}_?^d$}
\label{sec:sdim_full}

We now prove Lemma~\ref{lemma:sdim_full},
which asserts that
$\sdim_{\max}(\mathbb{R}_?^d) = O(d)$. We remark that the shattering dimension bound for $\mathbb{R}^d$ without missing values has been proved in~\cite[Lemma 16.1]{DBLP:conf/stoc/FeldmanL11} and our proof is actually an extension of it.

\begin{proof}[Proof of Lemma~\ref{lemma:sdim_full}]
Let us verify Definition~\ref{def:sdim_full}.
Consider $H\subset \mathbb{R}_?^d$
and a weight function $v: \mathbb{R}_?^d \to \mathbb{R}_{+}$.
Recall that given $c \in \mathbb{R}^d$ and $r\geq 0$, we have
$B_v^H(c,r)=\{h\in H : v(h)\cdot \dist(h,c)\leq r\}$ and
$\dist(h,c)^2=\sum_{i\in I_h} (h_i-c_i)^2$ for $h \in H$.
We need to show that
\begin{equation}
    \label{eqn:proj_full}
    \left| \{ B_v^H(c,r) : c\in \mathbb{R}^d, r\geq 0 \} \right|
    \leq |H|^{O(d)}.
\end{equation}
Observe that
\begin{align*}
    h \in B_v^H(c, r)
    \iff v(h) \cdot \dist(h, c) \leq r
    \iff
    -r^2+\sum_{i\in I_h} (v^2(h)h_i^2+v^2(h)c_i^2-2v^2(h)h_ic_i)\leq 0.
\end{align*}
Next, we write this inequality in an alternative way, that separates
terms depending $h$ from those depending on $c$ and $r$,
more precisely as an inner-product $\langle f(h), g(c, r) \rangle \leq 0$
for vectors $f(h), g(c, r) \in \mathbb{R}^{3d+1}$.
Now consider $f : H \to \mathbb{R}^{d} \times \mathbb{R}^d \times \mathbb{R}^d \times \mathbb{R}$
and $g: \mathbb{R}^{d} \times \mathbb{R}
\to \mathbb{R}^{d} \times \mathbb{R}^d \times \mathbb{R}^d \times \mathbb{R}$
such that $f(h)=(p,q,t,-1)$, where $p,q,t\in \mathbb{R}^d$ and for $i \in [d]$
\begin{align*}
    p_i =\begin{cases}
        v^2(h)\cdot h_i^2 & \text{if } i\in I_h \\
        0 & \text{otherwise}
    \end{cases} \quad
    q_i=\begin{cases}
        v^2(h) & \text{if } i\in I_h \\
        0 & \text{otherwise}
    \end{cases} \quad
    t_i=\begin{cases}
        -2v^2(h)\cdot h_i & \text{if } i\in I_h \\
        0 & \text{otherwise}
    \end{cases}
\end{align*}
and $g(c,r)=(y,z,w,r^2)$, where $y,z,w\in \mathbb{R}^d$,
$y_i=1,z_i=c_i^2,w_i=c_i$ for $i\in [d]$.
Then we have
\begin{align*}
    h \in B_v^H(c, r) \iff \langle f(h), g(c,r) \rangle \leq 0.
\end{align*}
For a vector $t \in \mathbb{R}^{3d+1}$,
let $\proj_{-}^H(t) := \{ h \in H  : \langle f(h), t \rangle \leq 0\}$
be the subset of $H$ that has nonpositive inner-product with $t$
(it can be viewed also as projection or a halfspace).
Therefore, by (\ref{eqn:proj_full}), we have
\begin{align*}
    \left| \{ B_v^H(c,r) : c\in \mathbb{R}^d, r\geq 0 \} \right|
    = \left| \{ \proj_{-}^H(g(c, r)) : c \in \mathbb{R}^d, r\geq 0 \} \right|
    \leq \left| \{ \proj_{-}^H(t) : t \in \mathbb{R}^{3d+1} \} \right|.
\end{align*}
We observe that
\begin{align*}
    \left| \{ \proj_{-}^H(t) : t \in \mathbb{R}^{3d+1} \} \right| \leq |H|^{O(d)},
\end{align*}
since this may be related to the shattering dimension of halfspaces in $\mathbb{R}^{3d+1}$,
which is $O(d)$ and is a well-known fact in the PAC learning theory
(cf.~\cite[Chapter 7.2]{van2014probability}).
This concludes the proof of Lemma~\ref{lemma:sdim_full}.
\end{proof}

\subsection{Proof of Lemma~\ref{lemma:sensitivity_full}: Estimating Sensitivity Efficiently}
\label{sec:sensitivity_full}

We use a technique introduced by Varadarajan and Xiao~\cite{varadarajan2012near}
that reduces the sensitivity-estimation problem
to the problem of constructing a coreset for \kCenter clustering.
This coreset concept is defined as follows.
\begin{definition}
An \emph{$\alpha$-coreset for \kCenter}
of a data set $X \subset \mathbb{R}_?^d$
is a subset $Y\subseteq X$ such that
\begin{align*}
  \forall C \subset \mathbb{R}^d, |C| = k,
  \qquad
  \max_{x \in X}{\dist(x, C)} \leq \alpha \cdot \max_{y \in Y}{\dist(y, C)}.
\end{align*}
\end{definition}
Note that the error parameter $\alpha$ represents a multiplicative factor,
which is slightly different from that of $\epsilon$ in $\epsilon$-coreset for \kzC,
and roughly corresponds to $\alpha = 1 + \epsilon$.
The reasoning is that $\max_{y \in Y}{\dist(y, C)}$ for $Y \subseteq X$
is always no more than $\max_{x \in X}{\dist(x, C)}$,
and therefore we only need to measure the contraction-side error.

The reduction in Lemma~\ref{lemma:vx12_full} was presented in~\cite{varadarajan2012near}, and we restate its algorithmic steps in Algorithm~\ref{alg:sensitivity_reduct}.
This needs access to some Algorithm $\mathcal{A}$
that constructs an $\alpha$-coreset for \kCenter on a point set $X \subset \mathbb{R}^d_?$.
Each iteration $i$ calls Algorithm $\mathcal{A}$ to construct
a \kCenter coreset for the current point set $X$ (which is initially the entire data set),
assign sensitivity estimates $O(\alpha^z / i)$ to every coreset point,
and then remove these coreset points from $X$.
These iterations are repeated until $X$ is empty.

\begin{algorithm}[ht]
\caption{Sensitivity estimation from~\cite[Lemma 3.1]{varadarajan2012near} for data set $X \subset \mathbb{R}^d_?$}
\label{alg:sensitivity_reduct}
    \begin{algorithmic}[1]
        \Require algorithm $\mathcal{A}$ that constructs $\alpha$-coreset for \kCenter
        \State $i \gets 1$
        \While{$X \neq \emptyset$}
            \State $P \gets \mathcal{A}(X)$
            \For{$x \in P$}
                \State $\sigma_x \gets O(\alpha^z / i)$
            \EndFor
            \State $X \gets X \setminus P$
            \State $i \gets i + 1$
        \EndWhile
    \end{algorithmic}
\end{algorithm}

\begin{lemma}[{\cite[Lemma 3.1]{varadarajan2012near}}]
\label{lemma:vx12_full}
Suppose algorithm $\mathcal{A}$ constructs an $\alpha$-coreset of size $T = T(\alpha, d, j, k)$ for \kCenter an input $X \subset \mathbb{R}_?^d$.
Then Algorithm~\ref{alg:sensitivity_reduct} (which makes calls to this Algorithm $\mathcal{A}$)
computes sensitivities $\set{\sigma_x}$ for \kzC satisfying that
$\sigma_x \geq \sigma^\star_x$ for all $x \in X$,
and $\sum_{x \in X}{\sigma_x} \leq \alpha^z \cdot T \log|X|$.
\end{lemma}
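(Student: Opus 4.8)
The plan is to analyze Algorithm~\ref{alg:sensitivity_reduct} directly, tracking two things across the iterations: (a) that the assigned scores dominate the true sensitivities, and (b) that their sum telescopes to the claimed bound. Suppose the algorithm runs for $L$ iterations before $X$ becomes empty. In iteration $i$, let $X_i$ be the point set at the start of that iteration (so $X_1 = X$), and let $P_i = \mathcal{A}(X_i)$ be the $\alpha$-coreset of \kCenter for $X_i$, of size at most $T$. Since the $P_i$ are removed and are pairwise disjoint and cover $X$, we get $\sum_i |P_i| = |X|$, hence $L \le |X|$; more importantly, $|X_i| \le |X| - \sum_{i' < i}|P_{i'}|$, but the cleaner bound we actually want is $|X_i| \ge ?$ — I will come back to this. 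Each point $x \in P_i$ receives $\sigma_x = O(\alpha^z / i)$.

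For the upper-bound-on-sum part: $\sum_{x \in X} \sigma_x = \sum_{i=1}^{L} \sum_{x \in P_i} O(\alpha^z/i) = O(\alpha^z)\sum_{i=1}^L |P_i|/i$. Using $|P_i| \le T$ and $L \le |X|$ (each iteration removes at least one point — indeed at least the furthest point chosen by the Gonzalez-style coreset, so $P_i \ne \emptyset$ whenever $X_i \ne \emptyset$), this is at most $O(\alpha^z) \cdot T \cdot \sum_{i=1}^{|X|} 1/i = O(\alpha^z \cdot T \log |X|)$, which is exactly the claimed bound. So the sum bound is essentially a harmonic-series computation once we know $P_i \ne \emptyset$ at every step.

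The substantive part is the domination claim $\sigma_x \ge \sigma^\star_x$ for every $x$. Fix $x \in X$ and let $i$ be the iteration in which $x$ is removed, i.e.\ $x \in P_i$. I want to show $\sigma^\star_x = \sup_{|C|=k} \dist^z(x,C)/\cost_z(X,C) = O(\alpha^z/i)$. Fix an arbitrary center set $C$ with $|C|=k$. The key observation is that $x \in X_i$, and $X_i$ still contains all the points $P_1,\dots,P_{i-1}$ were removed from it, so $|X_i| = |X| - \sum_{i'<i}|P_{i'}|$; but the point is really about the points removed \emph{before} $x$: for each iteration $i' < i$, $P_{i'}$ is an $\alpha$-coreset for \kCenter of $X_{i'} \supseteq X_i \ni x$, so $\max_{y \in X_{i'}} \dist(y,C) \le \alpha \max_{y \in P_{i'}} \dist(y,C)$, and in particular $\dist(x,C) \le \max_{y\in X_{i'}}\dist(y,C) \le \alpha \max_{y \in P_{i'}}\dist(y,C)$. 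Hence for every $i' < i$ there is a point $y_{i'} \in P_{i'}$ with $\dist(y_{i'},C) \ge \dist(x,C)/\alpha$, so $\dist^z(y_{i'},C) \ge \dist^z(x,C)/\alpha^z$. These $y_{i'}$ are distinct (they lie in disjoint sets $P_{i'}$), and together with $x$ itself (taking $i'=i$ and noting $\dist(x,C) \ge \dist(x,C)/\alpha$ trivially) we obtain $i$ distinct points each contributing at least $\dist^z(x,C)/\alpha^z$ to $\cost_z(X,C)$. Therefore $\cost_z(X,C) \ge i \cdot \dist^z(x,C)/\alpha^z$, i.e.\ $\dist^z(x,C)/\cost_z(X,C) \le \alpha^z/i$. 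Taking the supremum over $C$ gives $\sigma^\star_x \le \alpha^z/i$, so choosing the implicit constant in $\sigma_x = O(\alpha^z/i)$ to be at least $1$ yields $\sigma_x \ge \sigma^\star_x$.

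The main obstacle is getting the ``$i$ distinct points each with cost $\ge \dist^z(x,C)/\alpha^z$'' argument exactly right — in particular making sure the coreset guarantee for \kCenter is applied to the right nested set ($X_{i'} \supseteq \{x\}$ for all $i' \le i$, which holds since $x$ survives until iteration $i$) and that the witnesses $y_{i'}$ picked from the different $P_{i'}$'s are genuinely distinct (immediate from disjointness of the $P_{i'}$). Once this combinatorial counting is in place, both bullet points follow, and the running-time and size bookkeeping is not part of this lemma's statement. This matches the proof in~\cite{varadarajan2012near}, and nothing here uses the triangle inequality, so it transfers verbatim to $\mathbb{R}_?^d$.
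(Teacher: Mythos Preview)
Your argument is correct and is exactly the standard Varadarajan--Xiao proof: for $x\in P_i$ and any center set $C$, each earlier $P_{i'}$ (for $i'\le i$) contains a witness at distance $\ge \dist(x,C)/\alpha$ from $C$, giving $i$ disjoint contributors and hence $\sigma^\star_x\le \alpha^z/i$; the total then follows from $|P_i|\le T$ and the harmonic sum. Note that the present paper does not supply its own proof of this lemma --- it is quoted from~\cite{varadarajan2012near} --- so there is nothing to compare against beyond observing that your writeup faithfully reproduces that argument and, as you point out, uses no triangle inequality and hence carries over to $\mathbb{R}_?^d$.
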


However, there are two outstanding technical challenges.
First, there is no known construction of a small \kCenter coreset for our clustering with missing values setting.
Moreover, as can be seen from Algorithm~\ref{alg:sensitivity_reduct},
this reduction executes the \kCenter coreset construction $\frac{|X|}{T}$ times
(where $T$ is the size of the coreset as in Lemma~\ref{lemma:vx12_full}),
and when using a naive implementation of the \kCenter coreset construction,
which naturally requires $\Omega(|X|)$ time,
results overall in quadratic time, which is not very efficient.

First, to deal with question marks, we employ a certain family $\mathcal{I}$
of subset of coordinates (so each $I \in \mathcal{I}$ is a subset of $[d]$),
and we \emph{restrict} the data set $X$ on each $I \in \mathcal{I}$.
Each restricted data set (restricted on some $I$)
may be viewed as a data set in $\mathbb{R}^I$, without any question marks.
We show that the union of \kCenter coresets on all restricted data sets
with respect all to $I \in \mathcal{I}$,
forms a valid \kCenter coreset for $X$ (which has question marks),
provided that the family $\mathcal{I}$ has a certain combinatorial property.
Naturally, the size of this coreset for $X$ depends on an upper bound on $|\mathcal{I}|$.

Second, since the choice of family $\mathcal{I}$ is oblivious to the data set,
it suffices to design an efficient algorithm for \kCenter coreset for
any restricted data set.
We observe that the efficiency bottleneck in Algorithm~\ref{alg:sensitivity_reduct}
is the repeated invocation of Algorithm~$\mathcal{A}$ to construct a coreset,
even though its input changes only a little between consecutive invocations.
Hence, we design a dynamic algorithm, that maintains
a \kCenter coreset on the restricted data sets under point updates.
Our algorithm may be viewed as a variant of Gonzalez's algorithm~\cite{gonzalez1985clustering},
and we maintain it efficiently by a random projection idea that was used e.g.\ in~\cite{DBLP:conf/soda/Indyk03}.
In particular, we ``project'' the data points onto several one-dimensional lines in $\mathbb{R}^d$,
and we maintain an interval data structure (that is based on balanced trees)
to dynamically maintain the result of our variant of Gonzalez's algorithm.
We summarize the dynamic algorithm in the following lemma.

\begin{lemma}
    \label{lemma:dynamic_full}
There is a randomized dynamic algorithm with the following guarantees.
The input is a dynamic set $X \subset \RR_?^d$ of $j$-points,
such that $X$ undergoes $q$ adaptive updates (point insertions and deletions)
and the points ever added are fixed in advance (non-adaptively).
The algorithm maintains in time $\tilde{O}\left(\frac{(j+k)^{j+k+1}}{j^jk^{k}}\cdot (j+k\log q)(d+k^2\log q)\right)$ per update,
a subset $Y \subseteq X$ of size
$|Y| \leq O\left(\frac{(j+k)^{j+k+1}}{j^jk^{k-1}}\cdot \log d\right)$
such that with constant probability,
$Y$ is an $O(k\sqrt{d\log q})$-coreset for \kCenter on $X$ after every update.
\end{lemma}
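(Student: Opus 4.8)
~The plan is to reduce the missing-values \kCenter problem to a family of full-dimensional \kCenter problems, to maintain on each of those a cheap (polylogarithmic-approximation) \kCenter coreset via a Gonzalez-type greedy, and to support that greedy dynamically using Indyk-style one-dimensional random projections backed by balanced-tree interval structures.

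\emph{Step 1: reduction via an oblivious coordinate family.} First I would fix, obliviously of the input, a family $\mathcal{I}$ of subsets of $[d]$, and for each $I\in\mathcal{I}$ form the \emph{restriction} $X_I:=\{x\in X : I\subseteq I_x\}$, viewed as a point set in $\mathbb{R}^I$ under the genuine metric $\dist_I$ --- which has no missing values, so the triangle inequality holds. The claim to prove is that if $S_I\subseteq X_I$ is an $\alpha_0$-coreset for \kCenter on $(X_I,\dist_I)$ for every $I$, then $Y:=\bigcup_{I\in\mathcal{I}}S_I$ is an $O(\alpha_0\cdot\poly(j,k))$-coreset for \kCenter on $X$. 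Fixing a center set $C$ and the point $x^\star\in X$ farthest from $C$, the argument is: locate $I^\star\in\mathcal{I}$ with $I^\star\subseteq I_{x^\star}$ along which $x^\star$ still looks almost as far from $C$; then $x^\star\in X_{I^\star}$, so some $y\in S_{I^\star}$ has $\dist_{I^\star}(y,C)\ge\alpha_0^{-1}\dist_{I^\star}(x^\star,C)$, and finally $\dist(y,C)\ge\dist_{I^\star}(y,C)$ because $I^\star\subseteq I_y$. The property this asks of $\mathcal{I}$ is a combinatorial ``covering'' guarantee: for every $j$-point $x$ and every $k$-point center set $C$ there is $I\in\mathcal{I}$ with $I\subseteq I_x$ that retains a $1/\poly(j,k)$ fraction of the squared distance from $x$ to each of the $k$ centers. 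I would realize $\mathcal{I}$ as a cover-free family on $[d]$ --- one in which, for every disjoint pair $(T,M)$ with $|T|$ bounded by $\poly(k)$ and $|M|\le j$, some member contains $T$ and avoids $M$ --- whose size, by the standard probabilistic argument, is $O\!\big(\tfrac{(j+k)^{j+k+1}}{j^jk^k}\log d\big)=(jk)^{O(\min\{j,k\})}\log d$. \textbf{Pinning down the precise covering property and the matching $d$-independent bound on the combinatorial factor $\tfrac{(j+k)^{j+k}}{j^jk^k}$ is the technical heart of handling multiple missing values, and I expect this combinatorial lemma to be the main obstacle}; it will be isolated and proved separately.

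\emph{Step 2: a cheap \kCenter coreset on each restriction.} On a restriction we are in a genuine Euclidean metric, and I would exploit that a \emph{constant-factor} \kCenter coreset --- not a $(1+\epsilon)$ one --- suffices here, which sidesteps the $\epsilon^{-|I|}$ blow-up of grid constructions. Running Gonzalez's farthest-point traversal \cite{gonzalez1985clustering} for $O(k)$ rounds and returning the selected points gives an $O(1)$-coreset for \kCenter on $X_I$: the selected points are mutually well separated, so by pigeonhole every $k$-center set leaves one of them far, while every point of $X_I$ is close to the selected set, and the triangle inequality (available, since no coordinates are missing) converts these two facts into the coreset inequality. If the farthest-point queries are answered only within a factor $\rho$, the same reasoning degrades the guarantee to an $O(\rho\cdot\poly(k))$-coreset, still of size $O(k)$.

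\emph{Step 3: dynamic implementation and accounting.} The bottleneck is the query ``return a point of $X_I$ approximately farthest from a given set $Z$ of at most $O(k)$ points'', under insertions and deletions in $X_I$. Following Indyk \cite{DBLP:conf/soda/Indyk03}, I would sample $L=\tilde{O}(k\log q)$ random directions in $\mathbb{R}^I$ and, for each $u_\ell$, keep the multiset $\{\langle x,u_\ell\rangle : x\in X_I\}$ in a balanced search tree; a query is answered by, for every $\ell$, splitting the line at the $\le k$ projected points of $Z$ and, inside each of the $O(k)$ resulting intervals, using predecessor/successor queries to find the tree element farthest from $Z$'s projections, then returning the best point over all $\ell$. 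A one-dimensional projection only contracts distances, and for a fixed (point, $\le k$-set) pair a single random direction preserves the relevant distance up to a factor $O(k\sqrt d)$ with constant probability; a union bound over the at most $q^{O(k)}$ such configurations that can ever arise --- which is exactly where the hypothesis that the points ever added are fixed in advance is used, making this event independent of the adaptive update order --- yields a $\rho=O(k\sqrt{d\log q})$-approximate oracle simultaneously for all queries. Maintaining these trees for every $I\in\mathcal{I}$ under an update of a point $x$ (touching only the restrictions $X_I$ with $I\subseteq I_x$, each with $O(L)$ tree operations and $O(|I|)$ work per projection) and re-running the $O(k)$-round greedy on exactly those restrictions fits within the stated per-update time $\tilde{O}\!\big(\tfrac{(j+k)^{j+k+1}}{j^jk^k}(j+k\log q)(d+k^2\log q)\big)$, and $|Y|\le|\mathcal{I}|\cdot O(k)$ matches the stated size. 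Combining Steps 1--3, and balancing the losses so that the $\poly(j,k)$ slack from Steps 1--2 is absorbed, gives that $Y$ is an $O(k\sqrt{d\log q})$-coreset for \kCenter on $X$ after every update, which is the claim.
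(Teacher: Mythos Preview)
Your three-step architecture---oblivious coordinate family, approximate Gonzalez on each restriction, dynamic furthest-point via random 1D projections---is exactly the paper's approach, and the size and time accounting are correct. However, you have the origin of the $\sqrt{d}$ factor backwards, and both of the individual claims you make about it are false as stated.

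In Step~1 you assert that the family reduction loses only a $\poly(j,k)$ factor, because some $I\in\mathcal{I}$ ``retains a $1/\poly(j,k)$ fraction of the squared distance from $x$ to each of the $k$ centers''. This cannot hold with $|T|\le\poly(k)$ witness coordinates: if $(x-c)^2$ is spread uniformly over $d$ coordinates, any $\poly(k)$ of them capture only a $\poly(k)/d$ fraction of the squared distance, so the $\ell_2$ loss is $\sqrt{d/\poly(k)}$, not $\poly(j,k)$. The paper's reduction (its Lemma~3.9) keeps \emph{one} witness coordinate per center---the coordinate maximizing $|x_t-c^i_t|$---and this single coordinate is guaranteed only $\ge r/\sqrt{d}$; that is precisely where the $\sqrt{d}$ enters.

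Conversely, in Step~3 you claim a single Gaussian direction preserves the relevant distance ``up to a factor $O(k\sqrt{d})$''. But for $v\sim N(0,I_m)$ one has $\langle u,v\rangle/\|u\|_2\sim N(0,1)$ \emph{independently of $m$}, so the contraction bound is $\Omega(1/k)$ and the expansion bound (over $\le q^2$ pairs) is $O(\sqrt{\log q})$; the per-restriction oracle is therefore an $O(k\sqrt{\log q})$-approximation, with no dimension dependence (this is the paper's Lemma~3.14/3.15).

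So the correct bookkeeping is: $\sqrt{d}$ from the family reduction, $O(k\sqrt{\log q})$ from the projected Gonzalez, product $O(k\sqrt{d\log q})$. Your two errors happen to cancel in the final bound, but each intermediate claim would fail if you tried to prove it; fix the attribution and the argument goes through exactly as in the paper.
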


The proof of the lemma can be found in Section~\ref{sec:proof_dynamic_full},
and here we proceed to the proof of Lemma~\ref{lemma:sensitivity_full}.
\begin{proof}[Proof of Lemma~\ref{lemma:sensitivity_full}]
    We plug in the dynamic algorithm in Lemma~\ref{lemma:dynamic_full} as $\mathcal{A}$
    in Lemma~\ref{lemma:vx12_full}.
    Specifically, line 3 and 7 of Algorithm~\ref{alg:sensitivity_reduct} are replaced
    by the corresponding query and update procedure.
    The detailed description can be found in Algorithm~\ref{alg:sensitivity_full}.
    \begin{algorithm}[ht]
        \caption{Efficient importance score estimation}
        \label{alg:sensitivity_full}
        \begin{algorithmic}[1]
            \State let $\mathcal{D}$ be the dynamic data structure defined in Algorithm~\ref{alg:dynamic_full}, and call $\mathcal{D}.\textsc{Init}$
            \State $\forall x \in X$, insert $x$ to $\mathcal{D}$
            \State $i \gets 1$
            \While{$X \neq \emptyset$}
                \State $P \gets \mathcal{D}.\textsc{Get-Coreset}$
                \For{$x \in P$}
                    \State $\sigma_x \gets O(\alpha^z / i)$
                \EndFor
\State $\forall x \in P$, remove $x$ from $\mathcal{D}$
                \State $i \gets i + 1$
            \EndWhile
            \State return $(\sigma_x : x\in X)$
        \end{algorithmic}
    \end{algorithm}
    
    Since $|X|=n$, and each point is inserted and deleted for exactly once, algorithm~\ref{alg:sensitivity_reduct} needs $q=O(n)$ insertions and deletions of points. Moreover, the set of points ever added is just $X$ which is fixed.
    Thus, $\alpha$ is replaced by $O(k\sqrt{d\log n})$ and $T$ is replaced by $O\left(\frac{(j+k)^{j+k+1}}{j^jk^{k-1}}\cdot \log d\right)$.
    Therefore, for \kzC, this computes $\sigma_x$ for $x\in X$
    such that $\sigma_x \geq \sigma^\star_x$, and that
    \begin{align*}
        \sum_{x \in X}{\sigma_x}
        \leq \alpha^z\cdot T\cdot \log n
        = O\left(\frac{(j+k)^{j+k+1}}{j^j k^{k-z-1}}\cdot \sqrt{ d^z\cdot \log^{z+2} n}\right).
    \end{align*}
    The total running time is bounded by $\tilde{O}\left(\frac{(j+k)^{j+k+2}}{j^jk^{k-2}}\cdot nd\right)$ for implementing $O(n)$ updates.
\end{proof}

\subsection{Proof of Lemma~\ref{lemma:dynamic_full}: Dynamic $O(1)$-Coresets for $k$-Center Clustering}
\label{sec:proof_dynamic_full}

As mentioned, the high level idea is to
identify a collection $\mathcal{I}$ of subsets of coordinates
(so each $I \in \mathcal{I}$ satisfies $I \subseteq [d]$),
construct an $\alpha$-coreset ($a$ will be determined is the later context) $Y_i$ for \kCenter on the data set
$X$ with coordinates \emph{restricted} on each $I_i \in \mathcal{I}$,
and then the union $\bigcup_i Y_i$ would be the overall $\alpha\sqrt{d}$-coreset for \kCenter on $X$.
The exact definition of restricted data set goes as follows.

\begin{definition}
For a point $p \in \mathbb{R}_?^d$ and a subset $I \subseteq I_p$,
define $p_{|I} \in \mathbb{R}^{I}$ in the obvious way,
by selecting the coordinates $\set{p_i}_{i \in I}$.
Define the \emph{$I$-restricted data set} to be
$X_{|I} := \{ p_{|I} : p \in X, I \subseteq I_p\}$.
Since each vector in $X_{|I}$ arises from a specific vector in $X$,
a subset $Y \subseteq X_{|I}$ corresponds to a specific subset of $X$,
and we shall denote this subset by $Y^{-1}$.
\end{definition}
We observe that the metric space on the restricted data set becomes a usual metric space, i.e. it satisfies the triangle inequality,
and can be realized as a point set in $\mathbb{R}^I$ which does not contain question marks.
Therefore, this reduces our goal to constructing \kCenter coresets for this usual data set.
However, the size of the coreset yielded from this approach would depend on the size of the family $\mathcal{I}$.
Hence, a key step is to identify a small set $\mathcal{I}$ such that the union of the coreset restricted on $\mathcal{I}$ is an accurate coreset.
To this end, we consider the so-called $(j, k, d)$-family of coordinates
as in Definition~\ref{def:family_full}.
This family itself is purely combinatorial, but
we will show in Lemma~\ref{lemma:family2coreset_full} that such a family
actually suffices for the accuracy of the coreset,
and we show in Lemma~\ref{lemma:family_full} the existence of a small family.

\begin{definition}
    \label{def:family_full}
A family of sets $\mathcal{I}\subset 2^{[d]}$ is called a $(j,k,d)$-family if for any $J,K\subset [d],J\cap K=\emptyset,|J|=j,|K|=k$, there exists an $I\in\mathcal{I}$ such that $I\cap J=\emptyset$ and $K\subset I$.
\end{definition}

\begin{lemma} \label{lemma:family2coreset_full}
Suppose $\mathcal{I}$ is a $(j,k,d)$-family
Let $X\subseteq \mathbb{R}_?^d$ be a set of $j$-points,
and for every $I\in \mathcal{I}$,
let $Y_I$ be an $\alpha$-coreset for \kCenter on $X_{|I}$.
Then $\cup_{I\in \mathcal{I}} Y_I^{-1}$ is an $\alpha\sqrt{d}$-coreset for $k$-Center on $X$.
\end{lemma}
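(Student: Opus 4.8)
The plan is to verify the $\alpha\sqrt{d}$-coreset inequality of Definition~\ref{def:coreset} directly, for an arbitrary fixed center set $C=\{c_1,\dots,c_k\}\subset\mathbb{R}^d$. Write $Y:=\bigcup_{I\in\mathcal{I}}Y_I^{-1}$ and let $R:=\max_{x\in X}\dist(x,C)$, attained at some $x^\star\in X$. Since $Y\subseteq X$, the bound $\max_{y\in Y}\dist(y,C)\le R$ holds for free, so it suffices to exhibit a point of $Y$ at distance at least $R/(\alpha\sqrt{d})$ from $C$; if $R=0$ there is nothing to prove, so assume $R>0$, which in particular forces $I_{x^\star}\neq\emptyset$.

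First I would run a pigeonhole over coordinates to pick out, for each center, one coordinate that already carries a $1/d$ fraction of the worst-case squared distance: for each $\ell\in[k]$, since $\dist(x^\star,c_\ell)\ge R$ and $\dist^2(x^\star,c_\ell)=\sum_{i\in I_{x^\star}}(x^\star_i-(c_\ell)_i)^2$ is a sum of at most $d$ nonnegative terms, there is an index $i_\ell\in I_{x^\star}$ with $(x^\star_{i_\ell}-(c_\ell)_{i_\ell})^2\ge R^2/d$. Next I build the two index sets required by the family: choose $K$ with $\{i_1,\dots,i_k\}\subseteq K\subseteq I_{x^\star}$ and $|K|=k$, and then $J$ with $[d]\setminus I_{x^\star}\subseteq J\subseteq[d]\setminus K$ and $|J|=j$; both choices exist because $x^\star$ is a $j$-point and (as we may assume) $d\ge j+k$, the case $d<j+k$ being degenerate and treatable separately. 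Now $J$ and $K$ are disjoint of sizes $j$ and $k$, so the $(j,k,d)$-family property yields some $I\in\mathcal{I}$ with $I\cap J=\emptyset$ and $K\subseteq I$; since $[d]\setminus I_{x^\star}\subseteq J$, this forces $I\subseteq I_{x^\star}$, so that $x^\star_{|I}$ is a legitimate element of $X_{|I}$.

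Finally I would apply the $\alpha$-coreset guarantee of $Y_I$ on $X_{|I}$ to the restricted center set $C_{|I}:=\{(c_1)_{|I},\dots,(c_k)_{|I}\}\subset\mathbb{R}^I$ (padded back up to $k$ centers should the restriction merge some of them, which is harmless for a finite point set). On one side, $\max_{p\in X_{|I}}\dist(p,C_{|I})\ge\dist(x^\star_{|I},C_{|I})=\min_{\ell}\dist_I(x^\star,c_\ell)\ge R/\sqrt{d}$, because for every $\ell$ the sum defining $\dist_I^2(x^\star,c_\ell)$ contains the single term $(x^\star_{i_\ell}-(c_\ell)_{i_\ell})^2\ge R^2/d$, using $i_\ell\in K\subseteq I$. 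On the other side, every $y\in Y_I$ equals $p_{|I}$ for some $p\in Y_I^{-1}\subseteq Y$ with $I\subseteq I_p$, so $\dist(y,C_{|I})=\min_\ell\dist_I(p,c_\ell)\le\min_\ell\dist(p,c_\ell)=\dist(p,C)$, whence $\max_{y\in Y_I}\dist(y,C_{|I})\le\max_{p\in Y}\dist(p,C)$. Substituting both bounds into $\max_{p\in X_{|I}}\dist(p,C_{|I})\le\alpha\cdot\max_{y\in Y_I}\dist(y,C_{|I})$ gives $R/\sqrt{d}\le\alpha\cdot\max_{p\in Y}\dist(p,C)$, i.e.\ $\max_{x\in X}\dist(x,C)\le\alpha\sqrt{d}\cdot\max_{p\in Y}\dist(p,C)$; as $C$ was arbitrary, $Y$ is an $\alpha\sqrt{d}$-coreset for \kCenter on $X$.

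The main obstacle here is conceptual rather than computational: restricting to a coordinate subset $I$ can only shrink distances, so passing from $X$ to $X_{|I}$ loses nothing on the coreset side but potentially everything on the ``far point'' side. The pigeonhole step is exactly what rescues a $1/\sqrt{d}$ fraction of the worst-case distance, and it is essential that it produces one good coordinate \emph{per center} (hence $|K|=k$), so that the surviving coordinates still certify that $x^\star$ is far from \emph{all} of $C$. Everything else --- padding $J$ and $K$ to their exact sizes while keeping them disjoint inside the correct ambient sets, and the bookkeeping relating $X_{|I}$, $Y_I$, and $Y_I^{-1}$ --- is routine.
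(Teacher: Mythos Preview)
Your proposal is correct and follows essentially the same approach as the paper: pick, for each center, a single coordinate in $I_{x^\star}$ carrying at least an $R^2/d$ share of the squared distance, collect these into a $k$-set $K$, use the $(j,k,d)$-family property (with $J$ a $j$-superset of the missing coordinates) to find $I\in\mathcal{I}$ with $K\subseteq I\subseteq I_{x^\star}$, and then apply the $\alpha$-coreset guarantee of $Y_I$ on $X_{|I}$ against the restricted centers. Your version is slightly more explicit about edge cases (padding $J,K$ to exact sizes, duplicating merged centers, the degenerate regime $d<j+k$), but the core argument is identical.
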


\begin{proof}
It suffices to show that for any center set $C=\{c^1,\ldots,c^k\} \subseteq \mathbb{R}^d$ with $k$ points
and $x\in X$, if $\dist(x,C)\geq r$ for some $r \geq 0$, then we can find a coreset point $y\in \cup_{I\in \mathcal{I}} Y_I^{-1}$ such that $\dist(y,C)\geq \frac{r}{\alpha \sqrt{d}}$.

For $i\in [k]$, let $t_i\in \arg\max_{t\in I_x} |x_t-c_t^i|$, i.e., $t_i$ is the index of coordinate that contributes the most in distance $\dist(x,c^i)$, so $|x_{t_i}-c_{t_i}^i|\geq \frac{r}{\sqrt{d}}$. Let $K$ be any $k$-subset such that $K\subseteq I_x$ and $\{t_1,\ldots,t_k\}\subseteq K$. Since $\mathcal{I}$ is a $(j,k,d)$-family and $|I_x|\geq d-j$, by definition, there exists an $I\subseteq \mathcal{I}$ such that $K\subseteq I\subseteq I_x$.
We note that
\begin{align*}
    \dist(x_{|I},C_{|I})
    =\dist_I(x, C)
    = \min_{i \in [k]} \dist_I(x, c^i)
    \geq \min_{i \in [k]} \dist_K(x, c^i)
    \geq \min_{i \in [k]} |x_{t_i} - c_{t_i}^i|
    \geq \frac{r}{\sqrt{d}}.
\end{align*}

Since $I\subseteq I_x$, we know that $x_{|I} \in X_{|I}$. As $Y_I$ is an $\alpha$-coreset for $X_{|I}$,
we know that there exists $y \in Y_I^{-1}$ such that
$$\dist(y,C)\geq \dist_I(y, C)=\dist(y_{|I},C_{|I}) \geq\frac{\dist(x_{|I},C_{|I})}{\alpha}\geq \frac{r}{\alpha\sqrt{d}}.$$
\end{proof}

Next, we show the existence of a small $(j,k,d)$-family.
We remark that this combinatorial structure has been employed in designing fault-tolerant data structures and algorithms (cf.~\cite{dinitz2011fault,duan2021approximate,karthik2021deterministic}).
Similar bounds were obtained in their different contexts and languages, and here we provide a proof for completeness.

\begin{lemma} \label{lemma:family_full}
There is a $(j,k,d)$-family $\mathcal{I}$ of size
$O\left(\frac{(j+k)^{j+k+1}}{j^j k^k}\log d\right)$. Moreover, there is a randomized algorithm that constructs $\mathcal{I}$ in time $O(d\cdot |\mathcal{I}|)$ with probability at least $1-\frac{1}{d^{j+k}}$.
\end{lemma}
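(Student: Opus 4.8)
The plan is to construct the $(j,k,d)$-family by a random sampling argument: pick each $I$ to be a random subset of $[d]$ obtained by including each coordinate independently with a carefully chosen probability $p$, and take $\mathcal{I}$ to be a collection of $m$ such independent samples for a suitable $m$. For a fixed pair $(J,K)$ with $J \cap K = \emptyset$, $|J|=j$, $|K|=k$, a single random set $I$ satisfies the requirement ($I \cap J = \emptyset$ and $K \subseteq I$) exactly when all $j$ coordinates of $J$ are excluded and all $k$ coordinates of $K$ are included, which happens with probability $(1-p)^j p^k$. The optimal choice is $p = k/(j+k)$, giving success probability $\bigl(\tfrac{j}{j+k}\bigr)^j \bigl(\tfrac{k}{j+k}\bigr)^k = \tfrac{j^j k^k}{(j+k)^{j+k}}$; call this quantity $\rho$. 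Then a fixed $(J,K)$ is ``missed'' by all $m$ independent samples with probability $(1-\rho)^m \le e^{-\rho m}$.

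The next step is a union bound over all bad pairs. The number of pairs $(J,K)$ is at most $\binom{d}{j}\binom{d-j}{k} \le d^{j+k}$, so choosing $m = \lceil (j+k+1) \rho^{-1} \ln d \rceil = O\bigl(\tfrac{(j+k)^{j+k}}{j^j k^k} (j+k) \log d\bigr)$ makes the probability that some pair is missed at most $d^{j+k} \cdot e^{-\rho m} \le d^{j+k} \cdot d^{-(j+k+1)} = d^{-1}$. To reach the stated failure probability $d^{-(j+k)}$ one simply inflates the constant: take $m = \lceil (2j+2k) \rho^{-1} \ln d \rceil$, still $O\bigl(\tfrac{(j+k)^{j+k+1}}{j^j k^k} \log d\bigr)$, so that $d^{j+k} e^{-\rho m} \le d^{-(j+k)}$. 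This yields both the size bound $|\mathcal{I}| = O\bigl(\tfrac{(j+k)^{j+k+1}}{j^j k^k}\log d\bigr)$ and the existence (any positive-probability event is nonempty). The algorithmic claim is immediate: drawing $m$ random subsets of $[d]$ takes $O(d \cdot m) = O(d \cdot |\mathcal{I}|)$ time, and with probability at least $1 - d^{-(j+k)}$ the resulting family is a valid $(j,k,d)$-family — we simply output it without verification, since verification is not required.

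I do not expect a serious obstacle here; the only points needing a little care are (i) confirming that $p = k/(j+k)$ is indeed the maximizer of $(1-p)^j p^k$ (take logs, differentiate: $-j/(1-p) + k/p = 0$ gives $p = k/(j+k)$), and (ii) making sure the bound $\binom{d}{j}\binom{d-j}{k} \le d^{j+k}$ is the right crude estimate to feed the union bound — it is, and it is what produces the $\log d$ (rather than $\log\binom{d}{j+k}$, which is the same up to the $(j+k)$ factor already absorbed). One small subtlety worth a sentence in the write-up: the definition of $(j,k,d)$-family requires $K \subseteq I$ and $I \cap J = \emptyset$, and the proof of Lemma~\ref{lemma:family2coreset_full} additionally uses $I \subseteq I_x$, i.e.\ $I \subseteq [d] \setminus J$ where $J$ is the missing set; but that is automatic from $I \cap J = \emptyset$, so nothing extra is needed. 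If one wanted the cleaner constant $(j+k)^{j+k+1}/(j^j k^k)$ without the hidden $\log d$ absorbing anything, one notes $(j+k) \le (j+k)$ trivially and the extra factor of $(j+k)$ from $m \propto (j+k)\rho^{-1}\log d$ is exactly what upgrades the exponent from $j+k$ to $j+k+1$ in the numerator.
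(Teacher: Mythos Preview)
Your proposal is correct and essentially identical to the paper's proof: both sample $m = \Theta\bigl(\tfrac{(j+k)^{j+k+1}}{j^j k^k}\log d\bigr)$ independent random subsets of $[d]$ with inclusion probability $p = k/(j+k)$, compute the single-trial success probability $(\tfrac{j}{j+k})^j(\tfrac{k}{j+k})^k$, and apply a union bound over at most $d^{j+k}$ pairs $(J,K)$ to get failure probability at most $d^{-(j+k)}$. The paper's write-up is terser (it does not justify the choice of $p$ or discuss the $I \subseteq I_x$ point), but the argument is the same.
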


\begin{proof}
Set $t=\frac{(j+k)^{j+k+1}}{j^j k^k} \cdot 2 \log d$.
We add $t$ random sets into $\mathcal{I}$
where each random set is generated by independently including
each element of $[d]$ with probability $\frac{k}{j+k}$.
For a set $J\subseteq [d],|J|=j$ and a set $K\subseteq [d],|K|=k$ such that $J\cap K=\emptyset$,
the probability that a random set generated in the above way contains $K$ but avoids $J$, is
\begin{align*}
    \left(\frac{j}{j+k}\right)^j \cdot \left(\frac{k}{j+k}\right)^k.
\end{align*}
Since there are at most $d^{j+k}$ tuples of such $J$ and $K$,
by union bound and the choice of $t$, the probability that $\mathcal{I}$ is a $(j,k,d)$-family is at least
\begin{align*}
    1-d^{j+k} \left(1-(\frac{j}{j+k})^j \cdot (\frac{k}{j+k})^k\right)^t
    \geq 1-\frac{1}{d^{j+k}}
\end{align*}
\end{proof}

\paragraph{Gonzalez's algorithm yields \kCenter coreset for restricted data set.}
Finally, the \kCenter coreset for the restricted data set on each $I \in \mathcal{I}$ would be constructed using
an approximate version of Gonzalez's algorithm~\cite{gonzalez1985clustering}.
We note that while Gonzalez's algorithm was originally designed as an approximation algorithm
for \kCenter,
the approximate solution actually serves as a good coreset for \kCenter (see Lemma~\ref{lemma:approx_gonz_full}).
The assumption that the input forms a metric space is crucial in Lemma~\ref{lemma:approx_gonz_full},
and this is guaranteed since we run this variant of Gonzalez
only on a restricted data set which satisfies the triangle inequality.

\begin{lemma}[Approximate Gonzalez] \label{lemma:approx_gonz_full}
Let $(M,d)$ be a metric space.
Let $A\subset M$ be a set of $n$ points and consider the following variant of Gonzalez's greedy algorithm.
Set $B=\{b_0\}$ for an arbitrary $b_0\in A$.
Repeat for $k$ times, where each time we add a $c$-approximation of $B$'s furthest point into $B$.
Precisely, add $b_i\in A$ such that $c\cdot \dist(b_i,B) \geq \max_{a\in A} \dist(a,B)$ into $B$.
Then $B$ is a $(1 + 2c)$-coreset for \kCenter on $A$.
\end{lemma}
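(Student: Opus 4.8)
The plan is to bound the cost of an arbitrary center set $C$ with $|C|=k$ both on $A$ and on the candidate coreset $B$, and show the ratio is at most $1+2c$. Let $R := \max_{a\in A}\dist(a,C)$ and let $a^\star \in A$ attain it. First I would observe that, by construction, $B$ together with the greedy process generates $k+1$ points $b_0,\dots,b_k$, and the quantity $\rho := \max_{a\in A}\dist(a,B_{\text{final}})$ — the residual radius after all $k$ greedy steps — satisfies a key property: since at each step we insert a $c$-approximate furthest point, the pairwise distances among $b_0,\dots,b_k$ (measured at the time each was inserted) are all at least $\rho/c$ up to the approximation slack; more precisely, $\dist(b_i, \{b_0,\dots,b_{i-1}\}) \ge \rho/c$ for every $i\in[k]$, because $b_i$ is a $c$-approximate furthest point and the furthest point at step $i$ is at distance at least $\max_{a}\dist(a, B_{\text{final}}) = \rho$ from the current (smaller) set $B_{i-1}$. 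Hence the $k+1$ points $b_0,\dots,b_k$ are pairwise at distance at least $\rho/c$ from each other.

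Next I would use the pigeonhole principle: since $|C| = k$ but there are $k+1$ points $b_0,\dots,b_k$, two of them, say $b_p$ and $b_q$, are served by the same center $c^\ell \in C$. By the triangle inequality (this is where the metric assumption is essential),
\begin{align*}
    \rho/c \le \dist(b_p, b_q) \le \dist(b_p, c^\ell) + \dist(c^\ell, b_q) \le 2\max_{y\in B_{\text{final}}}\dist(y, C),
\end{align*}
so $\max_{y\in B}\dist(y,C) \ge \rho/(2c)$. Separately, since $a^\star \in A$ is at distance $R$ from $C$ and at distance at most $\rho$ from $B_{\text{final}}$, pick $b_{i^\star}\in B$ with $\dist(a^\star, b_{i^\star}) \le \rho$; then by the triangle inequality $\dist(b_{i^\star}, C) \ge \dist(a^\star, C) - \dist(a^\star, b_{i^\star}) \ge R - \rho$. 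Combining the two lower bounds gives
\begin{align*}
    \max_{y\in B}\dist(y, C) \ge \max\{R - \rho,\ \rho/(2c)\}.
\end{align*}
The right-hand side is minimized (over the free parameter $\rho\ge 0$) when $R - \rho = \rho/(2c)$, i.e.\ $\rho = \tfrac{2c}{2c+1}R$, yielding $\max_{y\in B}\dist(y,C) \ge \tfrac{R}{2c+1}$. Since trivially $\max_{y\in B}\dist(y,C) \le \max_{x\in A}\dist(x,C) = R$ because $B\subseteq A$, this shows $R \le (1+2c)\max_{y\in B}\dist(y,C)$, which is exactly the $\alpha$-coreset condition with $\alpha = 1+2c$.

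The main obstacle I anticipate is the bookkeeping in the first step — rigorously arguing that $b_0,\dots,b_k$ are pairwise $\rho/c$-separated, where $\rho$ is the \emph{final} residual radius rather than the residual radius at the time of each insertion. The point is that residual radii are non-increasing as $B$ grows, so the furthest-point distance at step $i$ is at least the final residual radius $\rho$; the $c$-approximation then only costs a factor $c$ in the separation. One should be slightly careful that if the greedy process ever achieves $\dist(a,B)=0$ for all $a$ before step $k$ (i.e.\ $A$ has fewer than $k+1$ distinct points), then $\rho=0$ and $B$ is trivially an exact coreset, so that degenerate case is handled automatically. Everything else is a routine two-line application of the triangle inequality and an elementary optimization over $\rho$.
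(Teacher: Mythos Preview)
Your proof is correct and follows essentially the same approach as the paper: both use pigeonhole on the $k+1$ points $b_0,\dots,b_k$ against the $k$ centers, the $c$-approximate greedy guarantee, and the triangle inequality. The only organizational difference is that the paper fixes $r:=\max_{b\in B}\dist(b,C)$ and directly shows $\dist(a,B)\le 2cr$ for every $a\in A$ (using the specific colliding pair $b_i,b_j$ and the greedy inequality at step $j$), whereas you introduce the residual radius $\rho$, establish the two bounds $r\ge \rho/(2c)$ and $r\ge R-\rho$, and combine; these are equivalent since $\rho\le 2cr$ together with $R\le \rho+r$ is exactly $(2c+1)r\ge R$.
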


\begin{proof}
Fix a center set $C=\{c_1, \ldots, c_k\}$ with $k$ points
and let $r := \max_{b\in B} \dist(b,C)$.
Then we have $\bigcup_{i=1}^k \mathrm{Ball}(c_i,r)$ covers $B$
where $\mathrm{Ball}(x,r)=\{y:\dist(x,y)\leq r\}$ is the ball centered at $x$ with radius $r$.
It suffices to prove that $A\subseteq\bigcup_{i=1}^k \mathrm{Ball}(c_i,(2c+1)r)$.

Since $k$ balls $B(c_1,r),\cdots,B(c_k,r)$ cover $B$ and $|B|=k+1$,
by pigeonhole principle, there exists $b_i,b_j\in B,i<j$ that are contained in a same ball $B(c_i,r)$.
W.l.o.g., we assume $b_i,b_j\in B(c_1,r)$. Now fix $a\in A\setminus B$,
since $a$ has never been added into $B$, we have
\begin{align*}
    \dist(a,B)
    &\leq \dist(a, \{b_1,\ldots,b_{j-1}\})  \\
    &\leq c\cdot \dist(b_j,\{b_1,\ldots,b_{j-1}\})  \\
    &\leq c\cdot \dist(b_i,b_j)  \\
    &\leq c\cdot (\dist(b_i,c_1)+\dist(b_j,c_1))  \\
    &\leq 2cr.
\end{align*}

Thus $A\subseteq \bigcup_{i=1}^{k+1} \mathrm{Ball}(b_i,2cr)\subseteq \bigcup_{i=1}^k \mathrm{Ball}(c_i,(2c+1)r)$.
\end{proof}

\paragraph{Dynamic implementation of Gonzalez's algorithm.}
To make this \kCenter coreset construction dynamic,
we adapt the random projection
technique to Gonzalez's algorithm, so that it suffices to dynamically execute
Gonzalez's algorithm on a set of one-dimensional lines in $\mathbb{R}^d$.

\paragraph{Random projection.}
We call a sample from the $d$-dimensional standard normal distribution $N(0,I_d)$ a $d$-dimensional \emph{random vector} for simplicity.
To implement (the variant of) Gonzalez's algorithm as in Lemma~\ref{lemma:approx_gonz_full}
in the dynamic setting,
we project the point set to several random vectors
and use one dimensional data structure to construct \kCenter coreset in
each of the one dimensional projected data set.

Note that the key step in Gonzalez's algorithm is the furthest neighbor search,
and we would show that our projection method eventually yields
an $O(k\sqrt{\log n})$-approximation of the furthest neighbor with high probability.
The following two facts about normal distribution are crucial in our argument, and Lemma~\ref{lemma:projection_full} is our main technical lemma.
\begin{fact}
    Let $u\in \mathbb{R}^d$ and let $v\sim N(0,I_d)$ be a random vector,
    then $\langle u, v/|u| \rangle \sim N(0,1)$.
\end{fact}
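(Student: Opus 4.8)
The statement is a standard property of the isotropic Gaussian, and the plan is to reduce it to a one-line variance computation. First I would observe that, writing $\hat u := u/|u|$ for the unit vector in the direction of $u$ (the case $u = 0$ being vacuous), we have $\langle u, v/|u|\rangle = \langle \hat u, v\rangle$, so it suffices to show that $\langle \hat u, v\rangle \sim N(0,1)$ for an arbitrary unit vector $\hat u$.

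The most direct route is to expand $\langle \hat u, v\rangle = \sum_{i=1}^d \hat u_i v_i$, a fixed linear combination of the independent coordinates $v_i \sim N(0,1)$, and invoke the classical fact that a linear combination of independent Gaussians is again Gaussian. To make this self-contained I would verify it through the moment generating function: by independence of the $v_i$,
\begin{align*}
    \E\!\left[e^{s\langle \hat u, v\rangle}\right]
    = \prod_{i=1}^d \E\!\left[e^{s \hat u_i v_i}\right]
    = \prod_{i=1}^d e^{s^2 \hat u_i^2 / 2}
    = e^{s^2 |\hat u|^2 / 2},
\end{align*}
which is exactly the MGF of a mean-zero Gaussian of variance $|\hat u|^2$. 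Since $\hat u$ is a unit vector, $|\hat u|^2 = 1$, and therefore $\langle \hat u, v\rangle \sim N(0,1)$, proving the claim.

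An equally short alternative is to argue by rotational invariance: extend $\hat u$ to an orthonormal basis so as to obtain an orthogonal matrix $R$ whose first column is $\hat u$. Then $\langle \hat u, v\rangle$ is precisely the first coordinate of $R^\top v$, and since $R^\top R = I_d$ and $v \sim N(0, I_d)$, the rotated vector satisfies $R^\top v \sim N(0, I_d)$; its first coordinate is thus distributed as $N(0,1)$. The hard part, such as it is, is purely the normalization bookkeeping $\sum_{i=1}^d \hat u_i^2 = 1$ that the division by $|u|$ is engineered to produce; no genuine obstacle arises, since both the closure of the Gaussian family under linear combinations and the rotational invariance of $N(0,I_d)$ are entirely standard.
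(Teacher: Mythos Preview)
Your proof is correct; both the MGF computation and the rotational-invariance argument are standard and valid. The paper itself does not prove this statement at all---it is stated as a \emph{Fact} and left without proof, since it is a classical property of the isotropic Gaussian. So there is nothing to compare against: you have simply supplied a self-contained justification where the paper invokes common knowledge.
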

\begin{fact} \label{Gaussian_full}
Let $Z\sim N(0,1)$, then there exists some universal constant $c>0$ such that $P[|Z|\leq \frac{1}{k}]\leq \frac{c}{k}$, and $P[|Z|\geq t]\leq e^{-c\cdot t^2}$ for any $t>0$.
\end{fact}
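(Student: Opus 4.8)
The plan is to prove the two inequalities separately from elementary estimates on the standard normal density $\phi(x)=\tfrac{1}{\sqrt{2\pi}}e^{-x^2/2}$ and its tail, and then to record a universal constant adequate for each. The anti-concentration bound is essentially immediate: bounding $\phi$ pointwise by its maximum $\phi(0)=\tfrac1{\sqrt{2\pi}}$ gives $P[|Z|\le 1/k]=\int_{-1/k}^{1/k}\phi(x)\,dx\le \tfrac{2}{k}\cdot\tfrac1{\sqrt{2\pi}}=\sqrt{2/\pi}\cdot\tfrac1k$, which has the claimed form (say with constant $1$). No further work is needed here.

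For the tail bound I would in fact prove the clean statement $P[|Z|\ge t]\le e^{-t^2/2}$ for every $t\ge 0$, so that the fact holds with $c=1/2$, via a short monotonicity argument. Writing $\Phi$ for the standard normal cdf, set $F(t):=e^{-t^2/2}-2\bigl(1-\Phi(t)\bigr)=e^{-t^2/2}-P[|Z|\ge t]$, so $F(0)=0$. The key computation $F'(t)=-te^{-t^2/2}+2\phi(t)=e^{-t^2/2}\bigl(\sqrt{2/\pi}-t\bigr)$ shows that $F$ increases on $[0,\sqrt{2/\pi}]$ and decreases afterwards, so $F\ge 0$ on $[0,\sqrt{2/\pi}]$. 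For $t>\sqrt{2/\pi}$ I would invoke the classical Mills-ratio inequality $1-\Phi(t)\le \phi(t)/t$, itself obtained from the domination $\int_t^\infty e^{-x^2/2}\,dx\le \int_t^\infty \tfrac{x}{t}e^{-x^2/2}\,dx=\tfrac1t e^{-t^2/2}$; this yields $P[|Z|\ge t]\le \sqrt{2/\pi}\,t^{-1}e^{-t^2/2}<e^{-t^2/2}$, i.e.\ $F(t)>0$ there as well. Combining the two ranges gives $F\ge 0$ on $[0,\infty)$, as desired.

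There is no serious obstacle; the two points that warrant a little care are (i) the small-$t$ regime of the tail bound, where $e^{-t^2/2}$ is close to $1$ and one must rule out a crossing with $P[|Z|\ge t]$ — this is exactly what the sign analysis of $F'$ together with $F(0)=0$ accomplishes — and (ii) the bookkeeping of constants, since the density estimate naturally uses a constant near $\sqrt{2/\pi}$ while the tail estimate uses $1/2$. The latter is harmless: each inequality in the statement is only ever applied with its own constant, so it suffices to fix a suitable universal value for each (equivalently, one may state the fact with two universal constants). All remaining steps are routine one-variable calculus with the Gaussian density and cdf.
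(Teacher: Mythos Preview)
Your argument is correct. The paper does not supply a proof of this fact at all; it is recorded as a standard property of the Gaussian and then invoked in the proof of Lemma~\ref{lemma:projection_full}. Your density bound for the anti-concentration part and your monotonicity/Mills-ratio split for the tail part are both valid and entirely standard.

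One remark on your point (ii): as you noticed, the literal reading of the statement asks for a \emph{single} constant $c$ serving both roles, and indeed no such $c$ exists (the first inequality forces $c\ge\sqrt{2/\pi}\approx0.798$ while the second forces $c\le 1/2$). Your resolution is exactly the right one in context: in the only place the fact is used (the proof of Lemma~\ref{lemma:projection_full}), the two parts are applied separately and the constants are in any case absorbed into the $\Omega(1/k)$ and $O(\sqrt{\log n})$ notation, so the fact should be read with two independent universal constants. You might make this explicit in a single sentence rather than leaving it parenthetical.
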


\begin{lemma} \label{lemma:projection_full}
Let $X\subset \mathbb{R}^d,|X|=n$, $\delta > 0$ and integer $k \geq 1$.
Let $\mathcal{V}$ be a collection of $t=O(k\log n+\log\delta^{-1})$  random vectors in $\mathbb{R}^d$. Then with probability $1-\delta$, for every $C\subseteq X,|C|\leq k$ and every $x\in X$, there exists a vector $v\in \mathcal{V}$ such that (i)
$|x\cdot v-c\cdot v|\geq \Omega(\frac{1}{k})\cdot  \|c-x\|_2$ for every $c\in C$ and (ii) $|a\cdot v-b\cdot v|\leq O(\sqrt{\log n})\cdot \|a-b\|_2$ for every $a,b\in X$.
\end{lemma}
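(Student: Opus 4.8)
The plan is to prove properties (i) and (ii) separately, each with probability at least $1-\delta/2$, and then take a union bound. A useful observation is that (ii) is a \emph{global} (non-expansion) statement about all pairs and all projection directions, so I will in fact show that with probability $\geq 1-\delta/2$ \emph{every} $v\in\mathcal{V}$ satisfies (ii) for all $a,b\in X$ simultaneously; then, whichever $v$ is produced as the witness for (i), it will automatically inherit (ii). Thus on the intersection of the two good events, for every $x$ and every $C$ the witness required by the lemma exists, and the total failure probability is at most $\delta$.

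For property (ii): fix $v\in\mathcal{V}$ and a pair $a,b\in X$. By the first Fact (rotational invariance of the Gaussian), $(a-b)\cdot v/\|a-b\|_2\sim N(0,1)$, so by the tail bound in Fact~\ref{Gaussian_full}, $\Pr[\,|(a-b)\cdot v|>\tau\|a-b\|_2\,]\le e^{-c\tau^2}$. Taking a union bound over the $\le t$ vectors in $\mathcal{V}$ and the $\le n^2$ pairs, the probability that (ii) fails for some triple is at most $tn^2 e^{-c\tau^2}$, which is at most $\delta/2$ for $\tau=\Theta\big(\sqrt{\log(tn^2/\delta)}\big)$; in the relevant parameter regime ($k,\delta^{-1}$ polynomially bounded) this is $\tau=O(\sqrt{\log n})$, as claimed.

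The delicate part, and the one I expect to be the main obstacle, is property (i), since the \emph{same} $v$ must work for all of the (up to $k$) points of $C$ at once. Fix $x\in X$ and a set $C\subseteq X$ with $|C|\le k$. For a single random $v$ and a single $c\in C$, anti-concentration of the standard Gaussian (its density is bounded by $1/\sqrt{2\pi}$; equivalently Fact~\ref{Gaussian_full}, suitably scaled) gives $\Pr\big[\,|(x-c)\cdot v|\le \tfrac{1}{2k}\|x-c\|_2\,\big]\le \tfrac{1}{2k\sqrt{2\pi}}<\tfrac{1}{4k}$. Since $|C|\le k$, a union bound over $c\in C$ shows that a single $v$ fails to satisfy (i) for the fixed pair $(x,C)$ with probability at most $c_0<1$ for a universal constant $c_0$. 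Because the $t$ vectors of $\mathcal{V}$ are independent, the probability that \emph{no} $v\in\mathcal{V}$ works for this particular $(x,C)$ is at most $c_0^{\,t}=e^{-\Omega(t)}$.

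Finally, I would union bound over all pairs $(x,C)$: there are $n$ choices of $x$ and $\sum_{i=0}^{k}\binom{n}{i}=n^{O(k)}$ choices of $C$, so at most $n^{O(k)}$ pairs in total, and hence property (i) fails for some pair with probability at most $n^{O(k)}\cdot e^{-\Omega(t)}$, which is at most $\delta/2$ once $t=\Omega(k\log n+\log\delta^{-1})$ --- exactly the stated bound on $|\mathcal{V}|$. Combining with the property-(ii) event completes the proof. The two points requiring care are precisely: (a) the simultaneity over $C$ in (i), which works only because $|C|\le k$ while the per-point failure probability is $O(1/k)$, leaving a constant success probability per vector to amplify over independent draws; and (b) the fact that the $n^{O(k)}$ union-bound terms for (i) are matched by the $k\log n$ term in $|\mathcal{V}|$, which is what makes the sample count essentially tight.
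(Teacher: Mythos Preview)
Your argument is sound, but it takes a different decomposition from the paper's, and the caveat you flag (``in the relevant parameter regime'') is exactly where the two diverge.

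The paper does \emph{not} separate (i) and (ii). Instead, for a \emph{fixed} pair $(x,C)$ and a \emph{single} random $v$, it shows that both (i) (union bound over the $\le k$ points of $C$, each failing with probability $\le 1/(4k)$) and (ii) (union bound over the $\le n^2$ pairs in $X$, each failing with probability $\le 1/(4n^2)$) hold simultaneously with probability at least $1/2$. It then amplifies over the $t$ independent vectors to drive the failure for that $(x,C)$ down to $2^{-t}$, and finally union-bounds over the $\le (n+1)^{k+1}$ choices of $(x,C)$.

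Your decomposition---demand (ii) globally for \emph{every} $v\in\mathcal{V}$, then find a witness for (i)---is clean to state, but forces a union bound for (ii) over all $t$ vectors. That is precisely why your threshold becomes $\tau=\Theta\big(\sqrt{\log(tn^2/\delta)}\big)$ rather than the unconditional $O(\sqrt{\log n})$ in the lemma, and why you need the side assumption $k,\delta^{-1}=\poly(n)$. The paper's bundling avoids this: since (ii) need only hold for the \emph{witness} $v$, not for all of $\mathcal{V}$, the $t$-factor never enters the (ii) bound. If you want to match the lemma exactly, adopt the per-vector bundling; as written, your argument proves the slightly weaker statement with $O\big(\sqrt{\log n+\log k+\log\delta^{-1}}\big)$ in place of $O(\sqrt{\log n})$ in (ii).

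One minor arithmetic slip: $\Pr[|Z|\le 1/(2k)]\le \tfrac{1}{k\sqrt{2\pi}}$, not $\tfrac{1}{2k\sqrt{2\pi}}$. This is harmless, since $1/\sqrt{2\pi}<1/2$ still gives per-vector success probability $>1/2$ after the union bound over $|C|\le k$, so your amplification step goes through unchanged.
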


\begin{proof}
Fix a subset $C\subseteq X,|C|\leq k$, a point $x$ and a random vector $v$. For every $c\in C$, since $(c-x)\cdot v/\|c-x\|_2\sim N(0,1)$, by Fact~\ref{Gaussian_full}, the probability that $|c\cdot v-x\cdot v|\geq \Omega(\frac{1}{k})\cdot \|c-x\|_2
$ is at least $1-\frac{1}{4k}$. For every $a,b\in X$, since $(a-b)\cdot v/\|a-b\|_2\sim N(0,1)$, by Fact~\ref{Gaussian_full}, the probability that $|a\cdot v-b\cdot v|\leq \sqrt{\log n}\|a-b\|_2$ is at most $\frac{1}{4n^2}$.

Since there are $k$ choices of $c\in C$ and at most $n^2$ choices of $a,b\in X$, by union bound, with probability at least $1-k\cdot \frac{1}{4k}-n^2\cdot \frac{1}{4n^2}=\frac{1}{2}$, the following two events hold,
(i) $|x\cdot v-c\cdot v|\geq \Omega(\frac{1}{k})\cdot  \|c-x\|_2$ for every $c\in C$ and (ii) $|a\cdot v-b\cdot v|\leq O(\sqrt{\log n})\cdot \|a-b\|_2$ for every $a,b\in X$.

Now since $\mathcal{V}$ contains $t$ random vectors, the probability that there exists one vector $v\in \mathcal{V}$ that satisfies (i) and (ii) is at least
$
1-\frac{1}{2^t}
$.

Finally, by union bound, since there are at most $(n+1)^{k+1}$ choices of $C\subseteq X,|C|=k$ and $x\in X$, the probability such that for every $C$ and $x$, there exists $v\in \mathcal{V}$ such that (i) and (ii) happen is at least $1-\frac{(n+1)^{k+1}}{2^t}\geq 1-\delta$.
\end{proof}

In the next lemma, we present a dynamic algorithm that
combines the random projection idea with a one-dimensional data structure.
This combining with the $(j, k, d)$-family idea would immediately imply Lemma~\ref{lemma:dynamic_full}.

\begin{lemma}
    \label{lemma:dynamicI_full}
    There is a dynamic algorithm that for every $P \subseteq \mathbb{R}^m$ subject to at most $q$ adaptive point insertions and deletions where the set of points ever added is fixed in advance,
    and every $\delta > 0$,
    maintains set $Q \subseteq P$ with $|Q| \leq k+1$ such that with probability at least $1 - \delta$,
    $Q$ is an $O(k\sqrt{\log q})$-coreset for \kCenter on $P$ after every update,
    in time $O\big((k^2\log q+m)(k\log q+\log\delta^{-1})\big)$ per update.
\end{lemma}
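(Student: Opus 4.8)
The plan is to run the greedy variant of Gonzalez's algorithm from Lemma~\ref{lemma:approx_gonz_full} on $P\subset\mathbb R^m$ with the Euclidean metric, using an approximate--furthest--point oracle that, after preprocessing, never touches the $m$ coordinates again. Concretely, I first fix once and for all a collection $\mathcal V$ of $t=O(k\log q+\log\delta^{-1})$ i.i.d.\ $m$-dimensional random vectors, as supplied by Lemma~\ref{lemma:projection_full} (invoked with ground set $X$ equal to the fixed set of points ever added, failure probability $\delta$, and cluster parameter $k+1$ rather than $k$, which only changes constants). For each $v\in\mathcal V$ I maintain a balanced search tree $T_v$ holding the multiset $\{\langle p,v\rangle:p\in P\}$ and supporting predecessor/successor/min/max in $O(\log q)$ time, and with every point ever inserted I store its vector of $t$ projections $(\langle p,v\rangle)_{v\in\mathcal V}$, computed in $O(tm)$ time at insertion. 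Handling one update (insertion or deletion of $p$) then costs $O(tm)$ for the projections plus $O(t\log q)$ for the $t$ tree operations, after which I recompute $Q$ from scratch.

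To recompute $Q$ I run the loop of Lemma~\ref{lemma:approx_gonz_full}: start with $B=\{b_0\}$ for an arbitrary $b_0\in P$, and $k$ times replace $B$ by $B\cup\{\textsc{Afp}(B)\}$, finally outputting $Q:=B$, so $|Q|\le k+1$. The oracle $\textsc{Afp}(B)$, for $|B|\le k+1$, works entirely in the one-dimensional projections: for each $v\in\mathcal V$ it reads off the sorted values $B_v:=\{\langle b,v\rangle:b\in B\}$ and computes $a_v\in\arg\max_{a\in P}\dist_v(a,B)$, where $\dist_v(a,B):=\min_{b\in B}|\langle a,v\rangle-\langle b,v\rangle|$. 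This is a routine one-dimensional computation, since that maximiser is attained at $\min T_v$, at $\max T_v$, or at the element of $T_v$ closest to a midpoint of two consecutive values of $B_v$; only $O(k)$ predecessor/successor queries are needed, costing $O(k\log q)$ per $v$. Writing $M_v:=\dist_v(a_v,B)$, the oracle returns $\hat a:=a_{\hat v}$ with $\hat v\in\arg\max_{v\in\mathcal V}M_v$. A call thus costs $O(tk\log q)$, recomputing $Q$ costs $O(tk^2\log q)$, and adding the per-update $O(tm+t\log q)$ gives exactly the claimed bound $O\big((k^2\log q+m)(k\log q+\log\delta^{-1})\big)$.

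For correctness I will argue that, on a single event of probability $\ge 1-\delta$ over $\mathcal V$, $\textsc{Afp}(B)$ always returns a $c$-approximate furthest point of $B$ with $c=O(k\sqrt{\log q})$ (here ``always'' over all $\le qk$ calls and all subsets $B\subseteq P$, $|B|\le k+1$, that can ever arise — note that this event covers every pair $(B,a^\star)$ regardless of the adaptive update sequence, so adaptivity is harmless once the ground set $X$ is fixed non-adaptively). Lemma~\ref{lemma:approx_gonz_full} then immediately yields that $Q$ is a $(1+2c)=O(k\sqrt{\log q})$-coreset for \kCenter on $P$ after every update. The event combines (a) the conclusion of Lemma~\ref{lemma:projection_full}, and (b) the statement that no $v\in\mathcal V$ stretches any pairwise distance among the $\le q$ ever-added points by more than a factor $O(\sqrt{\log q})$ — which is exactly property (ii) of Lemma~\ref{lemma:projection_full} but asserted for every $v\in\mathcal V$, and follows from the same Gaussian-tail estimate (Fact~\ref{Gaussian_full}) and a union bound over $\le tq^2$ events. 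Given this event, fix a call with $a^\star\in\arg\max_{a\in P}\dist(a,B)$. Property (i) of Lemma~\ref{lemma:projection_full} applied to the pair $(B,a^\star)$ yields some $v^\star\in\mathcal V$ with $\dist_{v^\star}(a^\star,B)\ge\Omega(1/k)\cdot\dist(a^\star,B)$, hence $M_{\hat v}\ge M_{v^\star}\ge\dist_{v^\star}(a^\star,B)\ge\Omega(1/k)\cdot\dist(a^\star,B)$. Conversely, with $b^\circ\in\arg\min_{b\in B}\|\hat a-b\|_2$, property (b) applied to $\hat v$ and the pair $(\hat a,b^\circ)$ gives $M_{\hat v}=\dist_{\hat v}(\hat a,B)\le|\langle\hat a,\hat v\rangle-\langle b^\circ,\hat v\rangle|\le O(\sqrt{\log q})\cdot\|\hat a-b^\circ\|_2=O(\sqrt{\log q})\cdot\dist(\hat a,B)$. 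Chaining these, $\dist(\hat a,B)\ge\Omega\!\big(\tfrac{1}{k\sqrt{\log q}}\big)\cdot\dist(a^\star,B)$, and rescaling $\delta$ by a constant absorbs the two failure probabilities.

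The step I expect to be the main obstacle is precisely this correctness argument for the $\arg\max$ selection rule: since $\hat v$ is chosen after inspecting the data, $\langle\hat a-b^\circ,\hat v\rangle$ cannot be treated as a fresh Gaussian, so the ``no over-expansion'' bound must be established uniformly over all of $\mathcal V$ (the mild strengthening of Lemma~\ref{lemma:projection_full}(ii) above); the interval data structure, the one-dimensional furthest-point routine, and the time accounting are all routine. For a completely arbitrary $\delta$ the approximation factor degrades to $O(k\sqrt{\log(q/\delta)})$, which is immaterial in every application, where $\delta$ is polynomially bounded.
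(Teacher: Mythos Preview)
Your argument follows the paper's strategy—random Gaussian projections plus a dynamic Gonzalez loop with a one-dimensional furthest-point oracle—and is essentially correct, but it differs from the paper in one design choice with consequences on both sides. After retrieving the candidate $x_v$ from each tree $\mathcal T_v$, the paper selects $v^\star=\arg\max_{v\in\mathcal V}\dist(x_v,Q)$ using the \emph{actual} Euclidean distance in $\mathbb{R}^m$, whereas you select $\hat v=\arg\max_v M_v$ using the \emph{projected} distance. The paper's rule lets the analysis stay entirely inside Lemma~\ref{lemma:projection_full} as stated: the ``good'' $v$ guaranteed there satisfies both (i) and (ii), so its candidate already has true distance $\Omega\!\big(\tfrac{1}{k\sqrt{\log q}}\big)\dist(a^\star,Q)$, and maximising over true distances can only help—no uniform-over-$\mathcal V$ version of (ii) is needed, and the $O(k\sqrt{\log q})$ factor holds for every $\delta>0$ exactly as the lemma claims. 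Your selection rule forces the extra event (b), which (as you honestly flag) degrades the factor to $O(k\sqrt{\log(q/\delta)})$ for unrestricted $\delta$, so strictly speaking you have not proved the lemma as written. Conversely, your rule is cheaper: evaluating $\dist(x_v,Q)$ in $\mathbb{R}^m$ costs $O(km)$ per candidate and hence $O(lk^2m)$ over a full \textsc{Get-Coreset} call, a term the paper's running-time accounting silently drops; your projected-distance selection genuinely fits the stated $O\big((k^2\log q+m)(k\log q+\log\delta^{-1})\big)$ per update.
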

\begin{proof}[Proof of Lemma~\ref{lemma:dynamic_full}]
We present our dynamic algorithm in Algorithm~\ref{alg:dynamic_full}.
    \begin{algorithm}[ht]
        \caption{Dynamic \kCenter coreset with missing values}
        \label{alg:dynamic_full}
        \begin{algorithmic}[1]
            \Procedure{Init}{}
                \State let $\mathcal{I}$ be a $(j, k, d)$-family generated
                by sampling, as in Lemma~\ref{lemma:family_full}
                
                \Comment{$|\mathcal{I}| = O\left(\frac{(j+k)^{j+k+1}}{j^j k^k}\log d\right)$}
                \State $\forall I \in \mathcal{I}$, initialize data structure $\mathcal{D}_I$ using 
                    Algorithm~\ref{alg:dynamicI_full} (Lemma~\ref{lemma:dynamicI_full}) with failure probability $\delta := \Theta\left(\frac{1}{|\mathcal{I}|}\right)$,
                    and initialize $Y_I = \emptyset$
            \EndProcedure
            \Procedure{Update}{$x \in \mathbb{R}^d_{?}$}
                \For{$I \in \mathcal{I}$}
                    \State $\mathcal{D}_I.\textsc{Update}(x_{|I})$
                    \State $Y_I \gets \mathcal{D}_I.\textsc{Get-Coreset}(k)$
                \EndFor
                \Comment{we use $\textsc{Update}$ and $\text{Get-Coreset}$ in Algorithm~\ref{alg:dynamicI_full}}
            \EndProcedure
            \Procedure{Get-Coreset}{}
                \State return $\bigcup_{I \in \mathcal{I}}{Y_I^{-1}}$
                \Comment{as in Lemma~\ref{lemma:family2coreset_full}}
            \EndProcedure
        \end{algorithmic}
    \end{algorithm}

    \paragraph{Analysis.}
    Since we pick $\delta = \Theta\left(\frac{1}{|\mathcal{I}|}\right)$ for all $\mathcal{D}_I$'s,
    with constant probability all data structures $\mathcal{D}_I$'s succeed simultaneously.
    The running time follows immediately from Lemma~\ref{lemma:family_full} and
    Lemma~\ref{lemma:dynamicI_full}.
    The coreset accuracy follows from Lemma~\ref{lemma:family2coreset_full} and
    Lemma~\ref{lemma:dynamicI_full} (noting that we need to suffer a $\sqrt{d}$ factor because of Lemma~\ref{lemma:family2coreset_full}).
\end{proof}
\begin{proof}[Proof of Lemma~\ref{lemma:dynamicI_full}]
We assume there is a data structure $\mathcal{T}$ that maintains a set of real numbers
    and supports the following operations, all running in $O(\log n)$ time
    where $n$ is the number of elements currently present in the structure.
    \begin{itemize}
        \item $\textsc{Remove}(x)$: Remove an element $x$ from the structure.
        \item $\textsc{Add}(x)$: Add an element $x$ to the structure.
        \item $\textsc{UpperBound}(x)$: Return the largest element that is at most $x$.
        \item $\textsc{LowerBound}(x)$: Return the smallest element that is at least $x$.
    \end{itemize}
    Note that such $\mathcal{T}$ may be implemented by using a standard balanced binary tree.
    \paragraph{Furthest point query.}
    We also need $\textsc{Furthest}(C)$ query,
    where $C \subset \mathbb{R}$ and it asks for an element $x$ that has the largest distance to $C$ (and it should return an arbitrary element if $C = \emptyset$).
    This $\textsc{Furthest}(C)$ can be implemented by
    using $O(|C|)$ many $\textsc{UpperBound}$ and $\textsc{LowerBound}$ operations,
    which then takes $O(|C| \log n)$ time in total.
    To see this, assume $C=\{c_1,\ldots,c_k\}$ where $c_1\leq \ldots \leq c_k$ then
    the clusters partitoned by $C$ is $(-\infty,\frac{1}{2}(c_1+c_2)],(\frac{1}{2}(c_1+c_2),\frac{1}{2}(c_2+c_3)],\cdots,(\frac{1}{2}(c_{k+1}+c_k),+\infty)$ and we can find the potential furthest points in each cluster by querying the following,
    \begin{eqnarray*}
    &&\textsc{UpperBound}(-\infty),\textsc{LowerBound}\left(\frac{1}{2}(c_1+c_2)\right),\\
    &&\textsc{UpperBound}\left(\frac{1}{2}(c_1+c_2)\big),\textsc{LowerBound}\big(\frac{1}{2}(c_2+c_3)\right)\\
    &&\ldots\\
    &&\textsc{UpperBound}\left(\frac{1}{2}(c_{k+1}+c_k)\right),\textsc{LowerBound}(+\infty)
    \end{eqnarray*}
    and the furthest point to $C$ among the above $2k=O(|C|)$ many points is what we seek for.

    The dynamic algorithm is presented in Algorithm~\ref{alg:dynamicI_full}.
    The algorithm samples a set of independent random vectors $\mathcal{V}$ (in a data oblivious way),
    then creates an above-mentioned interval structure $\mathcal{T}_v$
    for each $v \in \mathcal{V}$.
    When we insert/delete a point $x$,
    the update is performed on every $\mathcal{T}_v$ with the projection $\langle x, v \rangle$.
    The coreset for the current data set $P$ can be computed on the fly
    by simulating the Gonzalez's algorithm.
    In particular, this is where the Furthest query is used,
    and we find an approximate furthest point in $P$ by taking the furthest
    point in each $\mathcal{T}_v$, and select the one that is the relative furthest in $P$.

    \begin{algorithm}[ht]
        \caption{Dynamic Gonzalez's algorithm}
        \label{alg:dynamicI_full}
        \begin{algorithmic}[1]
            \Procedure{Init}{} \Comment{initialize an empty structure}
                \State $l \gets O(k\log q+\log \delta^{-1})$,
                and draw $l$ independent random vectors in $\mathbb{R}^m$, denotes as $\mathcal{V}$
                \State initialize $\mathcal{T}_v$ for each $v \in \mathcal{V}$
\EndProcedure
            \Procedure{Update}{$x$}
                \State insert/delete $\langle x, v\rangle$
                for each $v \in \mathcal{V}$
            \EndProcedure
            \Procedure{Get-Coreset}{$k$}
                \State $Q \gets \emptyset$
                \For{$i = 1, \ldots, k+1$}
                    \State for $v\in \mathcal{V}$, let $x_v \in P$ satisfy
                    $\langle x_v, v \rangle = \mathcal{T}_v.\textsc{Furthest}(\langle Q, v\rangle)$

                    \Comment{where $\langle Q, v\rangle := \{ \langle x, v \rangle : x \in Q \}$}
                    \State $v^\star \gets \arg\max_{v \in \mathcal{V}} \dist(x_v, Q)$
                    \State $Q \gets Q \cup \{ x_{v^\star} \}$
                \EndFor
                \State \Return $Q$
            \EndProcedure
        \end{algorithmic}
    \end{algorithm}

    \paragraph{Analysis.}
    Let $A$ be the set of points ever added, so $|A|\leq q$. Recall that $A$ is fixed in advance.
    By applying Lemma~\ref{lemma:projection_full} in $A$, we know that
    with probability $1-\delta$, the following event $\mathcal{E}$ happens.
    For every $C\subseteq A, |C|\leq k$, every $x\in A$,
    there exists $v\in \mathcal{V}$, such that
    \begin{itemize}
        \item[(i)] $|\langle c - x, v \rangle|\geq \Omega(\frac{1}{k})\cdot \|x-c\|_2$ for every $c\in C$, and
        \item[(ii)] $|\langle a - b, v \rangle|\leq O(\sqrt{\log q})\cdot \|a-b\|_2$ for every $a,b\in A$.
    \end{itemize}

    Now condition on $\mathcal{E}$.
    Suppose the current point set is $P$.
    Suppose we run the \textsc{Get-Coreset} subroutine and we query $\mathcal{T}_v.\textsc{Furthest}(\langle Q, v \rangle)$ for some $v$ and $Q$.
    Suppose $x\in P\subseteq A$ is the current furthest point to $Q$. Because of $\mathcal{E}$,
    there exists a vector $v\in \mathcal{V}$ such that (i) and (ii) hold.
    By (i), we have that $\dist(\langle x, v \rangle, \langle Q, v\rangle)\geq \Omega(\frac{1}{k})\cdot \dist(x,Q)$.
    By (ii), we know that for any $p\in P$ and $c\in Q$, $|\langle p - c, v \rangle|\leq O(\sqrt{\log q})\|p-c\|_2$,
    so $\dist(\langle p, v \rangle, \langle Q, v\rangle)\leq O(\sqrt{\log q})\cdot \dist(p, Q)$.
    So if $\mathcal{T}_v.\textsc{Furthest}(\langle Q, v\rangle)$ returns an answer $\langle p, v \rangle$, we know that
    \begin{align*}
        \dist(p,Q)
        \geq \frac{\dist(\langle p, v \rangle, \langle Q, v\rangle)}{O(\sqrt{\log q})}
        \geq\frac{\dist(\langle x, v \rangle, \langle Q, v\rangle)}{O(\sqrt{\log q})}
        \geq \Omega\left(\frac{1}{k\sqrt{\log q}}\right)\cdot \dist(x,Q).
    \end{align*}
    Thus, $p$ is an $O(k\sqrt{\log q})$-approximation of the furthest point to $Q$.
    This combining with Lemma~\ref{lemma:approx_gonz_full}.
    implies the error bound.

    \paragraph{Running time.}
    For the running time, we note that for each update of $P$, we need to update $\mathcal{T}_v$ for each $v\in \mathcal{V}$ accordingly.
    Thus we need to pay $O(lm)$ time (recalling that
    $l = O(k\log q+\log \delta^{-1})$ was defined in Algorithm~\ref{alg:dynamicI_full})
    to compute all the inner products and $O(l \log q)$ time to update all $\mathcal{T}_v$'s.
    The main loop in \textsc{Get-Coreset} requires $O(kl)$ many
    $\textsc{Furthest}(\cdot)$ queries and this runs in $O(k^2l\log q)$ time in total.
    In conclusion, the running time of each update (and maintaining coreset) is bounded by
    \begin{align*}
        O\left((k^2\log q+m)\cdot l\big)=O\big((k^2\log q+m)(k\log q+\log\delta^{-1})\right).
    \end{align*}
\end{proof}

 \section{Experiments}
\label{sec:exp}
We implement our proposed coreset construction algorithm, and we evaluate its performance on real and synthetic datasets.
We focus on \kMeans with missing values, and we examine the speedup for a Lloyd's-style heuristic.
In addition to measuring the absolute performance of our coreset,
we also compare it with a) uniform sampling baseline,
which is a naive way to construct coresets, and
b) an imputation-based baseline where missing values are filled
in by random values and then a standard importance-sampling coreset construction (cf.~\cite{DBLP:conf/stoc/FeldmanL11})
is run on top of it.
We implement the algorithms using C++ 11, on a laptop with Intel i5-8350U CPU and 8GB RAM.

\paragraph{Datasets.}
We run our experiments on three real datasets and one synthetic dataset.
Below, we briefly describe how we process and choose the attributes of the dataset,
and the parameters of the datasets after processing are summarized in Table~\ref{table:dataset}.
\begin{compactenum}
\item Russian housing~\cite{RussianHouse} is a dataset on Russian house market. We pick four main numerical attributes of the houses which are the full area, the live area, the kitchen area and the price,
    and the price attribute is divided by $10^5$ so as it lies in the similar range of other attributes. Three columns regarding area contain missing values, and the price column doesn't contain any missing value.
\item KDDCup 2009~\cite{KDD09} is a dataset on customer relationship prediction. We pick $31$ numerical attributes that have similar magnitudes. Each column contains missing values.
    \item Vertical farming~\cite{VerticalFarming} is a dataset about cubes which are used for advanced vertical farming. We include all of four numerical attributes of the dataset. Each column contains missing values.
    \item Synthetic dataset. We generate a large synthetic dataset to validate our algorithm's scalability.
    Data points are randomly generated so that $97\%$ of them are in a square and $3\%$ of them are far away from the square.
    After that, we delete $25\%$ of attributes at random.
    We remark that the $3\%$ far away points is to make the dataset less uniform which prevents it from being trivial for clustering.
\end{compactenum}

\begin{table}[t]
  \caption{Parameters of the datasets. $n$ is the number of data points,
  $d$ is the dimension, $k$ is the number of clusters, $j$ is the maximum number of missing coordinates for each point. $n$,$d$,$j$ are given, and $k$ is chosen by us.}
  \label{table:dataset}
  \centering
  \begin{tabular}{lllll}
    \toprule
    Data set     & $n$ & $d$ & $k$ & $j$   \\
    \midrule
    Russian housing & 30471 & 4 & 3 & 3       \\
    KDD cup     & 50000 & 31 & 5 & 30     \\
    Vertical farming & 400180 & 4 & 2 & 4 \\
    Synthetic & 200000 & 3& 3 & 3 \\
    \bottomrule
  \end{tabular}
\end{table}

\paragraph{Implementation notes.}
In our experiments, we follow a standard practice of fixing coreset size in each experiment (cf.~\cite{baker2020coresets,pmlr-v119-jubran20a}). Recall that when computing the importance score,
our algorithm chooses a family $\mathcal{I} $ of subsets of coordinates and
work on each restricted data set $X_{|I}$ for $I \in \mathcal{I}$. For a fixed size coreset, the family size $|\mathcal{I}|$ is a parameter that needs to be optimized. In Figure~\ref{fig:tune}, we plot the empirical error (defined in (\ref{eqn:empirical_error}), Section~\ref{sec:accuracy}) for the Russian housing dataset
with respect to the family size $|\mathcal{I}|$. Although Lemma 3.4 gives a theoretical upper bound on $|\mathcal{I}|$ but our experiments suggest that a much smaller size $|\mathcal{I}|=20$ is optimal in this case.

\begin{figure}[t]
    \centering
    \captionsetup{font=small}
    \includegraphics[width=0.4\textwidth]{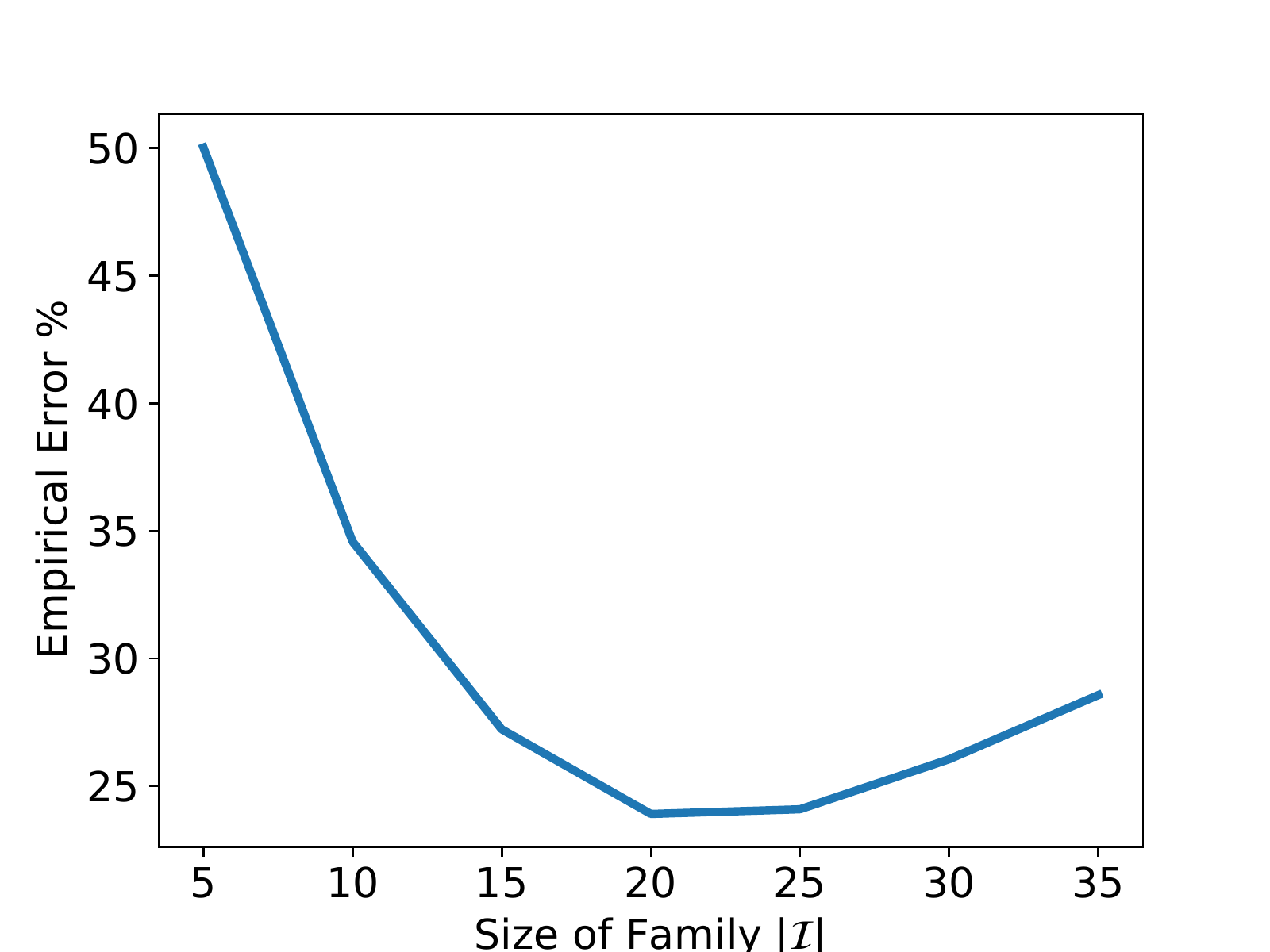}
    \caption{The average empirical error of Russian housing data set with respect to family size $|\mathcal{I}|$ on $10$ independent experiments.}
    \label{fig:tune}
\end{figure}

\subsection{Accuracy of Coresets}
\label{sec:accuracy}

We evaluate the accuracy versus size tradeoff of our coresets.
Since the coreset should preserve the clustering cost for \emph{all} centers,
we evaluate the accuracy by testing the \emph{empirical error} on a selected set of centers $\mathcal{C}$.
Namely, for a data set $X$, a coreset $D\subseteq X$ and a collection of center sets $\mathcal{C}$,
we define the empirical error of $D$ as
\begin{align}
    \mathrm{err}(D)=\max_{C\in \mathcal{C}}\frac{|\mathrm{cost}(D,C)-\mathrm{cost}(X,C)|}{\mathrm{cost}(X,C)}.
    \label{eqn:empirical_error}
\end{align}
We use a randomly selected collection of centers $\mathcal{C}$ that consists of $100$ randomly generated $k$-subset $C \subset \mathbb{R}^d$.
Since both the evaluation method and the algorithm has randomness,
we run the experiment for $T=10^3$ times with independent random bits and report the average empirical error to make it stable.
We choose $20$ different coreset sizes from $200$ to $9700$ in a step size of $500$,
and report the corresponding average empirical error.

\paragraph{Results.}
We report the size versus accuracy tradeoff of our coreset for all four datasets in Figure~\ref{fig:accu},
and record the standard deviation in Figure~\ref{fig:std}.
We compare these results against the abovementioned
uniform sampling and imputation baseline.
As can be seen from the figures,
the accuracy of our coreset improves when the size increases,
and we achieve $5\%$-$20\%$ error using only $2000$ coreset points
(which is within $0.5\% - 5\%$ of the datasets).
This $5\%$-$20\%$ error is likely to be enough for practical use,
since practical algorithms for \kMeans are approximation algorithms anyway.
Our coresets generally outperform both the uniform sampling and imputation baselines
on almost every coreset sample size, and the advantage is more significant when the coreset size is relatively small.
Moreover, our coresets have a much lower variance.

\begin{figure}
    \centering
    \captionsetup{font=small}
    \begin{subfigure}[b]{0.4\textwidth}
        \centering
        \includegraphics[width=\textwidth]{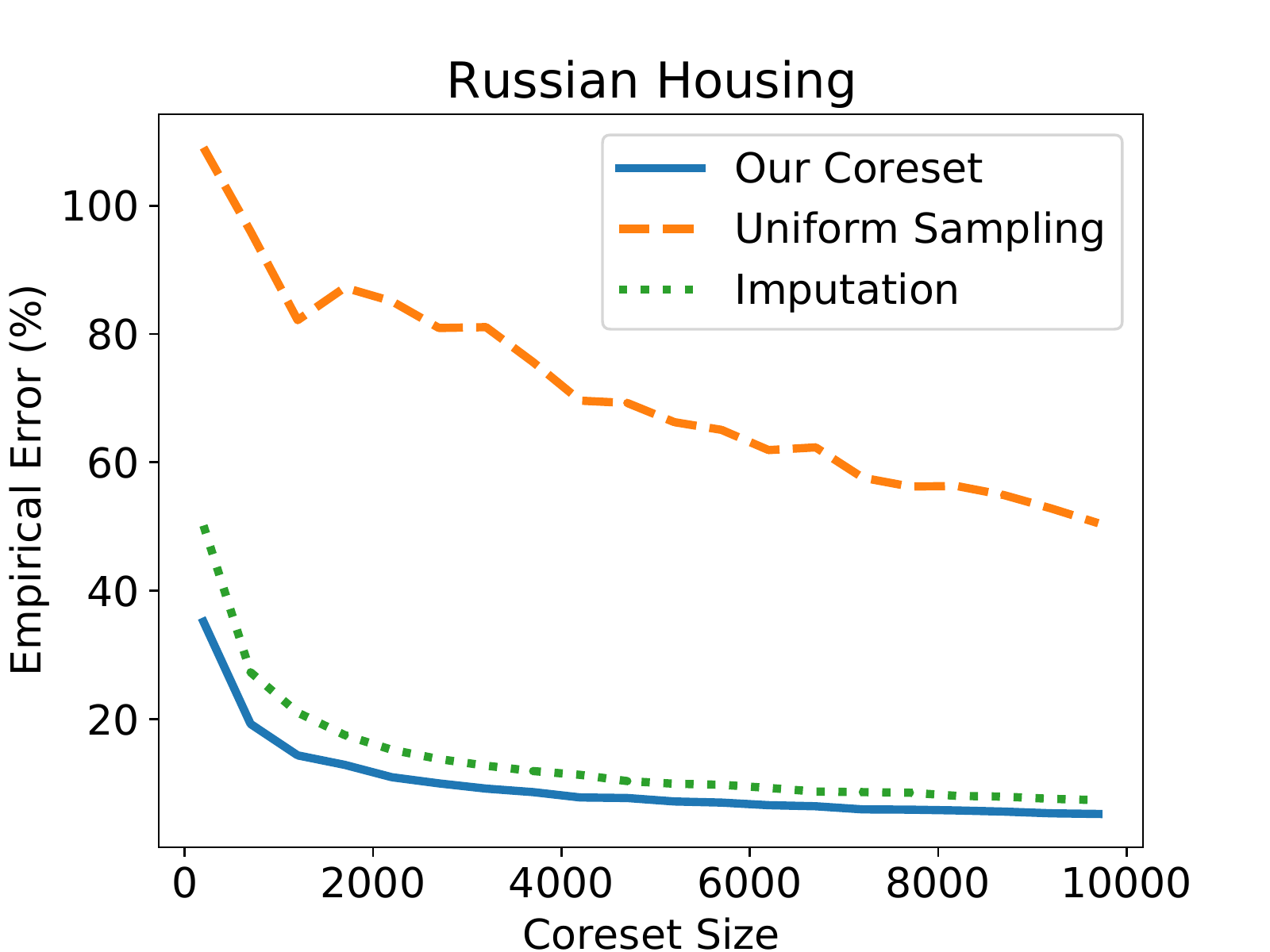}
        \caption{}
        \label{fig:rh_accu}
    \end{subfigure}
    \begin{subfigure}[b]{0.4\textwidth}
        \centering
        \includegraphics[width=\textwidth]{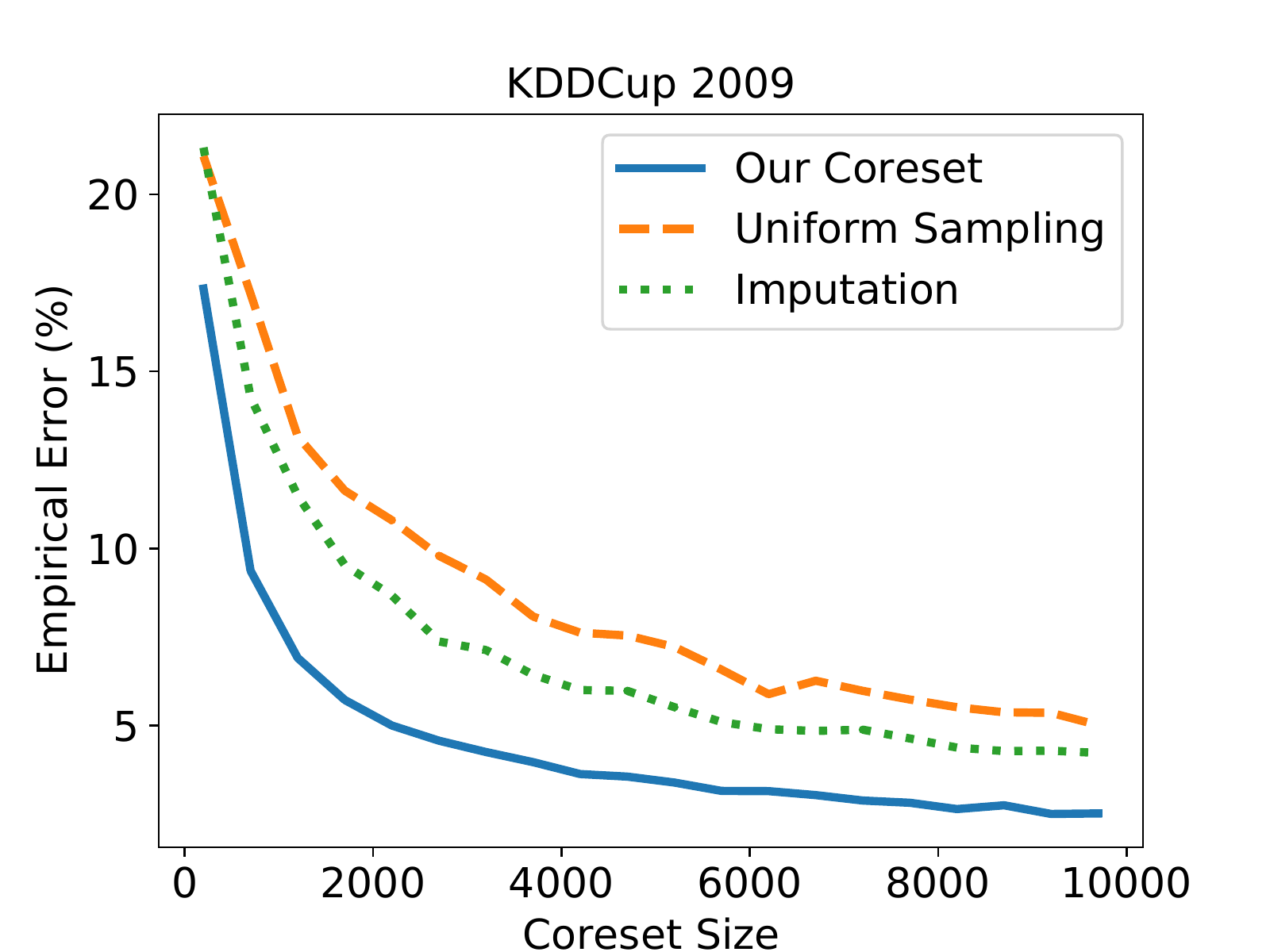}
        \caption{}
    \end{subfigure}
    \begin{subfigure}[b]{0.4\textwidth}
        \centering
        \includegraphics[width=\textwidth]{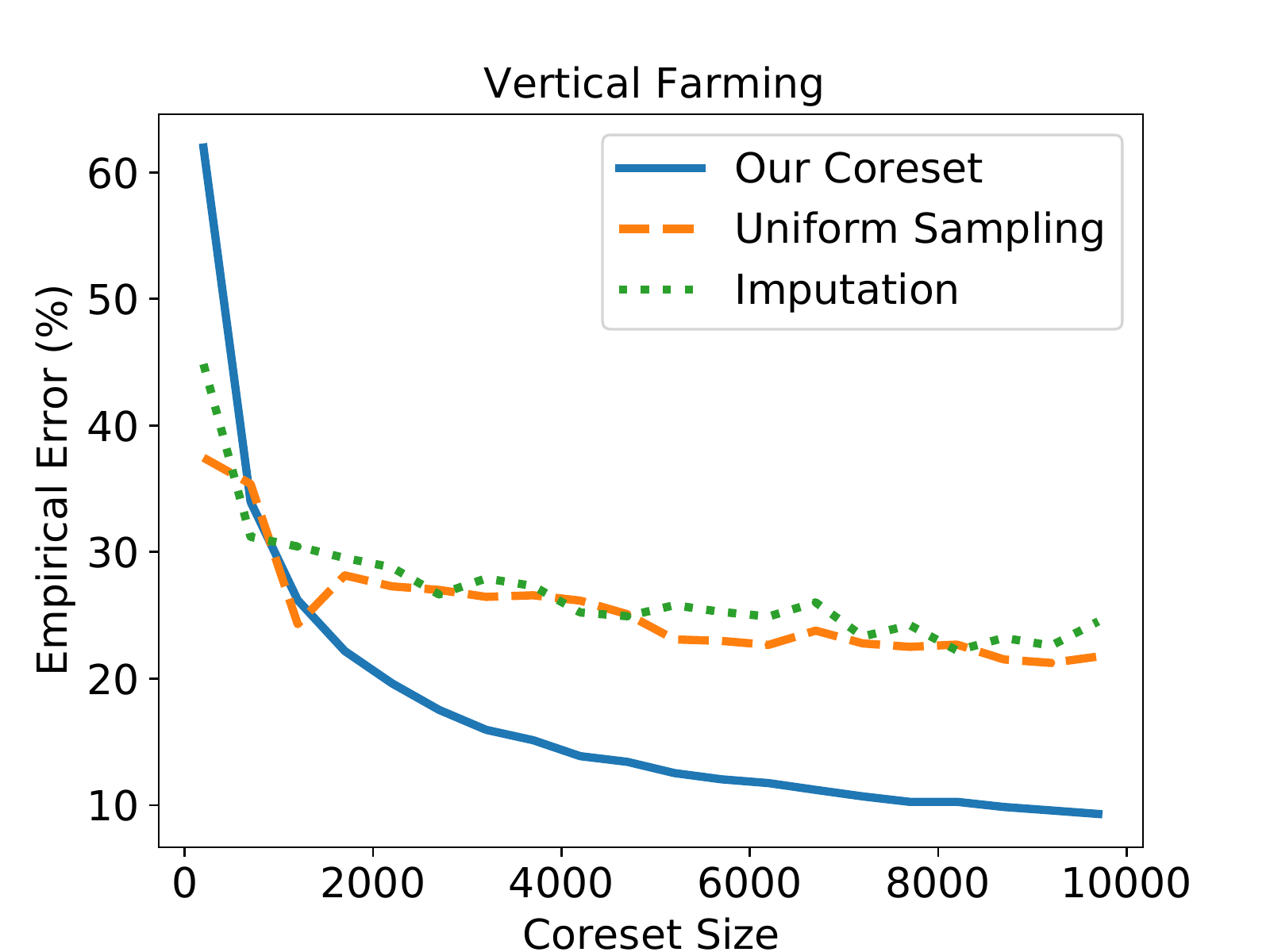}
        \caption{}
    \end{subfigure}
    \begin{subfigure}[b]{0.4\textwidth}
        \centering
        \includegraphics[width=\textwidth]{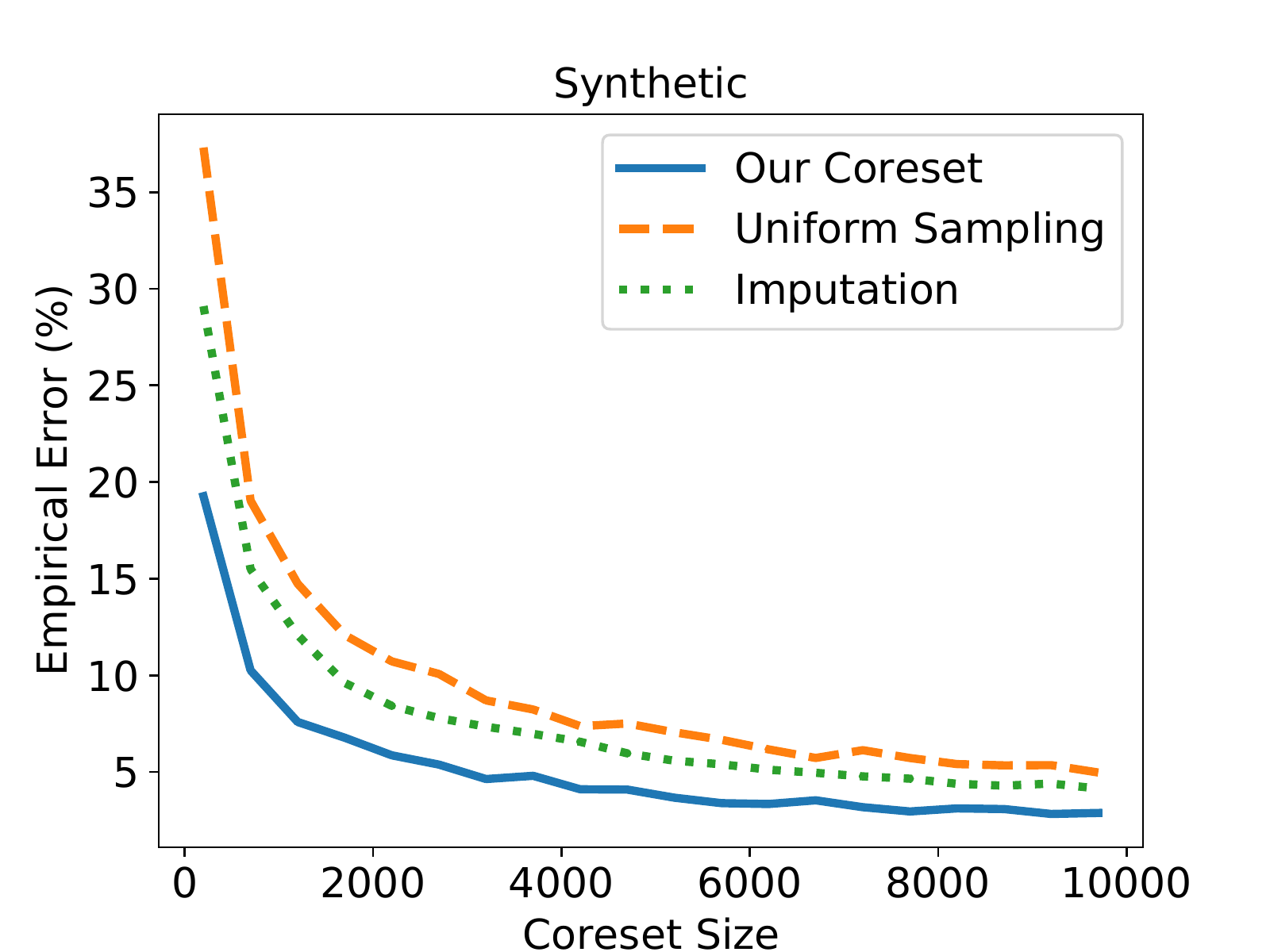}
        \caption{}
    \end{subfigure}
    \caption{Accuracy evaluation for the datasets with respect to varing coreset sizes, compared against uniform sampling and imputation baselines.}
    \label{fig:accu}
\end{figure}

\begin{figure}
    \centering
    \captionsetup{font=small}
    \begin{subfigure}[b]{0.4\textwidth}
        \centering
        \includegraphics[width=\textwidth]{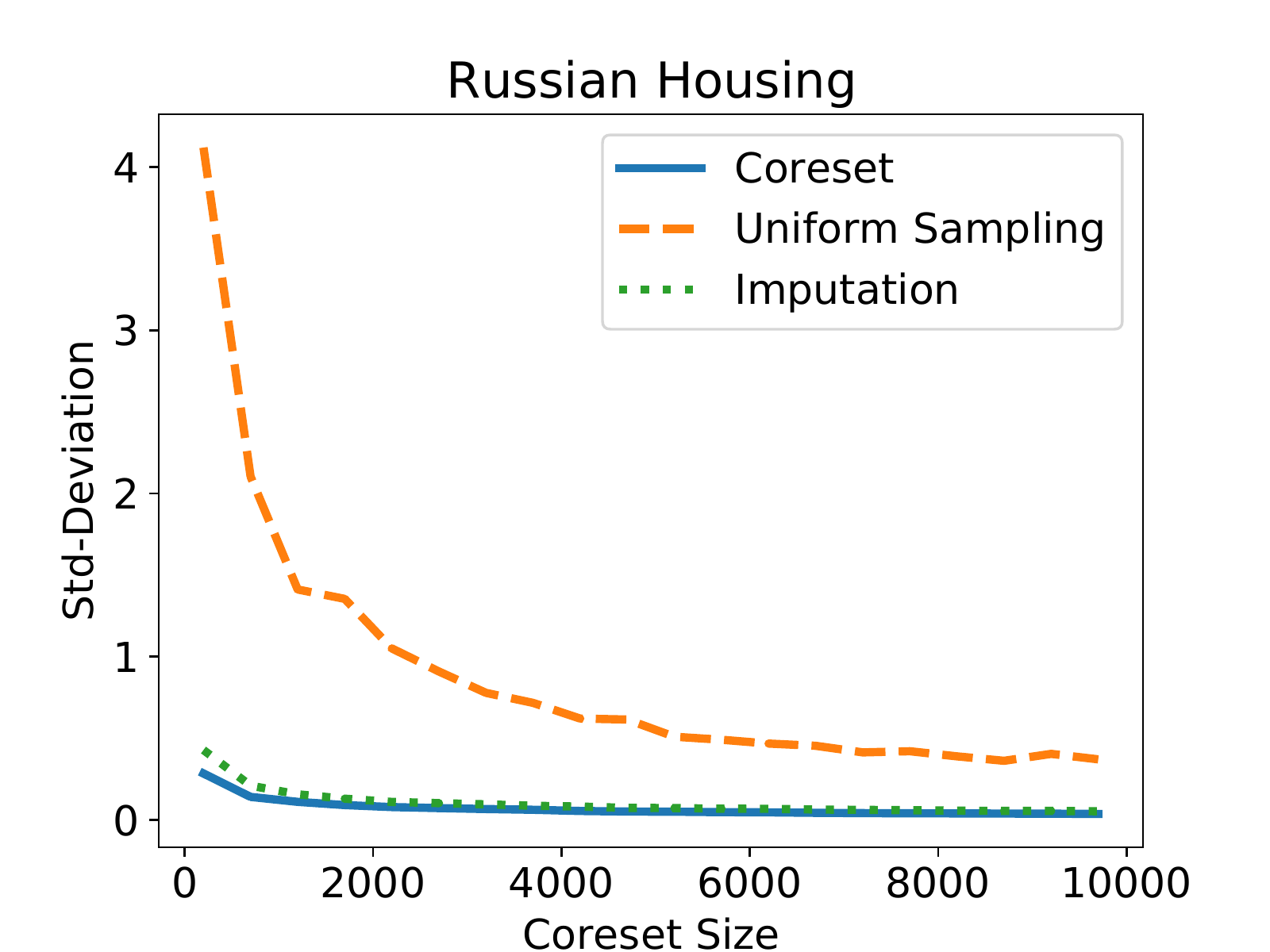}
        \caption{}
    \end{subfigure}
    \begin{subfigure}[b]{0.4\textwidth}
        \centering
        \includegraphics[width=\textwidth]{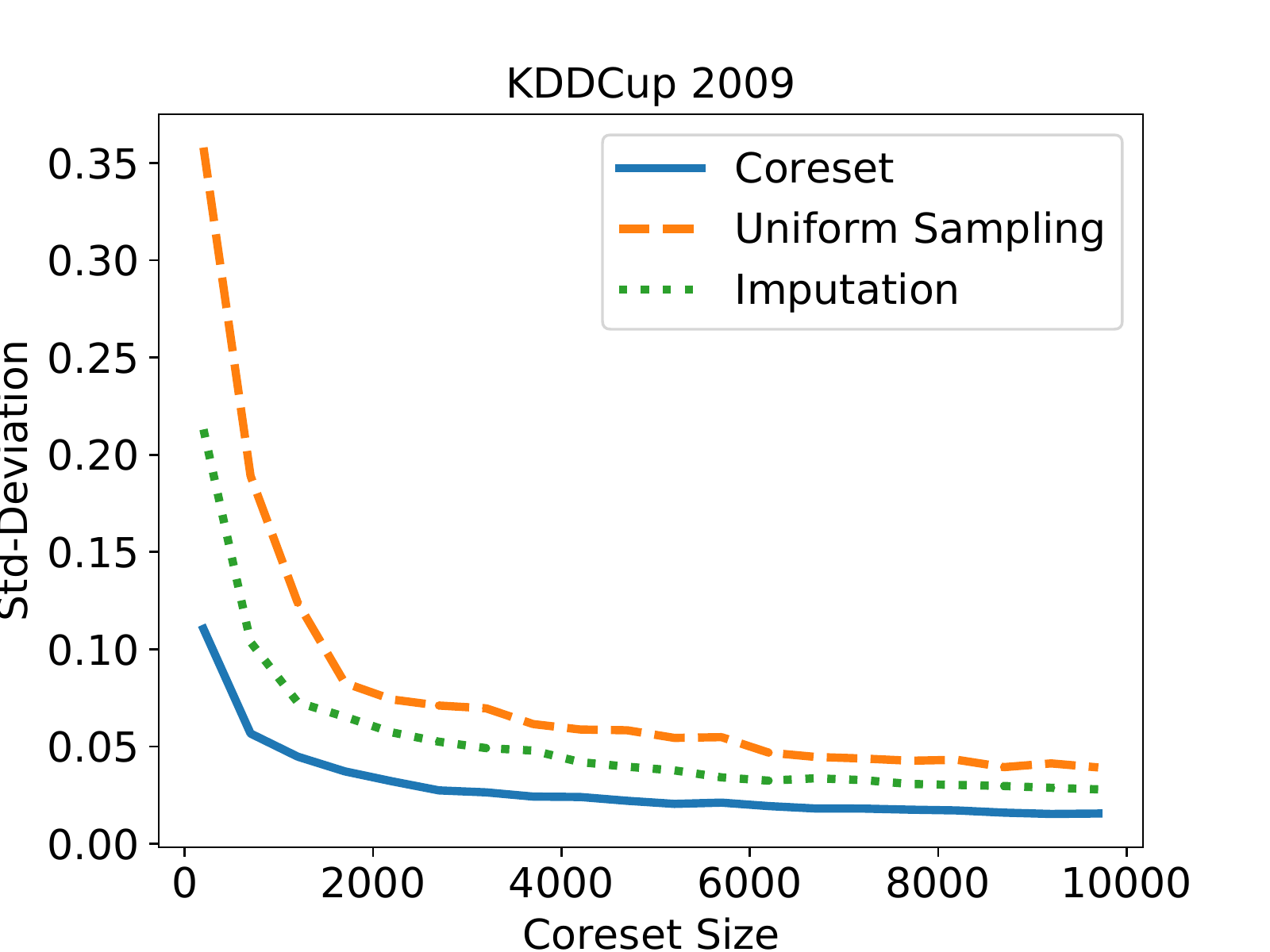}
        \caption{}
    \end{subfigure}
    \begin{subfigure}[b]{0.4\textwidth}
        \centering
        \includegraphics[width=\textwidth]{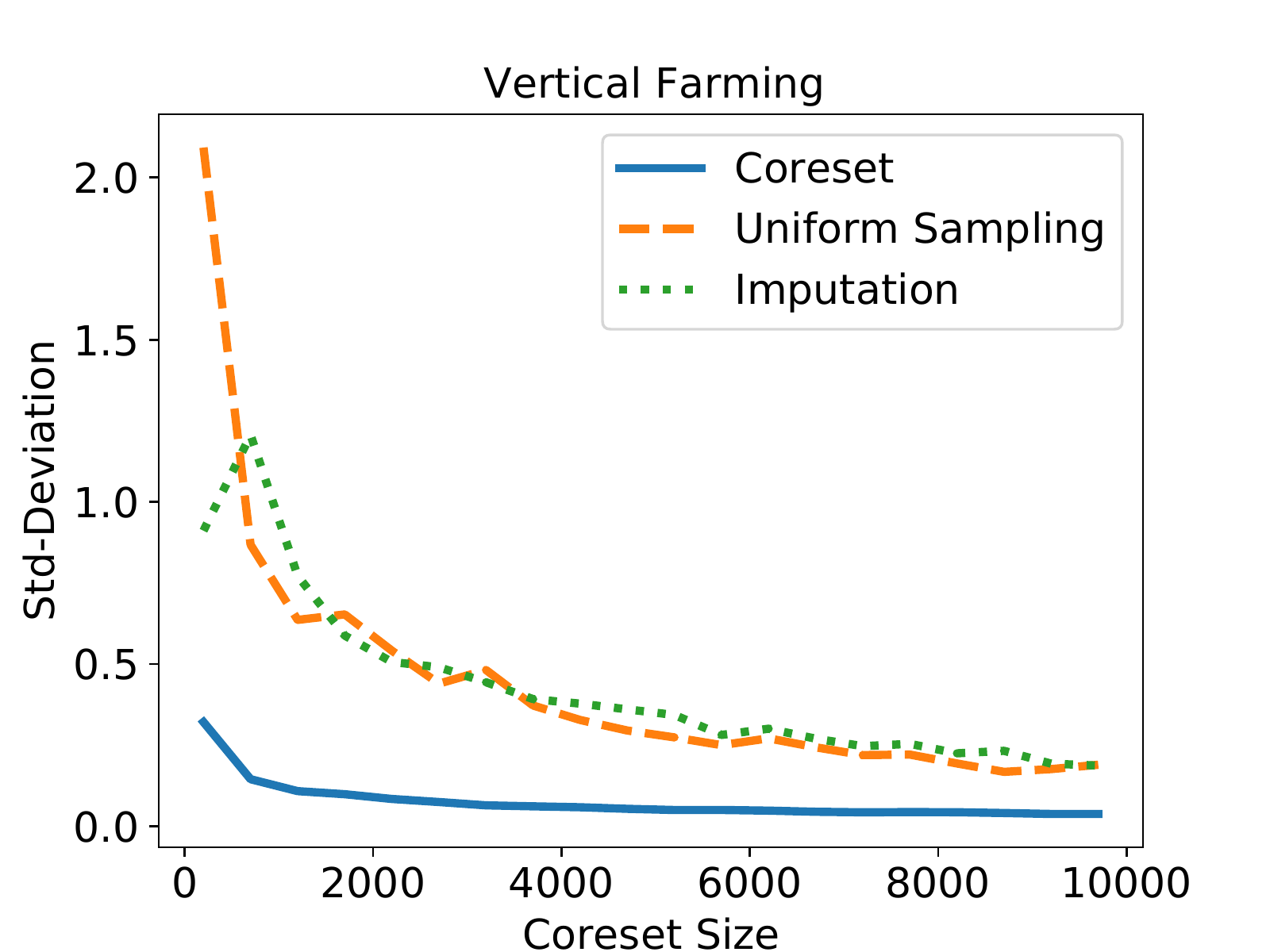}
        \caption{}
    \end{subfigure}
    \begin{subfigure}[b]{0.4\textwidth}
        \centering
        \includegraphics[width=\textwidth]{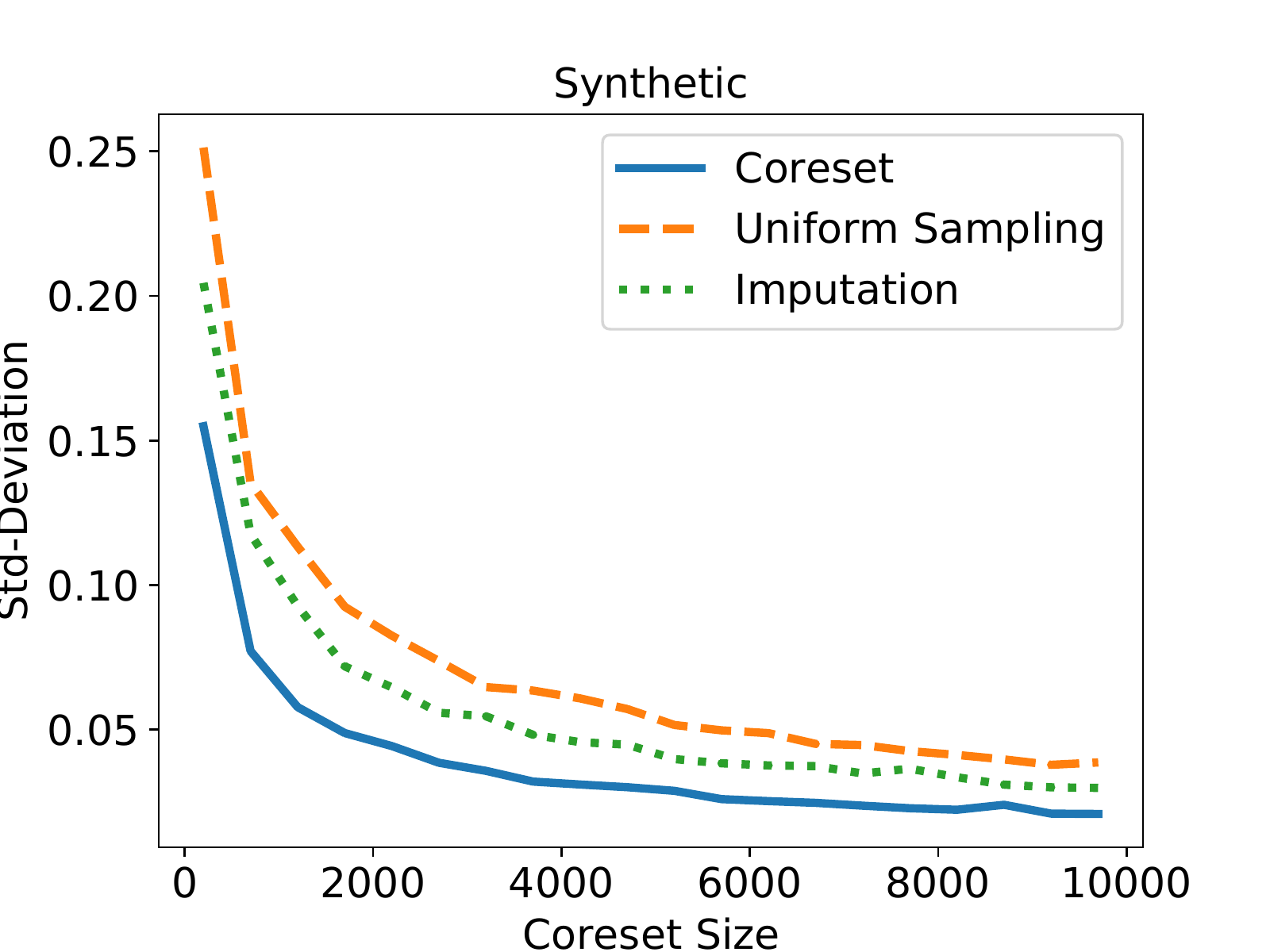}
        \caption{}
    \end{subfigure}
    \caption{Standard deviation for the size-error evaluation.}
    \label{fig:std}
\end{figure}

\subsection{Speedup of Lloyd's-style Heuristic}
Coresets often help to speed up existing approximation algorithms.
Before our work, the only algorithm for \kMeans with provable guarantees for multiple missing values was~\cite{DBLP:conf/soda/EibenFGLPS21}. Unfortunately, \cite{DBLP:conf/soda/EibenFGLPS21} is not practical even when combined with coresets, since it contains several enumeration procedures that require $\Theta(\exp(\poly(\epsilon^{-1}jk)))$ time.
We consider a variant of Lloyd's heuristic~\cite{DBLP:journals/tit/Lloyd82} that is adapted to the missing-value setting,
and we evaluate its speedup with coresets.
The algorithm is essentially the same as the original Lloyd's algorithm,
except that the distance as well as the optimal $1$-mean for a cluster (which can be computed optimally in $O(d |P|)$ for a cluster $P$~\cite{DBLP:conf/soda/EibenFGLPS21}),
is computed differently.
We show that our coreset can significantly accelerate this algorithm.
In particular, we run the modified Lloyd's heuristic
directly on the original dataset,
and take its running time and objective value as the comparison reference.
Then we run this modified Lloyd's heuristic again, but
on top of our coreset and the uniform sampling baseline respectively,
and we compare both the speedup and the relative error\footnote{For $x \in \mathbb{R}_+$, the relative error of $x$
against a reference $x^\star > 0$ is defined as $\frac{|x - x^\star|}{x^\star}$.} against the reference.
The experiments are run on the Russian housing data set where the number of iterations of the modified Lloyd's is set to $T=5000$ and the number of clusters is set to a small value $k=3$ so as the heuristic is likely to find a local minimum faster.
Again, to obtain a stable result, we run the experiments for $40$ times with independent random bits and report the average relative errors and running time.

\paragraph{Results.}
The relative error with respect to varying coreset sizes can be found in Figure~\ref{figs:lloyd_err}.
We can see that the relative error of Lloyd's algorithm running on
our coreset is consistently low,
while the uniform sampling baseline has several times higher error and the error does not seem to improve even when improving the size.
We note that relative errors for both our coreset and uniform sampling
are significantly lower than that we observe from the empirical error in Figure~\ref{fig:rh_accu}.
In fact, they are not necessarily comparable since the empirical error in Figure~\ref{fig:rh_accu} is always evaluated on a same center,
while what we compare in Figure~\ref{figs:lloyd_err} is the center sets found by the modified Lloyd's running
on different data sets.
This also helps to explain why improving the size of uniform sampling may not result in a better solution,
since as shown in Figure~\ref{fig:rh_accu}, uniform sampling has
a large empirical error (around $50\%$), so a good solution for the uniform sample
may not be a good solution for the original data set.

The running time of the modified Lloyd's on top of our coresets can be found in Figure~\ref{figs:lloyd_time},
and the running time of Lloyd's on the original dataset is $22.9$s (which is not drawn on the figure).
To make a fair comparison, we also take the coreset construction time into account.
Note that coreset size is not a dominating factor in the running time of coreset construction,
since the majority of time is spent on computing the importance scores and the coreset size only affects the number of samples.
A coreset of size only $1000$
can achieve $<1\%$ error, and the running time of constructing the coreset and
applying Lloyd's on top of it are $3$s and $0.8$s, respectively,
which offers more than $5$ times of speedup.
We remark that our experiments only demonstrate the speedup in a single-machine scenario,
and the speedup will increase in the parallel or distributed setting.

\begin{figure}[t]
    \centering
    \captionsetup{font=small}
    \begin{subfigure}[b]{0.43\textwidth}
        \centering
        \includegraphics[width=\textwidth]{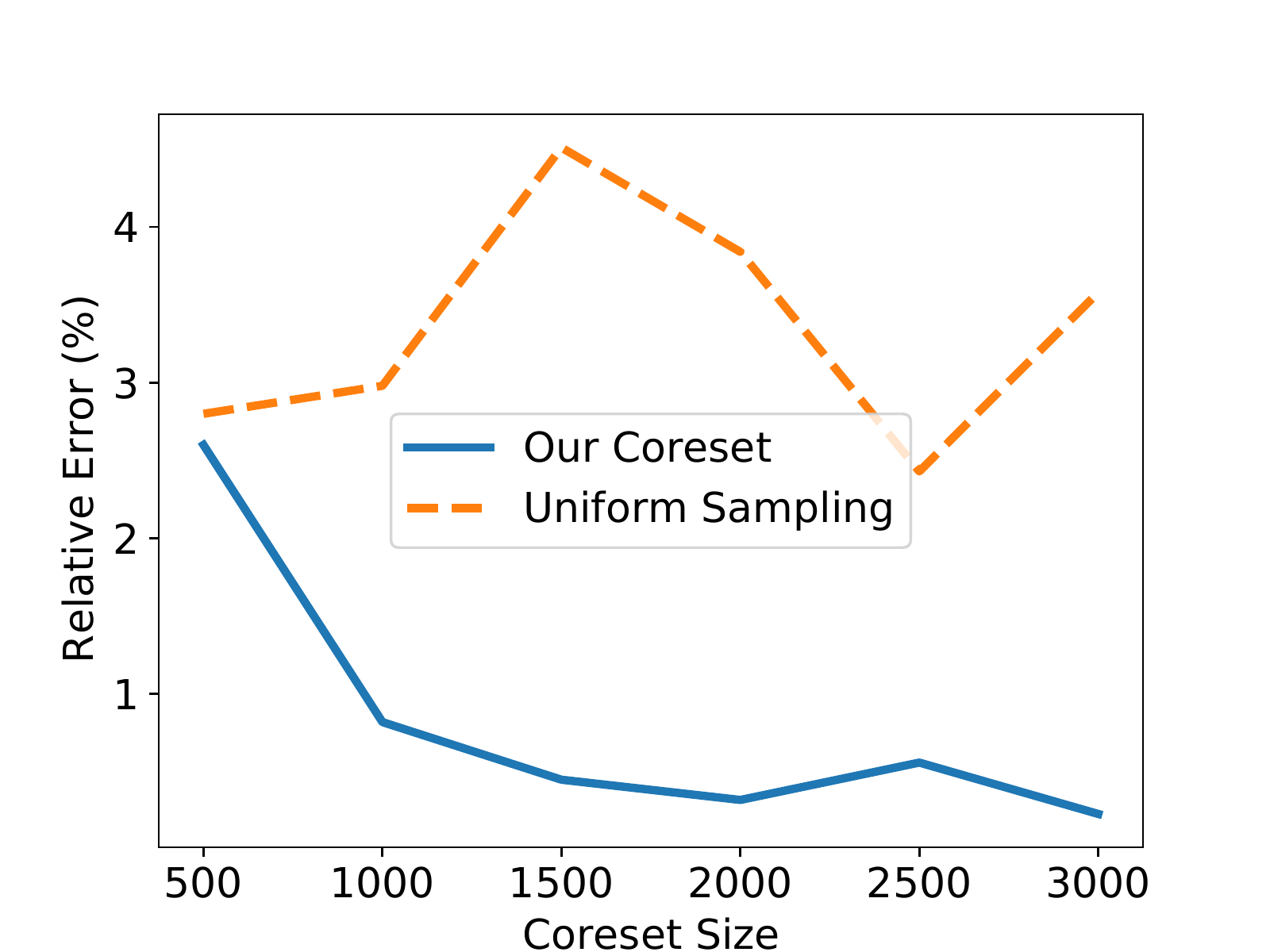}
        \caption{} \label{figs:lloyd_err}
    \end{subfigure}
    \begin{subfigure}[b]{0.43\textwidth}
        \centering
        \includegraphics[width=\textwidth]{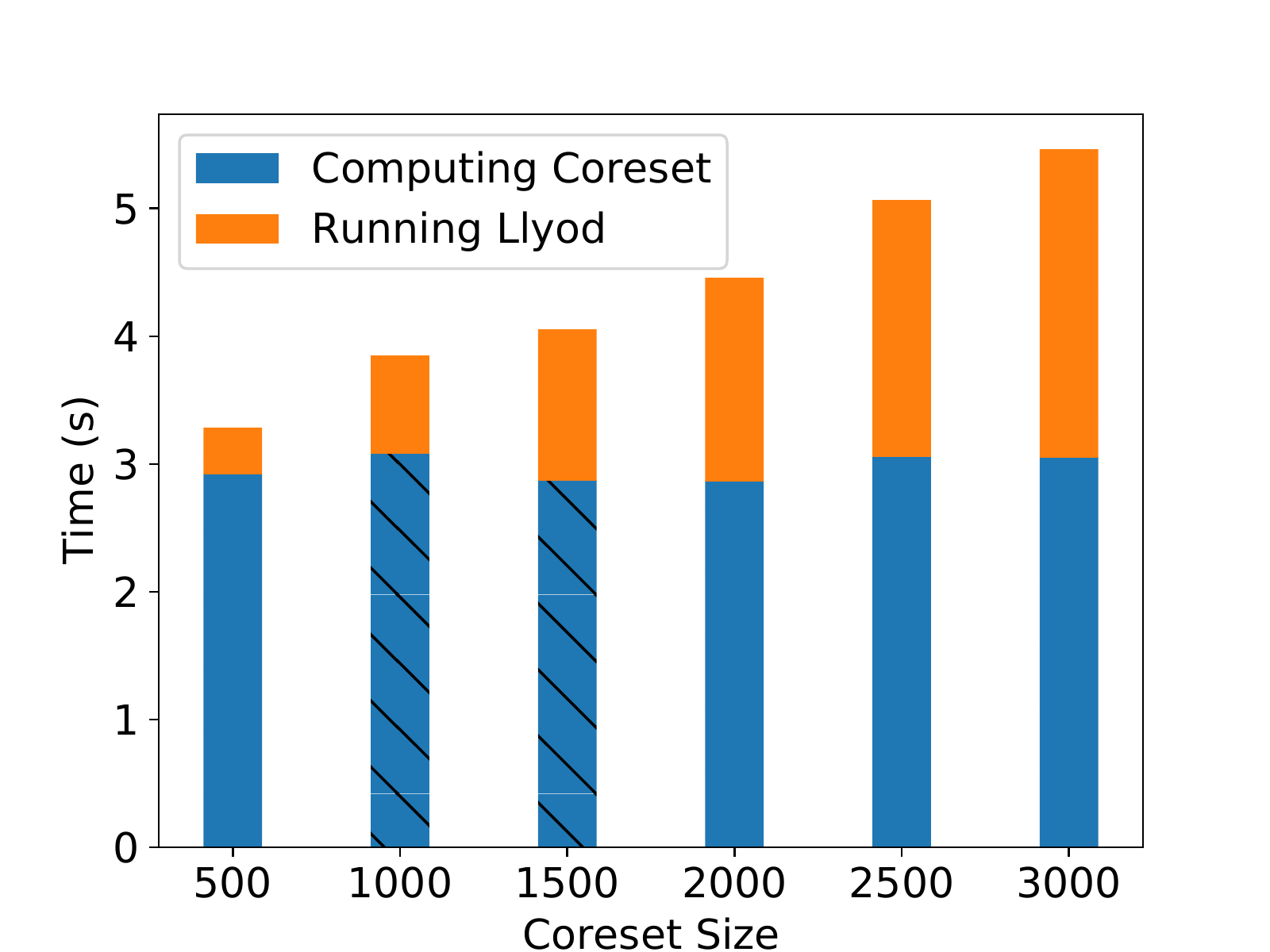}
        \caption{} \label{figs:lloyd_time}
    \end{subfigure}
    \caption{Relative error and running time evaluation for the Lloyd's heuristic on the coreset, with respect to varying coreset sizes.
    The left figure demonstrates the relative error,
    and the right figure shows the running time of constructing our coreset,
    and the time for the modified Lloyd's heuristic running on top of our coreset.}
    \label{fig:speedup}
\end{figure}

 \section{Lower Bounds}
We prove the following lower bound to assert the necessity of the exponential dependence on $\min(j,k)$ in our coreset construction Theorem~\ref{thm:main_full}.

\begin{theorem}[Restatement of Theorem~\ref{thm:lb}]
    \label{thm:lb_restate}
    Consider the \kMeans with missing values problem in $\RR^d_{?}$ where each point can have at most $j$ missing coordinates. Assume there is an algorithm that constructs an $\epsilon$-coreset of size $f(j,k)\cdot \poly(\epsilon^{-1}d\log n)$, then $f(j,k)$ can not be as small as $2^{o(\min(j,k))}$.
\end{theorem}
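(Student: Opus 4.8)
The plan is to establish the lower bound by a reduction from a combinatorial covering problem, essentially showing that any coreset must ``hit'' every possible pattern of missing coordinates in a way that forces many distinct points to be retained. I would construct a hard instance as follows: set $d \approx \min(j,k)$ (so that the $\poly(\epsilon^{-1} d \log n)$ factor contributes only a polynomial overhead and cannot absorb the claimed $2^{\Omega(\min(j,k))}$ gap), and take $X$ to consist of a large but structured family of $j$-points whose non-missing coordinate sets $I_x$ range over all (or many) subsets of $[d]$ of a fixed size. The key observation driving the construction is the same phenomenon underlying Lemma~\ref{lemma:family2coreset_full}: because distances are evaluated only on jointly-present coordinates, a point $x$ with missing set complementary to a center's ``active'' coordinates can be made to contribute an arbitrarily large fraction of $\cost_z(X,C)$ for a suitably chosen center set $C$. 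Hence such a point has sensitivity essentially $1$ and \emph{must} appear in any $\epsilon$-coreset (with appropriate weight), for $\epsilon < 1/2$.

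The core of the argument is then a counting step. I would place the points so that for each subset $S \subseteq [d]$ in a chosen family $\mathcal{F}$, there is a point $x_S$ (or a small cluster of points) whose missing coordinates are exactly $S$, positioned far from the bulk of the data along the coordinates in $[d]\setminus S$. Then for a center set $C$ that is tuned to coincide with the data on $[d]\setminus S$ but is displaced on the coordinates in $S$, the point $x_S$ dominates the cost while essentially all other points have cost $0$ (or negligible cost) on $C$, because either their present-coordinate set is disjoint from the displaced coordinates, or they were placed exactly at $C$. This forces $x_S \in$ coreset. Since $|\mathcal{F}|$ can be taken as large as $\binom{d}{j} = \binom{\Theta(\min(j,k))}{j}$, which is $2^{\Omega(\min(j,k))}$ for $j$ a constant fraction of $d$, the coreset size is at least $2^{\Omega(\min(j,k))}$; and since $d$, $n$, $\epsilon$ are fixed to constants (or at most polynomial in $\min(j,k)$), the factor $\poly(\epsilon^{-1} d \log n)$ is subsumed, yielding $f(j,k) = 2^{\Omega(\min(j,k))}$ as claimed.

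I would organize the write-up as: (1) describe the instance $X$ with $|X| = n$ points indexed by a family of subsets, specifying coordinate values precisely; (2) prove that for each $S \in \mathcal{F}$ there is a witnessing center set $C_S$ under which $x_S$ carries a $(1 - o(1))$-fraction of the total cost, so that omitting $x_S$ (or giving it the wrong weight) violates the $(1\pm\epsilon)$ guarantee; (3) conclude that all $|\mathcal{F}|$ points $x_S$ lie in the coreset; (4) choose parameters ($d = c\cdot\min(j,k)$, $j = d/2$ when $j \le k$, or the symmetric choice when $k < j$) to make $|\mathcal{F}| = 2^{\Omega(\min(j,k))}$ while keeping the polynomial factor irrelevant. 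For the regime $k < j$ I would instead exploit the $k$ centers directly: use $d = \Theta(k)$ and a family of $k$-subsets of displaced coordinates, so that distinct ``bad directions'' require distinct coreset points, again giving $2^{\Omega(k)}$.

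The main obstacle I anticipate is the interference between points when evaluating a single center set $C_S$: I need the other points to contribute \emph{almost nothing} to $\cost_z(X, C_S)$ so that $x_S$'s contribution is the dominant term and its removal is detectable at scale $\epsilon$, yet simultaneously the same point set must make \emph{every} $x_{S'}$ ($S' \ne S$) indispensable via its own center set $C_{S'}$. Balancing these — typically by placing the bulk of the data exactly at a common point and using the present/missing coordinate structure to ``shield'' most points from each $C_S$ — is the delicate part; it requires that $C_S$ agree with the bulk point on all coordinates a generic point can see, which is exactly what the $(j,k,d)$-family combinatorics (now used as a lower-bound obstruction rather than an upper-bound tool) guarantees is impossible to do for all $S$ simultaneously with few points. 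Getting the magnitudes right so that $z$-th powers don't spoil the ``dominant term'' estimate is routine once the geometry is set up.
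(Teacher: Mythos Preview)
Your high-level strategy matches the paper's: build a family of $j$-points indexed by $j$-subsets of $[d]$ (with $d=2j$ and $j=k=\Theta(\log n)$, so $n=\binom{2j}{j}$), and for each one exhibit a witnessing center set under which it alone carries the cost, forcing it into any $\tfrac12$-coreset. However, your concrete construction has a gap. Placing each $x_S$ ``far from the bulk on $[d]\setminus S$'' and then taking $C_S$ to coincide with the bulk on $[d]\setminus S$ does make $x_S$ far from $C_S$, but it does nothing to suppress the other points: any $x_{S'}$ is also far from the bulk on $[d]\setminus S'$, hence far from $C_S$ on the (typically nonempty) overlap $[d]\setminus(S\cup S')$. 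So the interference you flag as the ``main obstacle'' is real and is not resolved by your described setup; the shielding paragraph gestures at a fix but does not supply one, and invoking the $(j,k,d)$-family combinatorics here is a non sequitur (that structure concerns covering coordinate subsets, not forcing coreset membership).

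The paper's construction sidesteps the interference entirely by abandoning the ``far from bulk'' idea. All data points sit at the \emph{same} location (value $1$ on every non-missing coordinate), so there is no positional interference at all. The separation is achieved purely through the center set: for a target point $p(I)$ with non-missing set $I$, $|I|=j=k$, take $k$ centers $\{c^i : i\in I\}$ where $c^i$ equals $1$ everywhere except $c^i_i=0$. Then $p(I)$ disagrees with every $c^i$ on coordinate $i\in I$, giving $\dist(p(I),C)=1$; but any other $p(I')$ has some $i'\in I\setminus I'$, and the single center $c^{i'}$ agrees with $p(I')$ on all of $I'$, so $\dist(p(I'),C)=0$ exactly. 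The key trick you are missing is to spread the displacement across the $k$ centers, one coordinate each, rather than displacing on a block of coordinates; this is what makes the other points contribute \emph{exactly} zero, with no magnitudes to balance and no $z$-th-power issues.
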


\begin{proof}
Consider the following $n$ points instance with $j=k=\Theta(\log n)$, and $d=2j$. For a subset $I$ of $[d]$, we define a data point $p(I)$ such that $p(I)_i=1$ if $i\in I$ and $p(I)_i=?$ otherwise.
Then we let the data set $P=\{p(I)|I\subseteq [d],|I|=j\}$. We remark that we can make $|P|=\binom{d}{j}=n$ by choosing a proper $j=\Theta(\log n)$.

We prove that any $1/2$-coreset of $P$ should contain every point in $P$. Let $D$ be such a coreset and assume $p(I)\not\in D$, we choose the following $k=j$ centers. For every $i\in I$, we define a center $c^i\in \mathbb{R}^d$ such that the $i$-th coordinate of $c^i$ is $0$ and the other coordinates of $c^i$ are $1$. We observe that, for any $i\in I$, $\mathrm{dist}(p(I),c^i)=1$. Meanwhile for any other $p(I')\not=p(I)$, there must be a $i'\in I\setminus I'$ since $|I|=|I'|$, thus $\mathrm{dist}(p(I'),c^{i'})=0$. This should imply that the cost on coreset is $0$ while the cost on $P$ is $1$ which makes a contradiction.

Since $j=k=\Theta(\log n)$, $d=2j$, we have $2^{o(\min(j,k))}\cdot \poly(d\log n)=o(n)$. Thus $f(j,k)$ can not be as small as $2^{o(\min(j,k))}$.
\end{proof}  \section{Conclusion}
Our coreset construction builds upon the sensitivity-sampling method (cf.~\cite{DBLP:conf/stoc/FeldmanL11}).
However, a central technical challenge is that the standard method to compute the sensitivity scores breaks,
because distances between points with missing values do not satisfy the triangle inequality.
We overcome this using another known method, of~\cite{varadarajan2012near}, that requires a coreset for \kCenter.
Our main innovation is a near-linear time algorithm that computes an
$O(1)$-approximate \kCenter coreset for points with missing values.
To this end, we need the following key steps, which constitute our main technical contribution. 
\begin{itemize}
    \item We reduce the \kCenter coreset construction with missing values,
    to the construction of traditional \kCenter coresets (i.e., without missing values) on a series of instances.
    These instances are built by restricting data points with missing values
    to a carefully-chosen collection of subspaces.
    The guarantee needed from this collection is a certain combinatorial structure, and we indeed prove it exists. 
    \item The method of Varadarajan and Xiao executes
    the \kCenter coreset algorithm many times, and overall takes quadratic time.
    To improve the running time, we design an efficient dynamic algorithm
    for the well-known Gonzales’ algorithm
    (which computes an $O(1)$-approximate \kCenter coreset).
    The main idea in this dynamic algorithm is to project the data points onto (data-oblivious) random 1D lines, and build on each line a dynamic data structure that supports furthest-neighbor queries (in 1D).
\end{itemize}
Finally, we implemented our algorithm and the experiments indicate that our algorithm is efficient and accurate enough to be potentially applicable in practice.

\paragraph{Future directions.}
As an immediate follow-up, one could try to improve our coreset size,
e.g., removing the dependence in $\log n$.
Our input can be viewed as axis-parallel affine-subspaces.
Hence, another an interesting direction is to obtain coresets for the more general setting where the input consists of general affine-subspaces.

\paragraph{Potential negative societal impacts.}
Our paper focuses on computational issues (improving time and space) of known clustering tasks.
Clustering methods in general have potential issues 
with fairness and privacy, which applies also to our work,
but our research is not expected to introduce new negative societal impact
beyond what is already known.  
\begin{ack}
  The majority of this work was done when Shaofeng Jiang was at Aalto University.
  This work is partially supported by ONR Award N00014-18-1-2364,
    by the Israel Science Foundation grant \#1086/18,
    by a Weizmann-UK Making Connections Grant,
    and by a Minerva Foundation grant.
\end{ack}

\bibliographystyle{alphaurl}
\bibliography{refs}

\newpage

\end{document}